\newtheorem{theorem}{Theorem}
\newtheorem{proposition}[theorem]{Proposition}
\newtheorem{corollary}[theorem]{Corollary}
\newtheorem{remark}[theorem]{Remark}
\newtheorem{lemma}[theorem]{Lemma}
\newcommand{\be}{\begin{equation}}
\newcommand{\ee}{\end{equation}}
\newcommand{\bea}{\begin{eqnarray}}
\newcommand{\eea}{\end{eqnarray}}
\newcommand{\ba}{\begin{array}}
	\newcommand{\ea}{\end{array}}
\newcommand{\bean}{\begin{eqnarray*}}
	\newcommand{\eean}{\end{eqnarray*}}
\newcommand{\La}{\Lambda}
\newcommand{\la}{\lambda}
\newcommand{\De}{\Delta}
\newcommand{\p}{\partial}
\newcommand{\E}{\mathcal{E}}
\begin{document}

\title{The two--component discrete KP hierarchy}
\author{Wenqi Cao$^1$, Jipeng Cheng $^{1,2*}$, Jinbiao Wang$^1$}
\dedicatory { $^{1}$School of Mathematics, China University of
Mining and Technology, Xuzhou, Jiangsu 221116,  China\\
$^{2}$ Jiangsu Center for Applied Mathematics (CUMT), \ Xuzhou, Jiangsu 221116, China}
\thanks{*Corresponding author. Email: chengjp@cumt.edu.cn, chengjipeng1983@163.com.}
\begin{abstract}
The discrete KP hierarchy is also known as the $(l-l')$--th modified KP hierarchy. Here in this paper, we consider the corresponding two--component generalization, called the two--component discrete KP (2dKP) hierarchy. Firstly, starting from the bilinear equation of the 2dKP hierarchy, we derive the corresponding Lax equation by the Shiota method, this is using scalar Lax operators involving two difference operators $\La_1$ and $\La_2$. Then starting from the 2dKP Lax equation, we obtain the corresponding bilinear equation, including the existence of the tau function. From above discussions, we can determine which are essential in the 2dKP Lax formulation. Finally, we discuss the reduction of the 2dKP hierarchy corresponding to the loop algebra $\widehat{sl}_{M+N}=sl_{M+N}[\la,\la^{-1}]\oplus\mathbb{C}c \ (M,N\geq1)$.  \\
\textbf{Keywords}: two--component generalization; the discrete KP hierarchy; tau function; Lax equation; Shiota method. \\
\textbf{MSC 2020}: 37K10, 35Q51, 17B80\\
\textbf{PACS}: 02.30.Ik

\end{abstract}

\maketitle

\section{Introduction}
Since the 1980s, the KP theory \cite{Date1983,Dickey2003,Mulase1994,Harnad2021,Willox2004} has become an important research topic in mathematical physics and integrable systems, and made it possible to understand solitons and integrable
systems through a unified approach. Among the KP theory, here we are more interested in the discrete KP hierarchy \cite{Dickey1999LMP, Adler1999-1},
\begin{align}\label{zl*l' tau_l*tau_l'}
{\rm Res}_zz^{l-l'}\tau_l(x-[z^{-1}])\tau_{l'}(x'+[z^{-1}]e^{\xi(x-x',z)}=0,\quad\ l\geq l',
\end{align}
where
${\rm Res}_z\sum_{i} a_iz^i=a_{-1},\ [z^{-1}]=(z^{-1},z^{-2}/2, z^{-3}/3,\cdots),\ \xi(x,z)=\sum_{i\geq1}x_iz^i$ and $x=(x_1,x_2,\cdots)$.
The discrete KP hierarchy is also known as the $(l-l')$--th mKP hierarchy \cite{Jimbo1983,Kac2018JJP}. Note that the $0$--th mKP hierarchy is the usual KP hierarchy\cite{Date1983,Dickey2003,Willox2004,Mulase1994}, and the $1$--st mKP hierarchy is the Kupershmidt--Kiso mKP hierarchy \cite{Chengjipeng2018JGP,Konopelchenko1993,Kupershmidt1985, Kiso1990}.
The discrete KP hierarchy is of great importance. For example, the discrete KP hierarchy can be used to describe the Darboux orbits of the KP hierarchy \cite{Willox2004,Yangyi2022}. For the discrete KP hierarchy \eqref{zl*l' tau_l*tau_l'}, there are usually two different Lax formulations as follows.
\begin{itemize}
  \item The first one is expressed by the pseudo-differential operator $L_0=\partial+\sum_{i=0}^{\infty}a_i\partial^{-1}$ and a sequence of functions $\{v_i(x)\}_{i\in\mathbb{Z}}$ satisfying the following equations \cite{Dickey1999LMP,Dickey2003,Harnad2021}
  \begin{align}
\partial_{x_k}L_i=[(L_i^k)_{\geq0},L_i],\quad
\partial_{x_k}v_i=(L_{i+1}^k)_{\geq0}(\partial+v_{i})-(\partial+v_{i})(L_{i}^k)_{\geq0} \label{particalvi},
\end{align}
where $\partial=\partial_{x_1}$ and $L_{i}$ is defined by $L_{i+1}(\partial+v_i)=(\partial+v_i)L_i.$
\item The second one is characterized by \cite{Adler1999-1,Dickey1999LMP} 
\begin{align}\label{particalL=LkLL=La}
\partial_{x_k}L=[(L^k)_{\geq0},L],\quad L=\Lambda+\sum_{i=0}^{\infty}a_i\Lambda^{-i},
\end{align}
 where $\Lambda$\ is the shift operator acting on $f(n)$\ by
$\Lambda(f(n))=f(n+1)$ and $a_i=a_i(n,x)$. 

\end{itemize}

\noindent Here we believe the second Lax formulation \eqref{particalL=LkLL=La} of the discrete KP hierarchy is more convenient, since the shift operator $\La$ can more easily link the different discrete variables.

Next in this paper, we will discuss the two--component generalization of the discrete KP hierarchy. Usually, the multi--component extension of the KP theory can be obtained through the multi--component boson--fermion correspondence\cite{Jimbo1983,Kac2023,Kac2003,Takebe2024}. Firstly, the fermionic form of the discrete KP hierarchy (\ref{zl*l' tau_l*tau_l'}) is given by \cite{Jimbo1983,Kac2018JJP}
\begin{align}\label{s taul taul'}
S(\tau_l\otimes\tau_{l'})=0,\quad\tau_l\in\mathcal F_l,\quad\ l\geq l',
\end{align}
where $S=\sum_{i\in\mathbb{Z}+1/2}\psi_i^+\otimes\psi_i^-$ \ with the charged free fermions $\psi_i^{\pm}$ $(i\in\mathbb{Z}+1/2)$ satisfying \cite{Jimbo1983,Kac2023,Kac2003}
$$\psi_i^{\lambda}\psi_j^{\mu}+\psi_j^{\mu}\psi_i^{\lambda}=\delta_{\lambda+\mu,0}\delta_{i+j,0},\quad  \lambda,\mu=\pm.$$ $\mathcal F_l$ is the subspace of the Fock space $\mathcal F$ with the charge $l$. And the Fock space $\mathcal F$ is the vector space spanned by $\psi_{i_1}^+\psi_{i_2}^+\cdots\psi_{i_r}^+\psi_{j_1}^-\psi_{j_2}^-\cdots\psi_{j_s}^-|0\rangle,$ for $i_1<\cdots<i_{r}<0$ and $j_1<\cdots<j_{s}<0$, where the vacuum $|0\rangle$ is defined by $\psi_{i}^{\pm}|0\rangle=0\ (i>0)$. If set the $\text{charge of}\ \psi_k^{\pm}=\pm1$, then the charge of $\psi_{i_1}^+\psi_{i_2}^+\cdots\psi_{i_r}^+\psi_{j_1}^-\psi_{j_2}^-\cdots\psi_{j_s}^-|0\rangle $ is $r-s$. In particular, we have $\mathcal F=\bigoplus_{l\in\mathbb Z}\mathcal F_l.$ Then the two--component boson--fermion correspondence \cite{Jimbo1983,Kac2023,Kac2003} is given by $$\sigma: \mathcal{F}\cong
\mathbb{C}[Q_1,Q_2,t=(t^{(1)},t^{(2)})|t^{(a)}=(t_1^{(a)},t_2^{(a)},\cdots)],$$ where $Q_a$ commutes with $t_j^{(b)}$ and $Q_1Q_2=-Q_2Q_1$. And $\sigma$ is uniquely defined by $\sigma(|0\rangle)=1$ and
\begin{align*}
\sigma\psi^{\pm(a)}(z)\sigma^{-1}=Q_a^{\pm1}z^{\pm Q_a\partial_{Q_a}}
e^{\pm\xi(t^{(a)},z)}e^{\mp\xi(\widetilde{\partial}_{t^{(a)}},z^{-1})},\quad a=1,2,
\end{align*}
where $\widetilde{\partial}_{t^{(a)}}=(\partial_{t_1^{(a)}},\partial_{t_2^{(a)}}/2,\partial_{t_3^{(a)}}/3,\cdots)$, $\psi^{\pm(a)}(z)=\sum_{j \in \mathbb{Z}+1/2}\psi_{j}^{\pm(a)}z^{-j-1/2}$ and $\psi_{j}^{\pm(a)}$ can be defined by \cite{Kac2023,Kac2003}
\begin{align*}
&\psi_{Mi+p+1/2}^{+(1)}=\psi^+_{(M+N)i+p+1/2}, \ \quad \psi_{Mi-p-1/2}^{-(1)}=\psi^-_{(M+N)i-p-1/2}, \quad 0\leq p\leq M-1,\\
&\psi_{Ni+q+1/2}^{+(2)}=\psi^+_{(M+N)i+M+q+1/2},\quad \psi_{Ni-q-1/2}^{-(2)}=\psi^-_{(M+N)i-M-q-1/2}, \quad 0\leq q\leq N-1.
\end{align*}
For $\tau_l\in\mathcal F_l$ satisfying \eqref{s taul taul'}, if define
\begin{align*}
\sigma(\tau_l)=\sum_{m\in\mathbb{Z}}(-1)^{\frac{m(m-1)}{2}}
Q_1^{m}Q_2^{l-m}\tau_{m,m-l}(t^{(1)},t^{(2)}), 
\end{align*}
then the fermionic discrete KP hierarchy \eqref{s taul taul'} can be converted to the following form
\begin{align}\label{bartau z(m-m')}
&\oint_{C_R}\frac{dz}{2\pi i}z^{m_1-m_1'}e^{\xi(t^{(1)}-t^{(1)'},z)}\tau_{m_1,m_2}(t-[z^{-1}]_1)\tau_{m_1',m_2'}(t+[z^{-1}]_1)\nonumber\\
=&\oint_{C_r}\frac{dz}{2\pi i}z^{m_2-m_2'}e^{\xi(t^{(2)}-t^{(2)'},z^{-1})}\tau_{m_1+1,m_2+1}(t-[z]_2)\tau_{m_1'-1,m_2'-1}(t+[z]_2),\quad m_1-m_2\geq m_1'-m_2',
\end{align}
where $[z^{-1}]_1=([z^{-1}],0)$, $[z]_2=(0,[z])$, $C_R$ means the anticlockwise circle $|z|=R$ for sufficient large R, and $C_r$ is the anticlockwise circle $|z|=r$ with $r$  sufficient small. This equation \eqref{bartau z(m-m')} is called the two--component discrete KP hierarchy (2dKP for short), also known as the two--component mKP hierarchy \cite{Jimbo1983,Teo2011,Takebe2024}, which is a special case of the generalized two--component mKP hierarchy (see (4.4) in \cite{Jimbo1983}) and the $3$--KP hierarchy in \cite{Cui2025}.

As far as we can know, there is no unified approach to derive Lax equations from bilinear equations, which is still an open problem \cite{Kac1989,Kac1996}. For the 2dKP hierarchy, the Lax equation can be derived by matrix pseudo--differential operators (see (4.9) in \cite{Teo2011}), just like the first Lax formulation \eqref{particalvi} for the discrete KP hierarchy. Here, we would like to construct the Lax equation of the 2dKP hierarchy, similar to the second formulation \eqref{particalL=LkLL=La} of the discrete KP hierarchy by using Shiota method \cite{Shiota1989}, where we use the scalar Lax operators involving two difference operators $\Lambda_1$ and $\Lambda_2$ satisfying $\Lambda_a(f(\bm{m}))=f(\bm{m}+\bm{e}_a)$ with $\bm{e}_1=(1,0)$ and $\bm{e}_2=(0,1)$. Specifically, if introduce Lax operators
\begin{align*}
L_1(\bm{m},\Lambda_1)=\La_1+\sum_{i=0}^{+\infty}u_i^{(1)}(\bm{m})\Lambda_1^{-i},\quad L_2(\bm{m},\Lambda_2)=u_{-1}^{(2)}(\bm{m})\La_2^{-1}+\sum_{i=0}^{+\infty}u_i^{(2)}(\bm{m})\La_2^i,
\end{align*}
and a special operator $H=\Delta_2\Lambda_1+\rho$ with $\Delta_2=\Lambda_2-1$, then the Lax equation of the 2dKP hierarchy \eqref{bartau z(m-m')} is given by
\begin{align*}
&\partial_{t_k^{(a)}}L_a(\bm{m},\Lambda_a)=[B_k^{(a)}(\bm{m},\Lambda_a),L_a(\bm{m},\Lambda_a)],\quad \partial_{t_k^{(a)}}L_{3-a}(\bm{m},\Lambda_{3-a})=[\pi_{3-a}(B_k^{(a)}(\bm{m},\Lambda_{a})),L_{3-a}(\bm{m},\Lambda_{3-a})], \\
&H(\bm{m})L_1(\bm{m},\Lambda_a)=L_1(\bm{m}+\bm{e},\Lambda_1)H(\bm{m}),\quad H(\bm{m})L_2(\bm{m},\Lambda_2)=\Delta_2^{*}L_2(\bm{m}+\bm{e})\Delta_2^{*-1}H(\bm{m}),\\
&\p_{t_k^{(a)}}H(\bm{m})=C_k^{(a)}(\bm{m},\Lambda_a)H(\bm{m})-H(\bm{m})\cdot B_k^{(a)}(\bm{m},\Lambda_a),\quad a=1,2,
\end{align*}
where $\bm{e}=(1,1)$, $\Delta_a^*=\Lambda_a^{-1}-1$, $\Delta_a^{*-1}=\sum_{j=1}^{+\infty}\Lambda_a^{j}$, and 
\begin{align*}
&B_k^{(1)}(\bm{m},\Lambda_1)=\Big(L_1^k(\bm{m},\Lambda_1)\Big)_{1,\geq 0}, \quad B_k^{(2)}(\bm{m},\Lambda_2)=\Big(L_2^k(\bm{m},\Lambda_2)\Big)_{\Delta_2^*,\geq 1},\\
&C_k^{(1)}(\bm{m},\Lambda_1)=B_k^{(1)}(\bm{m}+\bm{e},\La_1),\quad C_k^{(2)}(\bm{m},\Lambda_2)=\Delta_2^{*}\cdot B_k^{(2)}(\bm{m}+\bm{e},\La_2)\cdot\Delta_2^{*-1},\\
&\pi_1(\La_2^{-k})=\prod_{j=1}^k\Big(1-\iota_{\Lambda_1^{-1}}
\big(\La_1-\rho(\bm{m}-j\bm{e}_2)\big)^{-1}\cdot\rho(\bm{m}-j\bm{e}_2)\Big),\\
&\pi_2(\La_1^k)=(-1)^k\prod_{j=1}^k\Big(\iota_{\Lambda_2}(\La_2-1)^{-1}\cdot\rho(\bm{m}+(j-1)\bm{e}_1)\Big). 
 \end{align*}  
Here $(\sum_ia_i\Lambda_1^i)_{1,\geq0}=\sum_{i\geq0}a_i\Lambda_1^i$, $(\sum_ia_i\Delta_2^{*i})_{\Delta_2^{*},\geq1}=\sum_{i\geq1}a_i\Delta_2^{*i}$, and $\iota_{\Lambda_a^{\pm1}}f(\Lambda_a)$ means expanding $f(\Lambda_a)$ in the form of $\sum_{\mp j\ll+\infty}a_j\Lambda_a^j$.  

The derivation of the Lax equation for the 2dKP hierarchy from the bilinear equation \eqref{bartau z(m-m')} is quite similar to the case of the 3--KP hierarchy in \cite{Cui2025}, except the part involving H. But there is no discussion from the Lax equation to the bilinear equation in \cite{Cui2025}, which is usually more difficult. Therefore, besides the investigation from the bilinear equation to the Lax equation, we will focus on how to obtain the bilinear equation \eqref{bartau z(m-m')} from the Lax equation for the 2dKP hierarchy, which cotains the following key steps:
\begin{itemize}
\item from Lax equation to wave operators;
\item from wave operators to bilinear equation;
\item from bilinear equation to the existence of the tau function.
\end{itemize}
In fact, the result from the Lax equation to the bilinear equationcan help us to fix which are more basic in the 2dKP Lax formulation. Finally, we discuss the reduction of the 2dKP hierarchy corresponding to the loop algebra $\widehat{sl}_{M+N}=sl_{M+N}[\lambda,\lambda^{-1}]\oplus\mathbb{C}c$, called the $\widehat{sl}_{M+N}$--reduction of the 2dKP hierarchy, which is given by the operator $H=\La_1\Delta_2+\rho$ and the Lax operator $$\mathcal{L}=\La_1^M+\sum_{k=0}^{M-1}u_k\La_1^k+\sum_{l=1}^{N}v_l(\La_2^{-l}-1)$$ satisfying
\begin{align*}
 &H\mathcal{L}=(C_M^{(1)}+C_N^{(2)})H,\quad \partial_{t_k^{(a)}}H=C_k^{(a)}H-HB_k^{(a)},\\
  &\partial_{t_k^{(a)}}\mathcal{L}=[\pi_1(B_k^{(a)}),\pi_1(\mathcal{L})]_{1,\geq0}+[\pi_2(B_k^{(a)}),\pi_2(\mathcal{L})]_{\Delta_2^*,\geq1},\\ \text{or} \ &\partial_{t_k^{(a)}}\pi_b(\mathcal{L})=[\pi_b(B_k^{(a)}),\pi_b(\mathcal{L})],\quad a,b=1,2.
\end{align*}
Here $B_k^{(1)}=\Big(\pi_1(\mathcal{L})^{\frac{k}{M}}\Big)_{1,\geq 0},$ $B_k^{(2)}=\Big(\pi_2(\mathcal{L})^{\frac{k}{N}}\Big)_{\Delta_2^*,\geq1}$ and
   $C_k^{(1)}(\bm{m})=B_k^{(1)}(\bm{m}+\bm{e}),$ $C_k^{(2)}(\bm{m})=\Delta_2^*B_k^{(2)}(\bm{m}+\bm{e})\Delta_2^{*-1}.$

The remaining of this paper is organized in the way below. In Section 2, we derive the Lax equation of the 2dKP hierarchy from the corresponding bilinear equation. Then in Section 3, we consider the inverse direction, that is from the Lax equation to the bilinear equation, including the existence of the tau function. Next in Section 4, we discuss the reduction of the 2dKP hierarchy corresponding to the loop algebra $\widehat{sl}_{M+N}$. Finally, some conclusions and discussions are given in Section 5.

\section{From bilinear equation to Lax equation}
In this section, we will start from the bilinear equation \eqref{bartau z(m-m')} of the 2dKP hierarchy, and obtain the corresponding Lax equations by introducing wave functions, wave operators and Lax operators. The method here is quite similar to the case of the 3--KP hierarchy in \cite{Cui2025}, except the contents involving the operator $H=\La_1\Delta_2+\rho$ and the projections $\pi_a$. 

If introduce the wave functions $\Psi_{a}(\bm{m},t,z)$ and the adjoint wave functions $\widetilde{\Psi}_{a}(\bm{m},t,z)$ as follows
\begin{align}\label{psi_1}
&\Psi_{1}(\bm{m},t,z)=z^{m_1}e^{\xi(t^{(1)},z)}\frac{\tau_{\bm{m}}(t-[z^{-1}]_1)}{\tau_{\bm{m}}(t)},\nonumber\\
&\widetilde{\Psi}_{1}(\bm{m},t,z)=z^{-m_1+1}e^{-\xi(t^{(1)},z)}\frac{\tau_{\bm{m}}(t+[z^{-1}]_1)}{\tau_{\bm{m}}(t)},\nonumber\\
&\Psi_{2}(\bm{m},t,z)=z^{m_2}e^{\xi(t^{(2)},z^{-1})}\frac{\tau_{\bm{m}+\bm{e}}(t-[z]_2)}{\tau_{\bm{m}}(t)},\nonumber\\
&\widetilde{\Psi}_{2}(\bm{m},t,z)=z^{-m_2+1}e^{-\xi(t^{(2)},z^{-1})}\frac{\tau_{\bm{m}-\bm{e}}(t+[z]_2)}{\tau_{\bm{m}}(t)},
\end{align}
then the bilinear equation \eqref{bartau z(m-m')} of the 2dKP hierarchy can be written into
\begin{align}\label{oint_{C_R}}
\oint_{C_R}\frac{dz}{2\pi iz}\Psi_{1}(\bm{m},t,z)\widetilde{\Psi}_{1}(\bm{m}',t',z)=\oint_{C_r}\frac{dz}{2\pi iz}\Psi_{2}(\bm{m},t,z)\widetilde{\Psi}_{2}(\bm{m}',t',z), \quad m_1-m_2\geq m_1'-m_2'.
\end{align}

Before further discussion, let us introduce some symbols. For the formal operator $A=\sum_{j_1,j_2\in\mathbb{Z}}a_{j_1,j_2}(\bm{m})\Lambda_1^{j_1}\Lambda_2^{j_2},$  let us denote
$$A^*=\sum_{j_1,j_2\in\mathbb{Z}}\Lambda_1^{-j_1}\Lambda_2^{-j_2}a_{j_1,j_2}(\bm{m}),\  A_{a,P}=\sum_{\text{$j_a$ satisfies $P$},j_{3-a}\in\mathbb{Z}}a_{j_1,j_2}(\bm{m})\Lambda_1^{j_1}\Lambda_2^{j_2},\ A_{a,[k]}=\sum_{j_a=k,j_{3-a}\in\mathbb{Z}}a_{j_1,j_2}(\bm{m})\Lambda_1^{j_1}\Lambda_2^{j_2},$$
where $a=1,2$, $P\in\{\geq k, \leq k,>k,<k\}$ with $k\in\mathbb{Z}$. Further if set $\Delta_a=\Lambda_a-1$ and $\Delta_a^*=\Lambda_a^{-1}-1,$ then $\Delta_a^{-1}$ means $\sum_{j=1}^{+\infty}\Lambda_a^{-j}$, while $\Delta_a^{*-1}$ means $\sum_{j=1}^{+\infty}\Lambda_a^{j}$.
Also for $R\in\{\Delta_1,\Delta_2,\Delta_1^*,\Delta_2^*\}$, we set $(R+1)^{-k}=\sum_{j=0}^\infty \binom{-k}{j}R^{-k-j}$ and $(\sum b_jR^j)_{R,\geq k}=\sum_{j\geq k}b_jR^j$, where $\binom{-k}{j}=\frac{(-k)(-k-1)\cdots(-k-j+1)}{j!}$. 

\begin{lemma}\cite{Adler1999-1}\label{ABlemma}
Let $A(m,\Lambda)=\sum_{j}a_j(m)\Lambda^j,\ B(m,\Lambda)=\sum_{j}b_j(m)\Lambda^j$ be two operators with shift operator $\Lambda$ defined by $\Lambda(f(m))=f(m+1)$, where $a_j(\bm{m})=b_j(\bm{m})=0$ for $j\gg0$ $(or j \ll0)$, then
\begin{align*}
A(m,\Lambda)\cdot B(m,\Lambda)^*=\sum_{j\in\mathbb{Z}}{\rm Res}_zz^{-1}\left(A(m,\Lambda)(z^{\pm m})\cdot
B(m+j,\Lambda) (z^{\mp m\mp j})\right)\Lambda^j,
\end{align*}
\end{lemma}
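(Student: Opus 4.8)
The plan is to prove the identity by expanding both sides as formal series in the shift operator $\Lambda$ and matching the coefficient of each power $\Lambda^j$; no analytic input is needed, only the commutation rule $\Lambda^i f(m)=f(m+i)\Lambda^i$ and the observation that $\Lambda$ acts on $z^{\pm m}$ as multiplication by $z^{\pm1}$.

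First I would treat the left-hand side. From the definition $B^*=\sum_l\Lambda^{-l}b_l(m)$ and the commutation rule, I rewrite $B^*=\sum_l b_l(m-l)\Lambda^{-l}$. Then
\[
A\cdot B^*=\sum_{i,l}a_i(m)\Lambda^i\,b_l(m-l)\Lambda^{-l}=\sum_{i,l}a_i(m)\,b_l(m+i-l)\,\Lambda^{i-l},
\]
so, collecting the terms with $i-l=j$ (i.e. $l=i-j$), the coefficient of $\Lambda^j$ on the left is $c_j(m)=\sum_i a_i(m)\,b_{i-j}(m+j)$.

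Next I would compute the right-hand side, say with the upper signs. Since $\Lambda(z^{m})=z\,z^{m}$, the operator $\Lambda$ acts on $z^{m}$ as multiplication by $z$, giving $A(m,\Lambda)(z^{m})=z^{m}\sum_i a_i(m)z^{i}$ and, after the shift $m\mapsto m+j$, $B(m+j,\Lambda)(z^{-m-j})=z^{-m-j}\sum_l b_l(m+j)z^{-l}$. Multiplying, the factors $z^{m}$ and $z^{-m-j}$ combine and I obtain $\sum_{i,l}a_i(m)\,b_l(m+j)\,z^{\,i-j-l}$. Applying ${\rm Res}_z z^{-1}(\cdot)$ extracts the coefficient of $z^{0}$, i.e. keeps exactly the terms with $l=i-j$, which yields $\sum_i a_i(m)\,b_{i-j}(m+j)=c_j(m)$. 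Hence the coefficient of $\Lambda^j$ agrees on both sides and the identity follows for the upper signs; the lower-sign case is verified identically, replacing $z^{m}$ by $z^{-m}$ and $z^{-m-j}$ by $z^{m+j}$, after which the residue again forces $l=i-j$ and reproduces the same $c_j(m)$ — this is why the two sign conventions coincide.

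Finally I would record well-definedness: the hypothesis $a_j=b_j=0$ for $j\gg0$ (or $j\ll0$) ensures that for each fixed $j$ only finitely many $i$ contribute to $c_j(m)$, so the operator product $A\cdot B^*$ and each residue are finite and the formal manipulations are legitimate. The only real point requiring care is notational bookkeeping — correctly interpreting the operator action $A(m,\Lambda)(z^{\pm m})$ as the substitution $\Lambda\mapsto z^{\pm1}$ and tracking how $\Lambda$ passes through the coefficient functions — rather than any genuine analytic or algebraic difficulty; once the indices are aligned, the two coefficient computations are literally the same sum.
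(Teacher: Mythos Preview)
Your argument is correct: expanding $A\cdot B^*$ via the commutation rule $\Lambda^i f(m)=f(m+i)\Lambda^i$ and matching the coefficient of $\Lambda^j$ against the residue computation is exactly the right (and essentially the only) way to verify this identity, and your bookkeeping is accurate on both the upper- and lower-sign versions.

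As for comparison with the paper: there is nothing to compare against. The paper states this lemma with a citation to Adler--van Moerbeke and gives no proof of its own; it is used as a black-box tool throughout Section~2. So your write-up actually supplies a proof that the paper omits, and the direct coefficient-matching approach you chose is the standard one.
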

After the preparation above, if set $\bm{m}'=\bm{m}+\bm{j}$ with $\bm{j}=(j_1,j_2)$, then \eqref{oint_{C_R}} can be written as
\begin{align}\label{sum_{j}}
&\sum_{j_1,j_2\in \mathbb{Z}}\left(\oint_{C_R}\frac{dz}{2\pi iz}\Psi_{1}(\bm{m},t,z)\widetilde{\Psi}_{1}(\bm{m}+\bm{j},t',z)\Lambda_{1}^{j_1}\right)_{1,\leq j_2}\Lambda_{2}^{j_2}\nonumber\\
=&\sum_{j_1,j_2\in \mathbb{Z}}\left(\oint_{C_r}\frac{dz}{2\pi iz}\Psi_{2}(\bm{m},t,z)\widetilde{\Psi}_{2}(\bm{m}+\bm{j},t',z)\Lambda_{2}^{j_2}\right)_{2,\geq j_1}\Lambda_{1}^{j_1}.
\end{align}
Next let us introduce wave operators $S_a$ and $\widetilde{S}_a(a=1,2)$ as follows,
\begin{equation}\label{SaandwideSa}
\begin{aligned}
&S_1(\bm{m},t,\Lambda_1)=1+\sum_{k=1}^{+\infty}a_k^{(1)}\Lambda_{1}^{-k},\quad\quad\quad\ \widetilde{S}_1(\bm{m},t,\Lambda_1)=1+\sum_{k=1}^{+\infty}\widetilde{a}_k^{(1)}\Lambda_{1}^k,\\ &S_2(\bm{m},t,\Lambda_2)=a_0^{(2)}+\sum_{k=1}^{+\infty}a_k^{(2)}\Lambda_{2}^k,\quad\quad \widetilde{S}_2(\bm{m},t,\Lambda_2)=\widetilde{a}_0^{(2)}+\sum_{k=1}^{+\infty}\widetilde{a}_k^{(2)}\Lambda_{2}^{-k}.
\end{aligned}
\end{equation}
satisfying
\begin{equation}\label{S_{1}e}
\begin{aligned}
&\Psi_{1}(\bm{m},t,z)=S_1(\bm{m},t,\Lambda_1)(z^{m_1})e^{\xi(t^{(1)},z)},\quad\ 
\widetilde{\Psi}_{1}(\bm{m},t,z)=\widetilde{S}_1(\bm{m},t,\Lambda_1)(z^{-m_1})e^{-\xi(t^{(1)},z)}z,\\
&\Psi_{2}(\bm{m},t,z)=S_2(\bm{m},t,\Lambda_2)(z^{m_2})e^{\xi(t^{(2)},z^{-1})},\quad \widetilde{\Psi}_{2}(\bm{m},t,z)=\widetilde{S}_2(\bm{m},t,\Lambda_2)(z^{-m_2})e^{-\xi(t^{(2)},z^{-1})}z.
\end{aligned}
\end{equation}
In particular by \eqref{psi_1}, we can find $a_0^{(2)}(\bm{m})=\frac{\tau_{\bm{m}+\bm{e}}}{\tau_{\bm{m}}}$ and $\widetilde{a}_0^{(2)}(\bm{m})=\frac{\tau_{\bm{m}-\bm{e}}}{\tau_{\bm{m}}}$.

\begin{proposition}\label{conclu S12}
The relations between $S_a$ and $\widetilde{S}_a$ $(a=1,2)$ are given by
 \begin{align*}
S_{1}(\bm{m},\Lambda_1)\Lambda_1{\widetilde{S}^*_{1}(\bm{m},\Lambda_1)}=\Lambda_1,\quad
S_2(\bm{m},\Lambda_2)\Lambda_2\widetilde{S}_2^*(\bm{m}+\bm{e}_1,\Lambda_2)=\Delta_2^{*-1},
\end{align*}
where  $A(\bm{m},\Lambda_a)$ means $A(\bm{m},t,\Lambda_a)$ for short.
\end{proposition}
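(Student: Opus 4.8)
The plan is to specialize the bilinear identity \eqref{sum_{j}} to $t=t'$ and then read off the two relations by matching homogeneous components in $\Lambda_1$ and $\Lambda_2$. When $t=t'$ the exponentials $e^{\pm\xi}$ appearing in \eqref{S_{1}e} cancel, so every contour integral in \eqref{sum_{j}} reduces to a residue in $z$; moreover the extra factor $z$ carried by each $\widetilde{\Psi}_a$ is exactly what promotes a bare product $S_a\widetilde{S}_a^*$ into $S_a\Lambda_a\widetilde{S}_a^*$, since $z\,S_a(z^{m_a})=S_a\Lambda_a(z^{m_a})$. Applying Lemma \ref{ABlemma} in the variable $\Lambda_1$ on the left (summing the residues over $j_1$, with $\widetilde{S}_1$ carrying the surviving shift $j_2\bm{e}_2$, the projection $(\cdot)_{1,\leq j_2}$ simply enforcing $j_1\leq j_2$) and in the variable $\Lambda_2$ on the right, the identity \eqref{sum_{j}} collapses to the operator identity
\begin{align*}
\sum_{j_2}\Big(S_1(\bm{m})\Lambda_1\widetilde{S}_1(\bm{m}+j_2\bm{e}_2)^*\Big)_{1,\leq j_2}\Lambda_2^{j_2}
=\sum_{j_1}\Big(S_2(\bm{m})\Lambda_2\widetilde{S}_2(\bm{m}+j_1\bm{e}_1)^*\Big)_{2,\geq j_1}\Lambda_1^{j_1}.
\end{align*}

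From the shapes recorded in \eqref{SaandwideSa} one reads off the support data that drives the comparison: since $S_1$ and $\widetilde{S}_1$ both have constant term $1$, the operator $S_1\Lambda_1\widetilde{S}_1^*$ involves only powers of $\Lambda_1$ not exceeding $1$, with leading term $\Lambda_1$; dually $S_2\Lambda_2\widetilde{S}_2^*$ involves only powers of $\Lambda_2$ that are at least $1$. For the first relation I would extract the $\Lambda_2^0$ component: the right-hand side has none, whence $\big(S_1(\bm{m})\Lambda_1\widetilde{S}_1(\bm{m})^*\big)_{1,\leq 0}=0$, and together with the leading term this forces $S_1(\bm{m})\Lambda_1\widetilde{S}_1(\bm{m})^*=\Lambda_1$. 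For the second relation I would extract the $\Lambda_1^1$ component: on the right the projection $(\cdot)_{2,\geq 1}$ acts as the identity, so that component equals $S_2(\bm{m})\Lambda_2\widetilde{S}_2(\bm{m}+\bm{e}_1)^*$, while on the left each $j_2\geq 1$ contributes its $\Lambda_1^1$ coefficient $1$, and these sum to $\sum_{j_2\geq 1}\Lambda_2^{j_2}=\Delta_2^{*-1}$; equating the two gives $S_2(\bm{m})\Lambda_2\widetilde{S}_2(\bm{m}+\bm{e}_1)^*=\Delta_2^{*-1}$.

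I expect the main obstacle to be bookkeeping in the collapse step rather than any conceptual difficulty. One has to keep straight which of the two indices $j_1,j_2$ is consumed by Lemma \ref{ABlemma} and which survives as the parameter shift $j_2\bm{e}_2$ (resp.\ $j_1\bm{e}_1$) inside the adjoint factor; one has to track the extra $z$ that inserts the middle $\Lambda_a$; and one has to propagate the projections $(\cdot)_{1,\leq j_2}$ and $(\cdot)_{2,\geq j_1}$---both of which encode the constraint $m_1-m_2\geq m_1'-m_2'$, i.e.\ $j_1\leq j_2$---correctly through the degree comparison. Since $\Lambda_1$ and $\Lambda_2$ commute, both sides of the displayed identity are already in the normal form $\sum c(\bm{m})\Lambda_1^{j_1}\Lambda_2^{j_2}$, so once it is established the two relations follow by matching a single homogeneous component on each side.
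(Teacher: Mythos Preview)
Your proposal is correct and follows essentially the same route as the paper: specialize \eqref{sum_{j}} to $t=t'$, apply Lemma~\ref{ABlemma} to collapse the residues into the displayed operator identity, and then compare the $\Lambda_2^0$ and $\Lambda_1^1$ components. The only difference is that you spell out the degree bookkeeping (why the right-hand side has no $\Lambda_2^0$ component, and why the $\Lambda_1^1$ coefficients on the left sum to $\Delta_2^{*-1}$) that the paper leaves implicit.
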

\begin{proof}
If set $t'=t$, then by \eqref{S_{1}e}, we have 
\begin{align*}
&\sum_{j_1,j_2\in \mathbb{Z}}\left(\oint_{C_R}\frac{dz}{2\pi iz}S_{1}(\bm{m},\Lambda_1)(z^{m_1})\widetilde{S}_{1}(\bm{m}+\bm{j},\Lambda_1)(z^{-m_1-j_1})\Lambda_{1}^{j_1}\right)_{1,\leq j_2}\Lambda_{2}^{j_2}\nonumber\\
=&\sum_{j_1,j_2\in \mathbb{Z}}\left(\oint_{C_r}\frac{dz}{2\pi iz}S_{2}(\bm{m},\Lambda_2)(z^{m_2})\widetilde{S}_{2}(\bm{m}+\bm{j},\Lambda_2)(z^{-m_2-j_2})\Lambda_{2}^{j_2}\right)_{2,\geq j_1}\Lambda_{1}^{j_1}.
\end{align*}
Next by Lemma \ref{ABlemma}, we can obtain:
\begin{align*}
\sum_{j_2\in \mathbb{Z}}\left(S_{1}(\bm{m},\Lambda_1)\Lambda_{1}{\widetilde{S}^*_{1}(\bm{m}+j_2\bm{e}_2,\Lambda_1)}\right)_{1,\leq j_2}\Lambda_{2}^{j_2}=\sum_{j_1\in \mathbb{Z}}\left(S_{2}(\bm{m},\Lambda_2)\Lambda_{2}{\widetilde{S}^*_{2}(\bm{m}+j_1\bm{e}_1,\La_2)}\right)_{2,\geq j_1}\Lambda_{1}^{j_1}.
\end{align*} 
Then by comparing the coefficients of $\Lambda_2^0$ and $\Lambda_1$ respectively, the results can be obtained.
\end{proof}
\begin{proposition}\label{conclu particals1s2}
Evolution equations of wave operators $S_a$ with respect to $t_k^{(b)}\ (b=1,2)$ are given as follows
\begin{align*}
  &\partial_{t_k^{(1)}}S_1(\bm{m},\Lambda_1)
=B_k^{(1)}(\bm{m},\Lambda_1)S_1(\bm{m},\Lambda_1)-S_1(\bm{m},\Lambda_1)\Lambda_1^k,\\
&\partial_{t_k^{(1)}}S_2(\bm{m},\Lambda_2)
=\left(B_k^{(1)}(\bm{m},\Lambda_1)S_1(\bm{m},\Lambda_1)\Delta_2^{*-1}S_1^{-1}(\bm{m},\Lambda_1)\right)_{1,[0]}\Delta_2^*S_2(\bm{m},\Lambda_2),\\
 &\partial_{t_k^{(2)}}S_1(\bm{m},\Lambda_1)
=\left(B_k^{(2)}(\bm{m},\Lambda_2)S_2(\bm{m},\Lambda_2)\Delta_1^{-1}
S_2^{-1}(\bm{m},\Lambda_2)\Delta_2^{*-1}\right)_{2,[0]}S_1(\bm{m},\Lambda_1),\\
&\partial_{t_k^{(2)}}S_2(\bm{m},\Lambda_2)=B_k^{(2)}(\bm{m},\Lambda_2)S_2(\bm{m},\Lambda_2)-S_2(\bm{m},\Lambda_2)\Lambda_2^{-k},
  \end{align*}
where $B_k^{(1)}\big(\bm{m},\Lambda_1)=(S_1(\bm{m},\Lambda_1)\Lambda_1^kS_1(\bm{m},\Lambda_1)^{-1}\big)_{1,\geq0}$ and $B_k^{(2)}(\bm{m},\Lambda_2)=\big(S_2(\bm{m},\Lambda_2)\Lambda_2^{-k}S_2(\bm{m},\Lambda_2)\big)_{\Delta_2^*,\geq1}$.
\end{proposition}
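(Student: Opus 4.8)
The plan is to read off each of the four equations by differentiating the bilinear identity \eqref{oint_{C_R}} in the time variable $t_k^{(b)}$, specializing $t'=t$ and $\bm{m}'=\bm{m}+\bm{j}$, converting the resulting residue pairings into operator identities via Lemma \ref{ABlemma}, and comparing coefficients of $\Lambda_1^{j_1}\Lambda_2^{j_2}$, exactly as in the proof of Proposition \ref{conclu S12}. The only new input is the chain rule applied to \eqref{S_{1}e}. Since the coefficients of $S_a$ are independent of $z$ and depend on $t=(t^{(1)},t^{(2)})$, while each exponential carries only its own component, one gets
\begin{align*}
\partial_{t_k^{(1)}}\Psi_1=\big(\partial_{t_k^{(1)}}S_1\cdot S_1^{-1}+S_1\Lambda_1^kS_1^{-1}\big)\Psi_1,\qquad \partial_{t_k^{(2)}}\Psi_1=\big(\partial_{t_k^{(2)}}S_1\cdot S_1^{-1}\big)\Psi_1,
\end{align*}
and symmetrically for $\Psi_2$ with $\Lambda_2^{-k}$ replacing $\Lambda_1^k$. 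Here I use $\partial_{t_k^{(1)}}\big(z^{m_1}e^{\xi(t^{(1)},z)}\big)=\Lambda_1^k(z^{m_1})e^{\xi(t^{(1)},z)}$ and $\partial_{t_k^{(2)}}\big(z^{m_2}e^{\xi(t^{(2)},z^{-1})}\big)=\Lambda_2^{-k}(z^{m_2})e^{\xi(t^{(2)},z^{-1})}$, so that in the diagonal cases the exponential contributes a shift operator, whereas in the cross cases it contributes nothing.

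For the two diagonal equations I would argue by order after isolating a single-component identity. Writing $M_k^{(1)}:=\partial_{t_k^{(1)}}S_1\cdot S_1^{-1}+S_1\Lambda_1^kS_1^{-1}$, the first summand has $\Lambda_1$-order $\le -1$ (since $S_1=1+O(\Lambda_1^{-1})$) while the second has order $\le k$ with nonnegative part exactly $B_k^{(1)}=(S_1\Lambda_1^kS_1^{-1})_{1,\ge0}$. Inserting $\partial_{t_k^{(1)}}\Psi_1=M_k^{(1)}\Psi_1$ into the $t_k^{(1)}$-derivative of \eqref{oint_{C_R}}, setting $t'=t$ and applying Lemma \ref{ABlemma}, the coefficient of $\Lambda_2^0$ reduces to a pure $\Lambda_1$-identity which forces the purely negative-order operator $M_k^{(1)}-B_k^{(1)}$ to vanish; hence $M_k^{(1)}=B_k^{(1)}$, and right-multiplying by $S_1$ gives the first equation. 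The fourth equation is identical after replacing $(\Lambda_1,t^{(1)},(\cdot)_{1,\ge0})$ by $(\Lambda_2^{-1},t^{(2)},(\cdot)_{\Delta_2^*,\ge1})$ and performing the order analysis in the $\Delta_2^*$-grading, where $a_0^{(2)}=\tau_{\bm{m}+\bm{e}}/\tau_{\bm{m}}$ controls the leading term of $S_2$ and $B_k^{(2)}=(S_2\Lambda_2^{-k}S_2^{-1})_{\Delta_2^*,\ge1}$.

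For the two cross equations the exponential drops out, so $\partial_{t_k^{(1)}}\Psi_2=(\partial_{t_k^{(1)}}S_2\cdot S_2^{-1})\Psi_2$ is governed by a pure $\Lambda_2$-operator while the opposite side of \eqref{oint_{C_R}} carries $\Lambda_1$-data. Having already established the first equation, I substitute $\partial_{t_k^{(1)}}\Psi_1=B_k^{(1)}\Psi_1$ on the left of the differentiated identity, convert both sides by Lemma \ref{ABlemma}, and invoke the intertwiners of Proposition \ref{conclu S12}, namely $S_1\Lambda_1\widetilde{S}_1^*=\Lambda_1$ and $S_2\Lambda_2\widetilde{S}_2^*(\bm{m}+\bm{e}_1)=\Delta_2^{*-1}$. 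These are exactly what transport the $\Lambda_1$-information carried by $B_k^{(1)}S_1$ onto the $\Lambda_2$-side, and the factor $\Delta_2^{*-1}$ enters through the second relation; extracting the $\Lambda_1^0$ coefficient $(\cdot)_{1,[0]}$ from the matching of $\Lambda_2^0\Lambda_1$-coefficients and right-multiplying by $\Delta_2^*S_2$ yields the second equation. The third equation is the mirror image: differentiate in $t_k^{(2)}$, substitute $\partial_{t_k^{(2)}}\Psi_2=B_k^{(2)}\Psi_2$, and use the same two intertwiners, with $\Delta_1^{-1}$ and $\Delta_2^{*-1}$ arising from the $S_1$- and $S_2$-pairings, finally taking $(\cdot)_{2,[0]}$.

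I expect the cross equations to be the main obstacle. The difficulty is bookkeeping two gradings simultaneously---nonnegative powers of $\Lambda_1$ versus powers of $\Lambda_2$ organized by $\Delta_2^*$---and transferring an operator acting in one component to the other through the bilinear pairing. In particular, Lemma \ref{ABlemma} produces sums in which the arguments of $\widetilde{S}_a^*$ are shifted by $j_2\bm{e}_2$ or $j_1\bm{e}_1$, and these shifts must be reconciled with the coefficient extraction so that only the $[0]$-projection survives with the correct placement of $\Delta_2^*$, $\Delta_2^{*-1}$ and $\Delta_1^{-1}$. I would pin this down by testing the lowest-order coefficients against the raw bilinear equation \eqref{bartau z(m-m')}, using $a_0^{(2)}=\tau_{\bm{m}+\bm{e}}/\tau_{\bm{m}}$ and $\widetilde{a}_0^{(2)}=\tau_{\bm{m}-\bm{e}}/\tau_{\bm{m}}$ as anchors.
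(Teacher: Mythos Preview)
Your proposal is correct and follows essentially the same route as the paper: differentiate the bilinear identity \eqref{sum_{j}} in $t_k^{(b)}$, set $t'=t$, convert via Lemma~\ref{ABlemma} (using Proposition~\ref{conclu S12} to replace $\widetilde S_a^*$ by $S_a^{-1}$ with the appropriate shifts and the factor $\Delta_2^{*-1}$), and then read off both the diagonal and the cross equations by comparing the coefficients of $\Lambda_2^0$ and $\Lambda_1^1$. The only cosmetic difference is that the paper extracts both $\partial_{t_k^{(1)}}S_1$ and $\partial_{t_k^{(1)}}S_2$ simultaneously from a single master identity rather than first establishing the diagonal case and then substituting it back, but this is the same argument organized differently.
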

\begin{proof}
If apply $\partial_{t_k^{(1)}}$ to both sides of \eqref{sum_{j}}, and let  $t'=t$, then we can get by  Lemma \ref{ABlemma} that
\begin{align*}
&\sum_{j_2\in\mathbb{Z}}\left(\Big(\partial_{t_k^{(1)}}S_1(\bm{m},\Lambda_1)+S_1(\bm{m},\Lambda_1)\Lambda_1^k\Big)\cdot S_1^{-1}(\bm{m}+j_2\bm{e}_2,\Lambda_1)\Lambda_1\right)_{1,\leq j_2}\Lambda_2^{j_2}\nonumber\\
=&\sum_{j_1\in\mathbb{Z}}\left(\partial_{t_k^{(1)}}S_2(\bm{m},\Lambda_2)\cdot
S_2^{-1}(\bm{m}+(j_1-1)\bm{e}_1,\Lambda_2)\Delta_2^{*-1}\right)_{2,\geq j_1}\Lambda_1^{j_1}.
\end{align*}
Then by comparing the coefficients of $\Lambda_2^0$ and $\La_1$, we can get the results for $\partial_{t_k^{(1)}}S_a$. Similarly, one can get $\partial_{t_k^{(2)}}S_a$.
\end{proof}
\begin{proposition}\label{conclulamba1s1lamba2s2}
 Given $k>0$, the actions of $\La_a^{(3-2a)k}$ on $S_{3-a}$ are given by
  \begin{align*}
&\Lambda_1^k(S_2(\bm{m},\Lambda_2))=\left(\Lambda_1^kS_1(\bm{m},\Lambda_1)\Delta_2^{*-1}
S_1^{-1}(\bm{m},\Lambda_1)\right)_{1,[0]}\cdot\Delta_2^*S_2(\bm{m},\Lambda_2),\\
&\Lambda_2^{-k}(S_1(\bm{m},\Lambda_1))=\left(\left(\Lambda_2^{-k}S_2(\bm{m},\Lambda_2)\Delta_1^{-1}
S_2^{-1}(\bm{m},\Lambda_2)\Delta_2^{*-1}\right)_{2,[0]}+1\right)\cdot S_1(\bm{m},\Lambda_1).
\end{align*}
\end{proposition}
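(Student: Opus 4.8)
The plan is to mimic the strategy already used in the proof of Proposition \ref{conclu particals1s2}, exploiting the same bilinear identity \eqref{sum_{j}} together with Lemma \ref{ABlemma}. The key observation is that the two desired formulas describe how the shift operators $\La_1^k$ and $\La_2^{-k}$ act on the ``cross'' wave operators $S_2$ and $S_1$ respectively, and that such cross-relations are exactly what one reads off by comparing coefficients of suitable powers of $\La_1$ and $\La_2$ in \eqref{sum_{j}}. So first I would set $t'=t$ in \eqref{sum_{j}} and substitute the operator expressions \eqref{S_{1}e} for the wave functions, reducing the two contour integrals to the master identity
\begin{align*}
\sum_{j_2\in \mathbb{Z}}\left(S_{1}(\bm{m},\Lambda_1)\Lambda_{1}{\widetilde{S}^*_{1}(\bm{m}+j_2\bm{e}_2,\Lambda_1)}\right)_{1,\leq j_2}\Lambda_{2}^{j_2}=\sum_{j_1\in \mathbb{Z}}\left(S_{2}(\bm{m},\Lambda_2)\Lambda_{2}{\widetilde{S}^*_{2}(\bm{m}+j_1\bm{e}_1,\La_2)}\right)_{2,\geq j_1}\Lambda_{1}^{j_1},
\end{align*}
which is precisely the identity produced inside the proof of Proposition \ref{conclu S12}. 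The point is that Proposition \ref{conclu S12} extracted only the $\La_2^0$ and $\La_1^1$ coefficients; here I would instead extract the coefficients of $\La_2^{-k}$ (for the first formula) and of $\La_1^{k+1}$ (for the second), with $k>0$.

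For the first identity, comparing the coefficient of $\La_2^{-k}$ on both sides of the master identity gives, on the left, the single term $S_1(\bm{m},\La_1)\La_1\widetilde S_1^*(\bm{m}-k\bm{e}_2,\La_1)$ (restricted by the projection $(\cdot)_{1,\leq -k}$, which for $k>0$ leaves something proportional to $\La_1$), and on the right the full sum over $j_1$ of the $\La_2^{-k}$-components. The strategy is then to eliminate the adjoint operators $\widetilde S_a^*$ using the two relations of Proposition \ref{conclu S12}: from $S_1\La_1\widetilde S_1^*=\La_1$ one gets $\La_1\widetilde S_1^*=S_1^{-1}\La_1$, and from $S_2\La_2\widetilde S_2^*(\bm{m}+\bm{e}_1)=\Delta_2^{*-1}$ one gets $\La_2\widetilde S_2^*=S_2^{-1}\Delta_2^{*-1}$ after the appropriate shift. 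Substituting these into the coefficient comparison converts everything into an identity purely among $S_1$ and $S_2$, from which $\La_1^k(S_2)$ can be solved. The factor $\Delta_2^*$ and the projection $(\cdot)_{1,[0]}$ in the claimed formula arise naturally from the $\Delta_2^{*-1}$ appearing in the second relation of Proposition \ref{conclu S12} and from the coefficient-extraction in the $\La_1$-grading.

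For the second identity I would run the parallel argument, comparing the coefficient of $\La_1^{k+1}$ (the shift by $+1$ accounting for the extra $\La_2$, and hence the extra $\Delta_1^{-1}$ and the final ``$+1$'' term in the statement). Here the $\Delta_1^{-1}=\sum_{j\geq1}\La_1^{-j}$ appears because inverting the relevant $S_2$-expression in the $\La_2$-grading introduces a geometric-type expansion, and the additive $1$ comes from separating the $j_1=0$ contribution that survives the projection $(\cdot)_{2,[0]}$. The main obstacle I expect is purely bookkeeping rather than conceptual: one must track carefully the shifts $\bm{m}\mapsto\bm{m}+j\bm{e}_a$ hidden inside each coefficient of the wave operators, keep the two noncommuting gradings (the $\La_1$-grading used by $(\cdot)_{1,P}$ and the $\Delta_2^*$-grading used by $(\cdot)_{\Delta_2^*,\geq1}$) from being conflated, and correctly match the projection symbols $(\cdot)_{1,[0]}$ and $(\cdot)_{2,[0]}$ against the truncations produced by the contour-integral residues. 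Once the substitutions from Proposition \ref{conclu S12} are made and the gradings are respected, both formulas should drop out by direct coefficient comparison, exactly as in Propositions \ref{conclu S12} and \ref{conclu particals1s2}.
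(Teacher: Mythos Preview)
Your general strategy --- use the bilinear identity \eqref{sum_{j}} with $t'=t$, apply Lemma~\ref{ABlemma}, and read off coefficients --- is right in spirit, but the specific extraction you propose does not work. In the unshifted master identity
\[
\sum_{j_2}\bigl(S_1(\bm m,\La_1)\La_1\widetilde S_1^*(\bm m+j_2\bm e_2,\La_1)\bigr)_{1,\le j_2}\La_2^{j_2}
=\sum_{j_1}\bigl(S_2(\bm m,\La_2)\La_2\widetilde S_2^*(\bm m+j_1\bm e_1,\La_2)\bigr)_{2,\ge j_1}\La_1^{j_1},
\]
the inner operator on the left has top $\La_1$-degree $1$, and the inner operator on the right has bottom $\La_2$-degree $1$. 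Hence for $k\ge 1$ the coefficient of $\La_1^{k+1}$ on the left vanishes identically, and the coefficient of $\La_2^{-k}$ on the right vanishes identically. Comparing those coefficients therefore yields only a truncation statement (e.g.\ that $S_1(\bm m)S_1^{-1}(\bm m-k\bm e_2)$ has $\La_1$-order bounded below by $-k$), not the claimed cross-shift formulas involving both $S_1$ and $S_2$. You have also paired the coefficients with the wrong formulas: the $\La_2^{-k}$-coefficient probes shifts of $S_1$ in the $\bm e_2$-direction, not $\La_1^k(S_2)$.

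The missing idea, and what the paper actually does, is to shift $\bm m$ in the bilinear equation \emph{before} applying Lemma~\ref{ABlemma}. Replacing $\bm m$ by $\bm m+k\bm e_1$ in \eqref{oint_{C_R}} and then setting $t'=t$ feeds an extra factor $\La_1^k$ into the left inner operator (because $\Psi_1(\bm m+k\bm e_1,t,z)=S_1(\bm m+k\bm e_1,\La_1)\La_1^k(z^{m_1})e^{\xi}$), raising its top degree to $k+1$; now the $\La_1^1$-coefficient is nontrivial on the left and equals $S_2(\bm m+k\bm e_1)S_2^{-1}(\bm m)\Delta_2^{*-1}$ on the right, which unwinds to the first formula. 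The second formula comes symmetrically from the shift $\bm m\mapsto\bm m-k\bm e_2$ and comparison of the $\La_2^0$-coefficient. So the step you are missing is not extra bookkeeping but this initial $\bm m$-shift, without which the relevant coefficients carry no cross-information.
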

\begin{proof}
Firstly by setting $\bm{m}\rightarrow \bm{m}+k\bm{e}_1$ in \eqref{oint_{C_R}} and using Lemma \ref{ABlemma}, Proposition \ref{conclu S12}, we have
\begin{align*}
&\sum_{j_2\in\mathbb{Z}}\left(S_1(\bm{m}+k\bm{e}_1,\Lambda_1)\Lambda_1^{k}{S}_1^{-1}(\bm{m}+j_2\bm{e}_2,\Lambda_1)\Lambda_1\right)_{1,\leq j_2}\Lambda_2^{j_2}\\
=&\sum_{j_1\in\mathbb{Z}}\left(S_2(\bm{m}+k\bm{e}_1,\Lambda_2){S}_2^{-1}(\bm{m}+(j_1-1)\bm{e}_1,\Lambda_2)\Delta_2^{*-1}\right)_{2,\geq j_1}\Lambda_1^{j_1}.
\end{align*} 
By comparing the coefficients of $\Lambda_1$, we can get the result for $\Lambda_1^k(S_2)$. Similarly, by setting $\bm{m}\rightarrow \bm{m}-k\bm{e}_2$ 
in \eqref{oint_{C_R}}, we can get $\Lambda_2^{-k}(S_1)$.
\end{proof}
Note that if set $k=1$ in Proposition \ref{conclulamba1s1lamba2s2}, then
\begin{align}
&\Lambda_1(S_2(\bm{m},\Lambda_2))=-\Delta_2^{*-1}\Lambda_2\cdot\rho(\bm{m})\cdot S_2(\bm{m},\Lambda_2), \label{lambda1S2}\\
&\Lambda_2^{-1}(S_1(\bm{m},\Lambda_1))=(1-\Lambda_1^{-1}\cdot\rho(\bm{m}))\cdot S_1(\bm{m},\Lambda_1), \label{lambda2S1}
\end{align}
where $\rho(\bm{m},t)=\frac{\tau_{\bm{m}}}{\tau_{\bm{m}+\bm{e}_1}}\frac{\tau_{\bm{m}+\bm{e}+\bm{e}_1}}{\tau_{\bm{m}+\bm{e}}}=\partial_{t_1^{(1)}}\log\frac{\tau_{\bm{m}+\bm{e}}}{\tau_{\bm{m}+\bm{e_1}}}$. Here we have used the fact that $\tau_{\bm{m}}$ satisfies $D_{t_1^{(1)}}\tau_{\bm{m}+\bm{e}_2}\cdot\tau_{\bm{m}}=\tau_{\bm{m}+\bm{e}}\tau_{\bm{m}-\bm{e}_1}$, with $D_{t_1^{(1)}}$ being Hirota derivative.
Thus if denote $$H=\La_1\De_2+\rho,$$then we have the corollary below.
\begin{corollary}\label{concluH}
The operator H is related with the wave operators $S_1$ and $S_2$ by 
\begin{align}\label{H=Lam1=lam2}
H=-\Lambda_1\Lambda_2S_1\Delta_2^{*}S_1^{-1}=-\Lambda_1\Delta_2S_2\Delta_1^{*}S_2^{-1}.
\end{align} 
Evolutions equations of $H$ are given by
\begin{align*}
\p_{t_k^{(a)}}H(\bm{m})=C_k^{(a)}(\bm{m})H(\bm{m})-H(\bm{m})\cdot B_k^{(a)}(\bm{m}),\quad a=1,2,
\end{align*}
where $C_k^{(1)}(\bm{m})=B_k^{(1)}(\bm{m}+\bm{e})$ and $C_k^{(2)}(\bm{m})=\Delta_2^{*}\cdot B_k^{(2)}(\bm{m}+\bm{e})\cdot\Delta_2^{*-1}.$
\end{corollary}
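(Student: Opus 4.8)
The plan is to prove the two assertions in turn: first the factorization \eqref{H=Lam1=lam2} of $H$ through the wave operators $S_1,S_2$, and then the flow equations, which I will obtain by differentiating that factorization. For \eqref{H=Lam1=lam2} I would start from the defining expression $H=\La_1\De_2+\rho$ and feed in the $k=1$ specializations \eqref{lambda1S2} and \eqref{lambda2S1} of Proposition \ref{conclulamba1s1lamba2s2}, which already express the scalar $\rho$ through the wave operators: \eqref{lambda2S1} isolates $\La_1^{-1}\rho$ as $1-\La_2^{-1}(S_1)S_1^{-1}$ in the $S_1$-picture, while \eqref{lambda1S2} isolates $\rho$ through $S_2$. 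Substituting and collecting the $\La_1\La_2$, $\La_1$ and $\rho$ pieces, I would then rearrange each side into the conjugated form on the right of \eqref{H=Lam1=lam2}, rewriting the coefficient shifts via the operator rule $\La_a g(\bm{m})=g(\bm{m}+\bm{e}_a)\La_a$ and the elementary identities $\De_2=-\La_2\De_2^*$ and $\De_2\De_2^{*-1}=-\La_2$. The two resulting expressions for $H$ necessarily agree, since both reconstruct $\La_1\De_2+\rho$.

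Given the factorization, the flow equations fall out cleanly, and here I would deliberately use whichever of the two forms in \eqref{H=Lam1=lam2} matches the flow variable: the $S_1$-form $H=-\La_1\La_2\,(S_1\De_2^*S_1^{-1})$ for $\p_{t_k^{(1)}}$ and the $S_2$-form $H=-\La_1\De_2\,(S_2\De_1^*S_2^{-1})$ for $\p_{t_k^{(2)}}$, so that only the diagonal derivatives $\p_{t_k^{(1)}}S_1$ and $\p_{t_k^{(2)}}S_2$ from Proposition \ref{conclu particals1s2} are needed and the more complicated cross-flow formulas are avoided. Writing $W=S_1\De_2^*S_1^{-1}$ and inserting $\p_{t_k^{(1)}}S_1=B_k^{(1)}S_1-S_1\La_1^k$, the two contributions proportional to $\La_1^k$ cancel because $\De_2^*$ commutes with $\La_1^k$, leaving the conjugation-covariant identity $\p_{t_k^{(1)}}W=[B_k^{(1)},W]$, hence $\p_{t_k^{(1)}}H=-\La_1\La_2[B_k^{(1)},W]$. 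Expanding the commutator, the piece $\La_1\La_2\,WB_k^{(1)}$ equals $-HB_k^{(1)}$, while the piece $-\La_1\La_2\,B_k^{(1)}(\bm{m})W$ becomes $C_k^{(1)}(\bm{m})H$ after commuting $\La_1\La_2$ through $B_k^{(1)}(\bm{m})$ via $\La_1\La_2\,B_k^{(1)}(\bm{m})=B_k^{(1)}(\bm{m}+\bm{e})\La_1\La_2$, using $\bm{e}=\bm{e}_1+\bm{e}_2$. The $\p_{t_k^{(2)}}$ case runs identically with $V=S_2\De_1^*S_2^{-1}$ and $\p_{t_k^{(2)}}S_2=B_k^{(2)}S_2-S_2\La_2^{-k}$, the $\La_2^{-k}$ terms cancelling because $\De_1^*$ commutes with $\La_2^{-k}$; the only extra step is to rewrite $\La_1\De_2=-\La_1\La_2\De_2^*$, peel off the outer $\De_2^*$ (which commutes with both shifts), and let the inner conjugation by $\La_1\La_2$ shift $\bm{m}\to\bm{m}+\bm{e}$, which produces exactly $\De_2^*B_k^{(2)}(\bm{m}+\bm{e})\De_2^{*-1}=C_k^{(2)}(\bm{m})$.

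I expect the main obstacle to be the bookkeeping in the first part: matching the precise operator ordering and shift directions so that substituting \eqref{lambda1S2}--\eqref{lambda2S1} into $H=\La_1\De_2+\rho$ reproduces the conjugated forms of \eqref{H=Lam1=lam2} exactly. The delicate points are distinguishing the coefficient shift $\La_2(S_1)=\La_2 S_1\La_2^{-1}$ from $\La_2^{-1}(S_1)=\La_2^{-1}S_1\La_2$, and keeping the signs consistent in the $\De_2$--$\De_2^*$ relations (noting in particular that $\De_2$ is not uniquely invertible in the bi-infinite algebra, so one must consistently use $\De_2=-\La_2\De_2^*$ rather than a bare $\De_2^{-1}$). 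Once the factorization is pinned down, the flow equations reduce to a mechanical differentiation in which all the genuinely new content is the two conjugation identities $\La_1\La_2\,B_k^{(1)}(\bm{m})(\La_1\La_2)^{-1}=C_k^{(1)}(\bm{m})$ and $\La_1\De_2\,B_k^{(2)}(\bm{m})(\La_1\De_2)^{-1}=C_k^{(2)}(\bm{m})$.
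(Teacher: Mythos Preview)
Your proposal is correct and follows essentially the same route as the paper. The only cosmetic difference is that the paper first records the intermediate identities $H S_1=(\La_1-\rho)S_1\De_2$ and $H S_2=-\rho\,S_2\De_1$ (equation \eqref{HS1S2}) and then substitutes $\rho=H-\La_1\De_2$ to solve the resulting linear relation for $H$, whereas you propose to solve \eqref{lambda1S2}--\eqref{lambda2S1} for $\rho$ first and substitute into $H=\La_1\De_2+\rho$ directly; these are equivalent rearrangements of the same computation, and your treatment of the flow equations (differentiating the $S_a$-form of \eqref{H=Lam1=lam2} with the diagonal formula $\partial_{t_k^{(a)}}S_a$ and conjugating by $\La_1\La_2$ resp.\ $\La_1\De_2$) is exactly what the paper does.
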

\begin{proof}
Firstly by \eqref{lambda1S2} and \eqref{lambda2S1}, we have
\begin{align}\label{HS1S2}
        H(\bm{m})\cdot S_1(\bm{m},\La_1)=(\La_1-\rho(\bm{m}))\cdot S_1(\bm{m},\La_1)\cdot \De_2,\quad H(\bm{m})\cdot S_2(\bm{m},\La_2)=-\rho(\bm{m}) S_2(\bm{m},\La_2)\cdot \De_1.
    \end{align}
Then by substituting $\rho=H-\Lambda_1\Delta_2$ into $H(\bm{m})\cdot S_1(\bm{m},\La_1)=(\La_1-\rho(\bm{m}))\cdot S_1(\bm{m},\La_1)\cdot \De_2$, we have $HS_2=-(H-\Lambda_1\Delta_2)S_2\Delta_1$, i.e., $H=\Lambda_1\Delta_2S_2\Delta_1\Lambda_1^{-1}S_2^{-1}$. Another one can be obtained similarly. Finally, $\p_{t_k^{(a)}}H(\bm{m})$ can be obtained by \eqref{H=Lam1=lam2} and $\partial_{t_k^{(1)}}S_1(\bm{m},\Lambda_1)$, $\partial_{t_k^{(2)}}S_2(\bm{m},\Lambda_2)$ in Proposition \ref{conclu particals1s2}.
\end{proof}

\begin{corollary}\label{partial psi}
The wave functions $\Psi_a$ and the adjoint wave functions $\widetilde{\Psi}_a$ satisfy the following relations,
\begin{align*}
&\partial_{t_k^{(1)}}\Psi_a(\bm{m})=B_k^{(1)}(\bm{m},\Lambda_1)\Big(\Psi_a(\bm{m})\Big),
\quad \partial_{t_k^{(2)}}\Psi_a(\bm{m})=B_k^{(2)}(\bm{m},\Lambda_2)\Big(\Psi_a(\bm{m})\Big),\\
&\partial_{t_k^{(1)}}\widetilde{\Psi}_a(\bm{m})=-B_k^{*(1)}(\bm{m}-\bm{e}_1,\Lambda_1)\left(\widetilde{\Psi}_a(\bm{m})\right),\quad \partial_{t_k^{(2)}}\widetilde{\Psi}_a(\bm{m})=-\Delta_2^{-1}B_k^{*(2)}(\bm{m}-\bm{e}_1,\Lambda_2)\Delta_2\left(\widetilde{\Psi}_a(\bm{m})\right),\\
&H(\bm{m})(\Psi_a(\bm{m})) = 0,\quad \widetilde{H}(\bm{m})(\widetilde{\Psi}_a(\bm{m}+\bm{e}_1)) = 0,
\end{align*}
where $\Psi_a(\bm{m})=\Psi_a(m,t,z)$, $\widetilde{\Psi}_a(\bm{m})=\widetilde{\Psi}_a(\bm{m},t,z)$ and $\widetilde{H}(\bm{m})=H^*(\bm{m}-\bm{e})$.
\end{corollary}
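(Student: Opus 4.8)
The plan is to reduce each of the six identities to an operator relation for the wave operators $S_a,\widetilde{S}_a$ that is already available from Propositions \ref{conclu S12}--\ref{conclulamba1s1lamba2s2} and Corollary \ref{concluH}, combined with one elementary observation about the ``bare'' factors. Writing $\Psi_1=S_1(z^{m_1})e^{\xi(t^{(1)},z)}$ and $\Psi_2=S_2(z^{m_2})e^{\xi(t^{(2)},z^{-1})}$ as in \eqref{S_{1}e}, the factor $z^{m_1}e^{\xi(t^{(1)},z)}$ is independent of $m_2$, hence annihilated by $\Delta_2=\Lambda_2-1$, and satisfies $\partial_{t_k^{(1)}}\big(z^{m_1}e^{\xi(t^{(1)},z)}\big)=\Lambda_1^k\big(z^{m_1}e^{\xi(t^{(1)},z)}\big)=z^k\,z^{m_1}e^{\xi(t^{(1)},z)}$; symmetrically, $z^{m_2}e^{\xi(t^{(2)},z^{-1})}$ is killed by $\Delta_1$, and $\partial_{t_k^{(2)}}$ acts on it as $\Lambda_2^{-k}$. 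The same remarks apply verbatim to the adjoint factors $z^{-m_1+1}e^{-\xi(t^{(1)},z)}$ and $z^{-m_2+1}e^{-\xi(t^{(2)},z^{-1})}$, which are killed by $\Delta_2^{*}$ and $\Delta_1^{*}$ respectively.

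For the diagonal evolutions $\partial_{t_k^{(a)}}\Psi_a=B_k^{(a)}(\Psi_a)$ I would differentiate $\Psi_a=S_a(z^{m_a})e^{\xi}$, insert $\partial_{t_k^{(a)}}S_a$ from Proposition \ref{conclu particals1s2}, and note that the contribution $-S_a\Lambda_a^{\pm k}$ exactly cancels $S_a\,\partial_{t_k^{(a)}}e^{\xi}$ by the identity above, leaving $B_k^{(a)}S_a(z^{m_a})e^{\xi}=B_k^{(a)}(\Psi_a)$. The off-diagonal evolutions are the more delicate point. Since $e^{\xi(t^{(2)},z^{-1})}$ does not depend on $t^{(1)}$, one has $\partial_{t_k^{(1)}}\Psi_2=\big(\partial_{t_k^{(1)}}S_2\big)(z^{m_2})e^{\xi(t^{(2)},z^{-1})}$ with $\partial_{t_k^{(1)}}S_2$ given by the mixed formula of Proposition \ref{conclu particals1s2}; on the other side $B_k^{(1)}(\Psi_2)=\sum_{i\ge 0}b_i\,\Psi_2(\bm m+i\bm e_1)$ for $B_k^{(1)}=\sum_{i\ge 0}b_i\Lambda_1^{i}$, and expanding each $\Psi_2(\bm m+i\bm e_1)$ through the action formula $\Lambda_1^{i}(S_2)=\big(\Lambda_1^{i}S_1\Delta_2^{*-1}S_1^{-1}\big)_{1,[0]}\Delta_2^{*}S_2$ of Proposition \ref{conclulamba1s1lamba2s2} identifies the two sides, because the projection $(\,\cdot\,)_{1,[0]}$ is left-linear, $\big(\sum_i b_i\Lambda_1^{i}X\big)_{1,[0]}=\sum_i b_i\big(\Lambda_1^{i}X\big)_{1,[0]}$, with $X=S_1\Delta_2^{*-1}S_1^{-1}$. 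The equation $\partial_{t_k^{(2)}}\Psi_1=B_k^{(2)}(\Psi_1)$ is treated identically using the second formula of Proposition \ref{conclulamba1s1lamba2s2}.

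For the annihilation relations I would simply feed the exponential factors into the operator identities \eqref{HS1S2}. Since $H(\bm m)S_1=(\Lambda_1-\rho)S_1\Delta_2$ and $H(\bm m)S_2=-\rho\,S_2\Delta_1$, applying $H(\bm m)$ to $\Psi_1$ and $\Psi_2$ puts the rightmost $\Delta_2$, resp.\ $\Delta_1$, directly on the $m_2$-, resp.\ $m_1$-independent exponential factor, which it annihilates; hence $H(\bm m)(\Psi_a(\bm m))=0$. For the adjoint statement $\widetilde H(\bm m)(\widetilde\Psi_a(\bm m+\bm e_1))=0$ with $\widetilde H(\bm m)=H^{*}(\bm m-\bm e)$, I would take the formal adjoint of \eqref{HS1S2}, replace $\bm m$ by $\bm m-\bm e$, and rewrite $S_a^{-1}$ in terms of $\widetilde S_a^{*}$ via Proposition \ref{conclu S12} (for instance $\widetilde S_1^{*}=\Lambda_1^{-1}S_1^{-1}\Lambda_1$, giving $S_1^{*}=\Lambda_1\widetilde S_1^{-1}\Lambda_1^{-1}$). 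The outcome is an identity that carries a factor $\Delta_2^{*}$, resp.\ $\Delta_1^{*}$, to the right of $\widetilde S_a(\bm m+\bm e_1)$, and this factor annihilates the corresponding adjoint exponential factor.

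Finally, the adjoint evolution equations follow in the same spirit: I would obtain $\partial_{t_k^{(a)}}\widetilde S_b$ by differentiating the relations of Proposition \ref{conclu S12} and substituting $\partial_{t_k^{(a)}}S_b$ together with $\partial_{t_k^{(a)}}S_b^{-1}=-S_b^{-1}\big(\partial_{t_k^{(a)}}S_b\big)S_b^{-1}$ from Proposition \ref{conclu particals1s2}, and then run the exponential-factor cancellation on $\widetilde\Psi_a=\widetilde S_a(z^{-m_a})e^{-\xi}z$. I expect the main obstacle to be the bookkeeping of lattice shifts: the slot $\bm m-\bm e_1$ in $B_k^{*(a)}(\bm m-\bm e_1,\cdot)$, the slot $\bm m-\bm e$ hidden in $\widetilde H$, and above all the conjugation $\Delta_2^{-1}(\cdots)\Delta_2$ appearing in $\partial_{t_k^{(2)}}\widetilde\Psi_a$, all descend from the shifted arguments and the factor $\Delta_2^{*-1}$ in the second relation of Proposition \ref{conclu S12}, and they must be aligned carefully before each identity collapses by the cancellation already used in the direct case.
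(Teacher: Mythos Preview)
Your approach is correct and, for the wave functions $\Psi_a$ and the annihilation $H(\Psi_a)=0$, essentially identical to the paper's: both differentiate \eqref{S_{1}e}, insert the evolutions of Proposition~\ref{conclu particals1s2}, and for the off-diagonal case combine the mixed formula with the shift action of Proposition~\ref{conclulamba1s1lamba2s2} (the paper phrases this as the single operator identity $\partial_{t_k^{(1)}}S_2=B_k^{(1)}(S_2)$, which is exactly the result of your left-linearity argument for $(\,\cdot\,)_{1,[0]}$). The only real difference is on the adjoint side. The paper does \emph{not} differentiate Proposition~\ref{conclu S12} or take adjoints of \eqref{HS1S2}; instead it goes back to the bilinear relation \eqref{sum_{j}} and reruns the residue-and-coefficient-extraction argument of Propositions~\ref{conclu particals1s2}--\ref{conclulamba1s1lamba2s2} with $\widetilde{S}_a^{*}$ in place of $S_a$, obtaining explicit formulas for $\partial_{t_k^{(a)}}\widetilde{S}_{3-a}^{*}$ and for $\Lambda_1^{-k}(\widetilde{S}_2^{*})$, $\Lambda_2^{k}(\widetilde{S}_1^{*})$ directly. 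Your route---pulling everything through the algebraic identities $\widetilde{S}_1^{*}=\Lambda_1^{-1}S_1^{-1}\Lambda_1$ and $\widetilde{S}_2^{*}(\bm m+\bm e_1)=\Lambda_2^{-1}S_2^{-1}\Delta_2^{*-1}$---is more economical in that it never revisits the bilinear equation, but it trades that for precisely the shift/conjugation bookkeeping you anticipate (the $\bm m-\bm e_1$ in $B_k^{*(a)}$ and the $\Delta_2^{-1}(\cdots)\Delta_2$ must emerge from the $\Lambda_1$-conjugation and the $\Delta_2^{*-1}$ factor respectively). The paper's rerun-from-bilinear approach sidesteps that bookkeeping at the cost of repeating a computation already done.
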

\begin{proof}
For $\partial_{t_k^{(a)}}\Psi_a(\bm{m})\ (a=1,2)$, they can be obtained directly by $\partial_{t_k^{(a)}}S_a$ in Proposition \ref{conclu particals1s2} and the definitions of $\Psi_a$ in \eqref{S_{1}e}. From $\partial_{t_k^{(1)}}S_2(\bm{m},\Lambda_2)$ in Proposition \ref{conclu particals1s2} and $\Lambda_1^{l}(S_2)$ in Proposition \ref{conclulamba1s1lamba2s2}, we can deduce that 
\begin{align}\label{particaltk1 S2}
\partial_{t_k^{(1)}}S_2(\bm{m},\Lambda_2)=B_k^{(1)}(\bm{m},\Lambda_1)\Big(S_2(\bm{m},\Lambda_2)\Big),
\end{align}
 and thus $\partial_{t_k^{(1)}}\Psi_2(\bm{m})=B_k^{(1)}(\bm{m},\Lambda_1)\Big(\Psi_2(\bm{m})\Big)$. Similarly, it follows that $\partial_{t_k^{(2)}}\Psi_1(\bm{m})=B_k^{(2)}(\bm{m},\Lambda_2)\Big(\Psi_1(\bm{m})\Big).$

By the similar methods in Proposition \ref{conclu particals1s2} and Proposition \ref{conclulamba1s1lamba2s2}, we can get 
\begin{align*}
&\partial_{t_k^{(1)}}\widetilde{S}^*_{2}(\bm{m},\Lambda_2)=-\Lambda_1^{-1}\Lambda_2^{-1}S_2^{-1}(\bm{m},\Lambda_2)\left(S_1(\bm{m},\Lambda_1)\Delta_2^{*-1}S_{1}^{-1}(\bm{m},\Lambda_1)B^{(1)}_{k}(\bm{m},\Lambda_1)\Lambda_1\right)_{1,[1]},\\
&\partial_{t_k^{(2)}}\widetilde{S}^*_{1}(\bm{m},\Lambda_1)=-\Lambda_1^{-1}S_1^{-1}(\bm{m},\Lambda_1)\left(S_2(\bm{m},\Lambda_2)\Delta_1^{-1} S_{2}^{-1}(\bm{m},\Lambda_2)B^{(2)}_{k}(\bm{m},\Lambda_2)\Delta_2^{*-1}\Lambda_1\right)_{2,[0]}, \end{align*}
and
\begin{align}
&\Lambda_1^{-k}(\widetilde{S}^*_2(\bm{m},\Lambda_2))=\Lambda_1^{-1}\Lambda_2^{-1}S_2^{-1}(\bm{m},\Lambda_2)\left(S_1(\bm{m},\Lambda_1)\Delta_2^{*-1}S_1^{-1}(\bm{m},\Lambda_1)\Lambda_1^{k+1})\right)_{1,[1]},\label{lam1s1star}\\
&\Lambda_2^k(\widetilde{S}^*_1(\bm{m},\Lambda_1))=\Lambda_1^{-1}S_1^{-1}(\bm{m},\Lambda_1)\left(\Lambda_1+(S_2(\bm{m},\Lambda_2)\Delta_1^{-1}S^{-1}_2(\bm{m},\Lambda_2)\Lambda_2^{-k}\Delta_2^{*-1}\Lambda_1)_{2,[0]}\right).\label{lam2s2star}
\end{align}
Based upon this, the results about $\partial_{t_k^{(a)}}\widetilde{\Psi}_i(\bm{m})$ can be similarly obtained just like $\p_{t_k^{(a)}}\Psi_i$. $H(\bm{m})(\Psi_i(\bm{m})) = 0$ and $\widetilde{H}(\bm{m})(\Psi_i(\bm{m}+\bm{e}_1)) = 0$ can be obtained by \eqref{S_{1}e} and \eqref{H=Lam1=lam2}.
\end{proof}

In order to express the Lax equation of the 2dKP hierarchy, we need to further introduce some new symbols. Let us define for $a=1,2$, 
\begin{alignat*}{2}
    \E_{(a)}=\mathcal{B}[\La_{3-a},\La_{3-a}^{-1}]((\La_a^{2a-3})),\quad \E_{(a)}^{0}=\mathcal{B}((\La_a^{2a-3})),
\end{alignat*}
where $\mathcal{B}$ is the set of the functions depending on $\bm{m}$ and $t$. Then we have the proposition below.
\begin{proposition}\label{prop: sumdecom}
$\E_{(a)}=\E_{(a)}^{0}\oplus\E_{(a)}H,\quad \text{for} \quad a =1,2.$
\end{proposition}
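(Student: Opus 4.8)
The plan is to reduce both cases $a=1,2$ to a single elementary decomposition by exploiting the factorizations of $H$ from Corollary \ref{concluH}. I treat $a=1$ first; the case $a=2$ is parallel. Recall $H=-\La_1\La_2 S_1\Delta_2^{*}S_1^{-1}$ with $\Delta_2^{*}=\La_2^{-1}-1$ and $S_1=1+\sum_{k\geq1}a_k^{(1)}\La_1^{-k}\in\E_{(1)}^{0}$. My first step is to record that $\E_{(a)}$ is a ring: products stay inside $\E_{(a)}$ because each factor carries only finitely many powers of the polynomial variable $\La_{3-a}$ and a one-sided Laurent series in the completed variable $\La_a^{2a-3}$, both features being preserved under multiplication. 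Consequently right multiplication by a unit of $\E_{(a)}$ is an additive bijection of $\E_{(a)}$ onto itself, and therefore carries any direct-sum decomposition to a direct-sum decomposition.

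Second, I identify the units. In $\E_{(1)}$ the factors $\La_1,\La_2,S_1$ are all invertible, since $\La_1^{-1},\La_2^{-1}$ are Laurent polynomials in their variables and $S_1^{-1}\in\E_{(1)}^{0}$ is again a series in $\La_1^{-1}$. Hence $-\La_1\La_2 S_1$ is a unit and $\E_{(1)}H=\E_{(1)}(-\La_1\La_2 S_1)\Delta_2^{*}S_1^{-1}=\E_{(1)}\Delta_2^{*}S_1^{-1}$. Right multiplying the asserted identity by the unit $S_1$, and using $\E_{(1)}^{0}S_1=\E_{(1)}^{0}$, turns the claim into the $\rho$-free statement $\E_{(1)}=\E_{(1)}^{0}\oplus\E_{(1)}\Delta_2^{*}$. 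The one place I expect genuine friction is here, in the asymmetry of the two completions: $\Delta_2^{*}=\La_2^{-1}-1$ is \emph{not} invertible in $\E_{(1)}$, because its formal inverse $\sum_{j\geq1}\La_2^{j}$ is an infinite series in the \emph{polynomial} variable $\La_2$, and this is exactly why $\Delta_2^{*}$ persists as the nontrivial divisor. By contrast, in the $a=2$ setting $\Delta_2=\La_2-1$ \emph{is} invertible, since there $\La_2$ is the completed variable; keeping straight which of $\Delta_2,\Delta_2^{*},\Delta_1,\Delta_1^{*}$ is a unit in which algebra is the main bookkeeping hazard.

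Third, I prove the reduced decomposition. For spanning, note $\La_2^{j}\Delta_2^{*}=\La_2^{j-1}-\La_2^{j}$, so for any coefficient $c\in\E_{(1)}^{0}$ one has $c\La_2^{j}\equiv c\La_2^{j-1}\pmod{\E_{(1)}\Delta_2^{*}}$ and hence $c\La_2^{j}\equiv c$. Since every element of $\E_{(1)}$ carries only finitely many powers of $\La_2$, summing these congruences reduces it modulo $\E_{(1)}\Delta_2^{*}$ to its total coefficient, an element of $\E_{(1)}^{0}$; this is a terminating finite computation, with no convergence issue. For directness, write $Y=\sum_j d_j\La_2^{j}$ with $d_j\in\E_{(1)}^{0}$; the coefficient of $\La_2^{k}$ in $Y\Delta_2^{*}$ equals $d_{k+1}-d_k$. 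If $Y\Delta_2^{*}\in\E_{(1)}^{0}$ then $d_{k+1}=d_k$ for every $k\neq0$, and since only finitely many $d_j$ are nonzero this forces all $d_j=0$, whence $Y\Delta_2^{*}=0$. Thus $\E_{(1)}^{0}\cap\E_{(1)}\Delta_2^{*}=\{0\}$.

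Finally, the case $a=2$ follows from the mirror factorization $H=-\La_1\Delta_2 S_2\Delta_1^{*}S_2^{-1}$ of Corollary \ref{concluH}. Now the completed variable is $\La_2$, so $\Delta_2,\La_1,S_2$ are units of $\E_{(2)}$ while $\Delta_1^{*}=\La_1^{-1}-1$ is the non-invertible divisor; right multiplying by $S_2$ reduces the claim to $\E_{(2)}=\E_{(2)}^{0}\oplus\E_{(2)}\Delta_1^{*}$, proved verbatim as above with $\La_1$ and $\La_2$ interchanged. I note that a self-contained division algorithm directly from $H=\La_1\La_2-\La_1+\rho$ is also available, reducing positive and negative $\La_2$-powers from the top and bottom respectively, but it requires extra invertibility checks for elements of the form $\rho-\La_1$ and more shift bookkeeping, so the unit-factorization route above is the cleaner one to carry out.
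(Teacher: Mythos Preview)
Your argument is correct. Using the factorization $H=-\La_1\La_2 S_1\Delta_2^{*}S_1^{-1}$ (resp.\ $-\La_1\Delta_2 S_2\Delta_1^{*}S_2^{-1}$) from Corollary \ref{concluH} to strip off the units and reduce to the elementary decomposition $\E_{(a)}=\E_{(a)}^{0}\oplus\E_{(a)}\Delta_{3-a}^{*}$ is a clean and conceptually transparent shortcut; your check that $\Delta_2$ is a unit in $\E_{(2)}$ (via the expansion $-\sum_{j\geq0}\La_2^{j}$, not the paper's convention $\Delta_2^{-1}=\sum_{j\geq1}\La_2^{-j}$) is the one delicate point and you handle it correctly.

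The paper takes a different and deliberately more pedestrian route: it works directly from $H=\La_1\Delta_2+\rho$, proving $\E_{(1)}^{0}\cap\E_{(1)}H=\{0\}$ by writing a putative intersection element as $\sum_{i\leq M}\sum_{j=-N_1}^{N_2}b_{i,j}\La_1^{i}\La_2^{j}H$ and killing all $b_{i,j}$ through a coefficient recursion in $\rho$, and then proving spanning by induction on $|j|$ via $\La_2=\La_1^{-1}(H-\rho)+1$. The Remark immediately following the proposition stresses that this proof uses \emph{only} the definitions of $\E_{(a)}$, $\E_{(a)}^{0}$ and $H$. That feature is not cosmetic: in Section~3 the paper starts from the Lax equations alone, the wave operators $S_a$ are constructed only later (Proposition \ref{proexsit of wave operates}), yet the projection $\pi_a$---and hence the direct-sum decomposition---is needed from the outset to even state the Lax system. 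Your proof, resting on Corollary \ref{concluH}, would create a circularity there. So your route is slicker within Section~2, while the paper's route is what makes the proposition portable to Section~3.
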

\begin{proof}
Here we only prove $\E_{(1)}=\E_{(1)}^{0}\oplus\E_{(1)}H$, since another is almost the same. Firstly for $A\in\E_{(1)}^{0}\bigcap\E_{(1)}H$, let us assume $$A=\sum_{i\leq M}\sum_{j=-N_1}^ {N_2}b_{i,j}\Lambda_1^i\Lambda_2^jH,\quad N_1, N_2\geq0,$$ then we can find that 
\begin{align*}
A=\sum_{i\leq M}b_{i,N_2}\Lambda_1^{i+1}\Lambda_2^{N_2+1}+\sum_{j=-N_1}^{N_2}\sum_{i\leq M}(b_{i,j-1}-b_{i,j}+b_{i+1,j}\cdot\rho_{ij})\Lambda_1^{i}\Lambda_2^j,
\end{align*}
where we assume $b_{i,-N_1-1}=b_{M+1,j}=0$ and $\rho_{ij}=\rho(m_1+i+1,m_2+j)$. Further $A\in\E_{(1)}^0$ implies that the coefficients of $\Lambda_2^j(j\neq0)$ vanish, that is
\begin{align}
  &b_{i,N_2}=0,\label{bin2=0}\\
  &b_{i,j-1}-b_{i,j}+b_{i+1,j}\cdot\rho_{ij}=0,\quad i\leq M,\  -N_1\leq j\leq N_2(j\neq0),\label{bij-1bij}\\
  &b_{i,-N_1}-b_{i+1,-N_1}\cdot\rho_{i,-N_1}=0,\label{b1N1}
\end{align}
where $i\leq M$, $-N_1\leq j\leq N_2$, $j\neq0$.
Thus from \eqref{bin2=0} and \eqref{bij-1bij}, we can know $b_{ij}=0$ for $0\leq j\leq N_2$, $i\leq M$. Further by \eqref{b1N1}, we know $b_{M,-N_1}=0$. So the successive applications of \eqref{b1N1}, implies $b_{i,-N_1}=0$, for $i\leq M.$ Next set $j=-N_1+1$ in \eqref{bij-1bij}, we know 
\begin{align}
b_{i,-N_1+1}=b_{i+1,-N_1+1}\cdot\rho_{i,-N_1+1}.\label{b_i,N_1+1}
\end{align}
Notice that $b_{M,-N_1+1}=0$ by setting $i=M$ in \eqref{b_i,N_1+1}. So by \eqref{b_i,N_1+1}, $b_{i,-N_1+1}=0$, for $i\leq M.$
Continue above discussions, we can get $b_{i,j}=0$ for $i\leq M$, $-N_1\leq j\leq-1$. Therefore $A=0$, which means  $\E_{(1)}^{0}\bigcap\E_{(1)}H=\{0\}$.

Finally we just need to prove that
$\E_{(1)}\subseteq\E_{(1)}^{0}+\E_{(1)}H$,
which means
\begin{align}
 \{\La_1^i\La_2^j\mid i\leq M,\ -N_1\leq j\leq N_2\}\subseteq \E_{(1)}^{0}+\E_{(1)}H,\quad M\in\mathbb{Z}, \ N_1,N_2\in\mathbb{Z}_{\geq0}. \label{esubseteqeoeh} 
\end{align}
 Since $\La_1^i\in \E_{(1)}^0$ for $i\leq M$,
we next make induction on $j$ to complete the proof. Assuming \eqref{esubseteqeoeh} holds for $j>0$, we will prove it for $j+1$, i.e. $\La_2\cdot\La_1^i\La_2^{j}\in\E_{(1)}^0\oplus\E_{(1)}H$.
By hypothesis $\La_1^i\La_2^{j}=\sum_{l\leq N}a_l\La_1^l+PH$ for $a_l\in \mathcal{B}$ and $P\in \E_{(1)}$,
we have
$$\La_2\cdot\La_1^i\La_2^{j}=\sum_{l\leq N}a_l(\bm{m}+\bm{e}_2)\cdot\La_1^{l-1}\cdot\big(H-\rho+\La_1\big)+\La_2\cdot PH \in\E_{(1)}^0\oplus\E_{(1)}H,
$$
where we have used $\La_2=\La_1^{-1}\cdot(H-\rho(\bm{m}))+1$.
While the case for $j<0$ is similar. So we finish the proof.
\end{proof}
\begin{remark}
The proof of Proposition \ref{prop: sumdecom} only depends on the definition of $\E_{(a)}$, $\E_{(a)}^0$ and $H$, while the proof in \cite{Cui2025} relies on $H(\Psi_a)=0.$
\end{remark}
Due to Proposition \ref{prop: sumdecom}, we can define the following projections 
$$\pi_a:\E_{(a)}=\E_{(a)}^0\oplus\E_{(a)}H\to \E_{(a)}^{0}, \quad a =1,2,$$
and we can find the following recursion relations for $\pi_a$ by the above definition,
\begin{align}\label{pilamk+1}
     \pi_{3-a}(\La_a^{(-1)^a(k+1)})=\La_a(\pi_{3-a}(\La_a^{(-1)^ak}))\cdot\pi_{3-a}(\La_a^{(-1)^a}).
  \end{align}
Next from $H=\Lambda_1\Delta_2+\rho$, we can know
\begin{align*}
 &\pi_1(\La_2^{-1})=1-\iota_{\Lambda_1^{-1}}
 \big(\La_1-\rho(\bm{m}-\bm{e}_2)\big)^{-1}\cdot\rho(\bm{m}-\bm{e}_2),\\
 &\pi_2(\La_1)=-\iota_{\Lambda_2}(\La_2-1)^{-1}\cdot\rho(\bm{m}).
  \end{align*}
So we can get for $k>0$,
\begin{align}
&\pi_1(\La_2^{-k})=\prod_{j=1}^k\Big(1-\iota_{\Lambda_1^{-1}}
\big(\La_1-\rho(\bm{m}-j\bm{e}_2)\big)^{-1}\cdot\rho(\bm{m}-j\bm{e}_2)\Big),\label{pi1La1}\\
&\pi_2(\La_1^k)=(-1)^k\prod_{j=1}^k\Big(\iota_{\Lambda_2}(\La_2-1)^{-1}\cdot\rho(\bm{m}+(j-1)\bm{e}_1)\Big). \label{pi2la12}
 \end{align} 
 \begin{lemma}\label{Lemma:pi1pi2la}
The projections $\pi_a$ $(a=1,2)$ can also be computed by using $H$ in the way below
\begin{align}\label{lemmapi1pi2}
\pi_2(\La_1^k)=(\Lambda_1^k\cdot\iota_{\La_1^{-1}}H^{-1}\cdot\Lambda_1)_{1,[0]}\Delta_2,\quad \pi_1(\La_2^{-k})=-(\Lambda_2^{-k}\cdot\iota_{\La_2}H^{-1}\cdot\Lambda_2)_{2,[0]}\Lambda_1. 
 \end{align} 
\end{lemma}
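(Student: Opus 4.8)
The plan is to recover each projection from the division of $\La_1^k$ (resp.\ $\La_2^{-k}$) by $H$. By Proposition~\ref{prop: sumdecom}, $\pi_2(\La_1^k)$ is the unique element of $\E_{(2)}^0$ for which there is a (unique, as $H$ is not a zero divisor) $Q_k\in\E_{(2)}$ with
\begin{align}
\La_1^k=\pi_2(\La_1^k)+Q_kH.\nonumber
\end{align}
The algebraic input I would use is that the top $\La_1$-coefficient of $H=\Delta_2\La_1+\rho$ is $\Delta_2=\La_2-1$, invertible in $\mathcal{B}((\La_2))$; hence $H$ is invertible in the ring of formal series $\sum_{j\le N}a_j\La_1^j$ with coefficients in $\mathcal{B}((\La_2))$, and $G:=\iota_{\La_1^{-1}}H^{-1}$ is a two-sided inverse, $GH=HG=1$. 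Comparing $\La_1^0$-coefficients in $GH=1$ for $G=\sum_{n\ge0}G_n\La_1^{-1-n}$ gives $G_0\Delta_2=1$, so $(G\La_1)_{1,[0]}=G_0$ with $G_0\Delta_2=1$.

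For the first identity I would right-multiply the displayed decomposition by $G\La_1$, use $HG=1$, extract the $\La_1^0$-part and multiply by $\Delta_2$, so that the left-hand side below is exactly the proposed formula for $\pi_2(\La_1^k)$:
\begin{align}
\big(\La_1^kG\La_1\big)_{1,[0]}\Delta_2=\big(\pi_2(\La_1^k)G\La_1\big)_{1,[0]}\Delta_2+\big(Q_k\La_1\big)_{1,[0]}\Delta_2.\nonumber
\end{align}
Since $\pi_2(\La_1^k)$ carries no $\La_1$ it passes through the extraction, so the first term equals $\pi_2(\La_1^k)G_0\Delta_2=\pi_2(\La_1^k)$; the second equals $(Q_k)_{1,[-1]}\Delta_2$, because $\big(Q_k\La_1\big)_{1,[0]}$ is precisely the $\La_1^{-1}$-coefficient of $Q_k$. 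Thus the $\pi_2$-formula reduces to the single claim $(Q_k)_{1,[-1]}=0$.

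This vanishing is the step I expect to be the main obstacle, and I would establish it constructively. Reducing $\La_1^k$ modulo $\E_{(2)}H$ by repeatedly applying $\La_1=\Delta_2^{-1}(H-\rho)=\Delta_2^{-1}H-\Delta_2^{-1}\rho$ (legitimate since $[\La_1,\Delta_2]=0$ and $H=\La_1\Delta_2+\rho$, with $\Delta_2^{-1}$ taken in $\mathcal{B}((\La_2))$), each step rewrites a term $c\,\La_1^{j}$ as a quotient contribution $c\,\Delta_2^{-1}\La_1^{j-1}H$ of $\La_1$-degree $j-1$ plus a remainder $-c\,\Delta_2^{-1}\rho(\bm{m}+(j-1)\bm{e}_1)\La_1^{j-1}$ of the same degree. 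Iterating lowers the degree by one and terminates at $\La_1^{0}$, where the remainder is a pure $\La_2$-series absorbed into $\pi_2(\La_1^k)$; the accumulated quotient is supported in $\La_1$-degrees $\{0,1,\dots,k-1\}$. By uniqueness this is the actual $Q_k$, so $(Q_k)_{1,[-1]}=0$ and the $\pi_2$-identity follows.

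Finally I would treat $\pi_1(\La_2^{-k})$ by the same scheme with $\La_1\leftrightarrow\La_2$, writing $\La_2^{-k}=\pi_1(\La_2^{-k})+\tilde{Q}_kH$ and using $\tilde{G}:=\iota_{\La_2}H^{-1}$. Here one genuine difference appears: the lowest $\La_2$-coefficient of $H$ is $\rho-\La_1$, so $\tilde{G}$ is supported in $\La_2$-degrees $\ge0$ and $(\tilde{G}\La_2)_{2,[0]}=0$; the $\pi_1$-term therefore drops out, and the formula returns $-(\tilde{Q}_k)_{2,[-1]}\La_1$ rather than isolating $\pi_1$ directly. Comparing $\La_2^0$-coefficients in $\La_2^{-k}-\pi_1(\La_2^{-k})=\tilde{Q}_kH$ gives
\begin{align}
\pi_1(\La_2^{-k})=-(\tilde{Q}_k)_{2,[-1]}\La_1-(\tilde{Q}_k)_{2,[0]}(\rho-\La_1),\nonumber
\end{align}
so it suffices to prove $(\tilde{Q}_k)_{2,[0]}=0$. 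Because $\tilde{Q}_k\in\E_{(1)}$ has finite $\La_2$-support, the two-term recursion satisfied by its $\La_2$-coefficients has vanishing tails and forces that support into $\{-k,\dots,-1\}$, giving $(\tilde{Q}_k)_{2,[0]}=0$. Throughout I would check the $k=1$ cases against \eqref{pi1La1}--\eqref{pi2la12} to fix all signs and $\iota$-conventions, which is the only delicate bookkeeping.
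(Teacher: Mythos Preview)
Your argument is correct and takes a genuinely different route from the paper's proof. The paper proceeds by brute computation: it expands $\iota_{\La_1^{-1}}H^{-1}\cdot\La_1=\iota_{\La_2}(\La_2-1)^{-1}+\sum_{j\ge1}\La_1^{-j}v_j$, derives and solves the recursion $v_{l+1}(\bm m)=-\iota_{\La_2}(\La_2-1)^{-1}\rho(\bm m+l\bm e_1)v_l(\bm m)$, evaluates the right–hand side of the formula as $v_k\Delta_2$, and then checks this against the previously obtained closed product \eqref{pi2la12}. Thus the paper \emph{compares two explicit formulas}; the identity is verified rather than derived.

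Your approach is more intrinsic. You work directly with the direct–sum decomposition $\La_1^k=\pi_2(\La_1^k)+Q_kH$ (resp.\ $\La_2^{-k}=\pi_1(\La_2^{-k})+\tilde Q_kH$) supplied by Proposition~\ref{prop: sumdecom}, right–multiply by the two–sided inverse $G$ (resp.\ $\tilde G$), and then argue that after extracting the appropriate coefficient only the $\pi$–piece survives. The substantive content shifts to the support statements $(Q_k)_{1,[-1]}=0$ and $(\tilde Q_k)_{2,[0]}=0$, which you prove: the first by an explicit Euclidean division in $\La_1$ showing $Q_k$ lives in $\La_1$–degrees $\{0,\dots,k-1\}$, the second by the two–term recursion on the $\La_2$–coefficients forced by the finite support of $\tilde Q_k\in\E_{(1)}$. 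This avoids any reference to the product formulas \eqref{pi1La1}–\eqref{pi2la12} and makes the lemma logically independent of them; the paper's proof, by contrast, needs those formulas as input. A minor remark: your notation $(Q_k)_{1,[-1]}$ for the bare coefficient $q_{-1}$ (rather than $q_{-1}\La_1^{-1}$, as the paper's convention $A_{a,[k]}$ would give) is a slight abuse, but the intended vanishing is unambiguous.
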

\begin{proof}
We give a proof for the first equation, and the second equation can be proved in a similar way. Firstly, if we set
\begin{align}\label{H-1La}
  \iota_{\La_1^{-1}}H^{-1}(\bm{m})\cdot\Lambda_1=\iota_{\Lambda_2}(\La_2-1)^{-1}+\sum_{j=1}^{\infty}\La_1^{-j}\cdot v_j(\bm{m}),
\end{align}
then comparing the coefficients of $\La^{i}$ for
$$\La_1=H(\bm{m})\cdot\Big(\iota_{\Lambda_2}(\La_2-1)^{-1}+\sum_{j=1}^\infty\La_1^{-j}\cdot v_j(\bm{m})\Big)=(\La_1\Delta_2+\rho(\bm{m}))\Big(\iota_{\Lambda_2}(\La_2-1)^{-1}+\sum_{j=1}^{\infty}\La_1^{-j}\cdot v_j(\bm{m})\Big),$$
we have $$v_1(\bm{m})=-\iota_{\Lambda_2}(\La_2-1)^{-1}\cdot\rho(\bm{m})\cdot\iota_{\Lambda_2}(\La_2-1)^{-1},\quad v_{l+1}(\bm{m})=-\iota_{\Lambda_2}(\La_2-1)^{-1}\cdot\rho(\bm{m}+l\bm{e}_1)\cdot v_l(\bm{m}).$$
Thus, we can get
\begin{align}\label{vk2}
  v_k(\bm{m})&=-\iota_{\Lambda_2}(\La_2-1)^{-1}\cdot\rho(\bm{m}+(k-1)\bm{e}_1)\cdot v_{k-1}(\bm{m})\nonumber \\
  &=\iota_{\Lambda_2}(\La_2-1)^{-1}\cdot\rho(\bm{m}+(k-1)\bm{e}_1)\cdot\iota_{\Lambda_2}(\La_2-1)^{-1}\cdot\rho(\bm{m}-(k-2)\bm{e}-1)\cdot v_{k-2}(\bm{m}) \nonumber\\
  &=\cdots \nonumber\\
  &=(-1)^k\prod_{i=1}^{k}\iota_{\Lambda_2}(\La_2-1)^{-1}\cdot\rho(\bm{m}+(k-i)\bm{e}_1)\cdot\iota_{\Lambda_2}(\La_2-1)^{-1}.
\end{align}
Next, taking \eqref{H-1La} into the first equation of \eqref{lemmapi1pi2}, we can obtain 
\begin{align}\label{pi2La1k=vkde2}
\pi_2(\La_1^k)=\Big(\La_1^k\big(\iota_{\Lambda_2}(\La_2-1)^{-1}+\sum_{j\geq1}\La_1^{-j}v_j(\bm{m})\big)\Big)_{1,[0]}\Delta_2=v_k(\bm{m})\cdot\Delta_2.
\end{align}
Finally, combining \eqref{vk2} and \eqref{pi2La1k=vkde2}, we get 
\begin{align*}
\pi_2(\La_1^k)=(-1)^k\prod_{j=1}^k\Big(\iota_{\Lambda_2}(\La_2-1)^{-1}\cdot\rho(\bm{m}+(j-1)\bm{e}_1)\Big),
\end{align*}
which is just equation \eqref{pi2la12}.
\end{proof}
\begin{remark}
Note that in the proof of Lemma \ref{Lemma:pi1pi2la}, we only use the definitions of $H$ and $\pi_a$.
\end{remark}
\begin{corollary}\label{corollary pi1pi2}
In terms of wave operators $S_1$ and $S_2$,
\begin{align*}
&\pi_2(\La_1^k)=\left(\Lambda_1^kS_1(\bm{m},\Lambda_1)\Delta_2^{*-1}
S_1^{-1}(\bm{m},\Lambda_1)\right)_{1,[0]}\cdot\Delta_2^*,\\
&\pi_1(\La_2^{-k})=\left(\Lambda_2^{-k}S_2(\bm{m},\Lambda_2)\Delta_1^{-1}
S_2^{-1}(\bm{m},\Lambda_2)\Delta_2^{*-1}\right)_{2,[0]}+1. 
 \end{align*} 
 \begin{proof}
 In fact by \eqref{H=Lam1=lam2}, we can know
 \begin{align*}
   &S_1\Delta_2^{*-1}S_1^{-1}=-\iota_{\La_1^{-1}}H^{-1}\cdot\Lambda_1\Lambda_2 \\
   &S_2\Delta_1^{-1}S_2^{-1}=-(1+\iota_{\La_2}H^{-1}\cdot\Lambda_1\Lambda_2\Delta_2^*),
 \end{align*}
 which implies
 \begin{equation}\label{H-1debianxing}
 \begin{aligned}
   &\iota_{\La_1^{-1}}H^{-1}=S_1\Delta_2^{*-1}S_1^{-1}\Lambda_1^{-1}\Lambda_2^{-1} \\
   &\iota_{\La_2}H^{-1}=-(1+S_2\Delta_1^{-1}S_2^{-1})\Delta_2^{*-1}\Lambda_1^{-1}\Lambda_2^{-1}.
 \end{aligned}
 \end{equation}
 Insert \eqref{H-1debianxing} into \eqref{lemmapi1pi2}, we can proof this corollary.
 \end{proof}
\end{corollary}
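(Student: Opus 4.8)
The plan is to reduce the claim to Lemma~\ref{Lemma:pi1pi2la} by rewriting the two $\iota$-expansions of $H^{-1}$ that appear there in terms of the wave operators $S_1$ and $S_2$. The only extra ingredient needed is the factorized form of $H$ recorded in \eqref{H=Lam1=lam2}, namely $H=-\Lambda_1\Lambda_2S_1\Delta_2^{*}S_1^{-1}=-\Lambda_1\Delta_2S_2\Delta_1^{*}S_2^{-1}$. Once $\iota_{\La_1^{-1}}H^{-1}$ and $\iota_{\La_2}H^{-1}$ are expressed through $S_1,S_2$, substituting into \eqref{lemmapi1pi2} and simplifying will produce the two stated identities.

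First I would invert the first factorization. Since $S_1=1+\sum_{k\geq1}a_k^{(1)}\La_1^{-k}$ is a series in $\La_1^{-1}$, the operator $S_1\Delta_2^{*}S_1^{-1}$ is naturally expanded in $\La_1^{-1}$, so inverting $H=-\Lambda_1\Lambda_2S_1\Delta_2^{*}S_1^{-1}$ yields $\iota_{\La_1^{-1}}H^{-1}=S_1\Delta_2^{*-1}S_1^{-1}\La_1^{-1}\La_2^{-1}$ (up to a sign, to be fixed by bookkeeping). Symmetrically, inverting the second factorization $H=-\Lambda_1\Delta_2S_2\Delta_1^{*}S_2^{-1}$ gives $\iota_{\La_2}H^{-1}$ in terms of $S_2$; here I would use the elementary identity $\Delta_1^{*}=-\La_1^{-1}\Delta_1$, hence $\Delta_1^{*-1}=-\Delta_1^{-1}\La_1$, in order to convert the $\Delta_1^{*}$ of \eqref{H=Lam1=lam2} into the $\Delta_1^{-1}$ of the target formula at the cost of a conjugating $\La_1$-shift. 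Because the coefficients of $S_2$ depend on $\bm{m}$, I must commute this $\La_1$ past $S_2$ via $\La_1 S_2(\bm{m},\La_2)=S_2(\bm{m}+\bm{e}_1,\La_2)\La_1$ and track the argument shift carefully.

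With both inverse expressions in hand, I would substitute them into the two formulas of Lemma~\ref{Lemma:pi1pi2la}. For $\pi_2(\La_1^k)=(\La_1^k\cdot\iota_{\La_1^{-1}}H^{-1}\cdot\La_1)_{1,[0]}\,\Delta_2$ the substitution turns the bracket into $\La_1^kS_1\Delta_2^{*-1}S_1^{-1}\La_1^{-1}\La_2^{-1}\La_1$; the stray $\La_1^{-1}\La_2^{-1}\La_1$ must be absorbed so that extracting the $\La_1^{0}$-coefficient via $(\cdot)_{1,[0]}$ yields exactly $(\La_1^kS_1\Delta_2^{*-1}S_1^{-1})_{1,[0]}$, with the trailing $\Delta_2$ combining with the leftover $\La_2$-factor to give $\Delta_2^{*}$, matching the first claimed identity. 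The computation for $\pi_1(\La_2^{-k})$ is parallel; the extra summand $+1$ in that formula should arise from the isolated constant term in the $\iota_{\La_2}H^{-1}$ expansion (the analogue of the $\iota_{\La_2}(\La_2-1)^{-1}$ piece that appears in the proof of Lemma~\ref{Lemma:pi1pi2la}), which I must separate off before applying $(\cdot)_{2,[0]}$.

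I expect the main obstacle to be purely bookkeeping: keeping straight which expansion ($\La_1^{-1}$ versus $\La_2$) is in force at each step, commuting the shift operators $\La_1,\La_2$ past the $\bm{m}$-dependent $S_a$ with the correct shifts, and checking that the coefficient-extraction operations $(\cdot)_{1,[0]}$ and $(\cdot)_{2,[0]}$ interact correctly with the remaining right-multiplications by $\Delta_2$ and $\La_1$. Verifying that the constant term produces precisely the additive $+1$ in the second identity, rather than a spurious operator contribution, is the single place where a sign or a shift could easily go astray, so I would cross-check it against the $k=1$ specializations \eqref{lambda1S2}--\eqref{lambda2S1}.
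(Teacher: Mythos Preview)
Your proposal is correct and follows essentially the same route as the paper: invert the two factorizations of $H$ from \eqref{H=Lam1=lam2} to express $\iota_{\La_1^{-1}}H^{-1}$ and $\iota_{\La_2}H^{-1}$ through $S_1$ and $S_2$, then substitute into \eqref{lemmapi1pi2}. The paper handles the second inversion slightly more directly---writing $S_2\Delta_1^{-1}S_2^{-1}$ in terms of $H^{-1}$ without the detour through $\Delta_1^{*-1}=-\Delta_1^{-1}\La_1$ and the commutation of $\La_1$ past $S_2$---but the idea and the outcome are the same.
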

After the preparation above, now let us define the corresponding Lax operator:
\begin{equation}\label{deofL1andL2}
\begin{aligned}
&L_1(\bm{m},\Lambda_1)=S_1(\bm{m},\Lambda_1)\cdot \Lambda_1\cdot S_1^{-1}(\bm{m},\Lambda_1)=\La_1+\sum_{i=0}^{+\infty}u_i^{(1)}(\bm{m})\Lambda_1^{-i},\\
&L_2(\bm{m},\Lambda_2)=S_2(\bm{m},\Lambda_2)\cdot \Lambda_2^{-1}\cdot S_2^{-1}(\bm{m},\Lambda_2)=u_{-1}^{(2)}(\bm{m})\La_2^{-1}+\sum_{i=0}^{+\infty}u_i^{(2)}(\bm{m})\La_2^i,
\end{aligned}
\end{equation}
then we have the following theorem.
\begin{theorem}\label{tkflowofLi}
$L_1$ and $L_2$ defined by \eqref{deofL1andL2} satisfy the following Lax equations,
\begin{align*}
  &\partial_{t_k^{(a)}}L_b(\bm{m},\Lambda_b)=[\pi_b(B_k^{(a)}(\bm{m},\Lambda_a)),L_b(\bm{m},\Lambda_b)],\\  &H(\bm{m})L_1(\bm{m},\Lambda_1)=L_1(\bm{m}+\bm{e},\Lambda_1)H(\bm{m}),\quad H(\bm{m})L_2(\bm{m},\Lambda_2)=\Delta_2^{*}L_2(\bm{m}+\bm{e})\Delta_2^{*-1}H(\bm{m}),\\ &\p_{t_k^{(a)}}H(\bm{m})=C_k^{(a)}(\bm{m},\Lambda_a)H(\bm{m})-H(\bm{m})\cdot B_k^{(a)}(\bm{m},\Lambda_a),\ a,b=1,2,
\end{align*} 
where $B_k^{(1)}(\bm{m},\Lambda_1)=\Big(L_1^k(\bm{m},\Lambda_1)\Big)_{1,\geq 0},$ $B_k^{(2)}(\bm{m},\Lambda_2)=\Big(L_2^k(\bm{m},\Lambda_2)\Big)_{\Delta_2^*,\geq 1},$ $C_k^{(1)}(\bm{m},\Lambda_1)=B_k^{(1)}(\bm{m}+\bm{e})$, $C_k^{(2)}(\bm{m},\Lambda_2)=\Delta_2^{*}\cdot B_2^k(\bm{m}+\bm{e})\cdot\Delta_2^{*-1}$.
\end{theorem}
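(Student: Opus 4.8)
The plan is to obtain all three families of identities by \emph{dressing}, i.e. by conjugating the already established evolution and intertwining relations for the wave operators $S_1,S_2$ and for $H$ through the definitions $L_1=S_1\La_1 S_1^{-1}$ and $L_2=S_2\La_2^{-1}S_2^{-1}$. Since $\La_1$ and $\La_2^{-1}$ are independent of $t$, each $t$--derivative is automatically a commutator, $\partial_{t_k^{(a)}}L_b=[(\partial_{t_k^{(a)}}S_b)S_b^{-1},L_b]$, so the flow equations reduce to identifying the dressing factor $(\partial_{t_k^{(a)}}S_b)S_b^{-1}$ with $\pi_b(B_k^{(a)})$ up to a term commuting with $L_b$. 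For the diagonal flows ($a=b$) I would insert $\partial_{t_k^{(1)}}S_1=B_k^{(1)}S_1-S_1\La_1^{k}$ and $\partial_{t_k^{(2)}}S_2=B_k^{(2)}S_2-S_2\La_2^{-k}$ from Proposition \ref{conclu particals1s2}; the subtracted term is a conjugate of a power of $\La_a$ and hence commutes with $L_a$, leaving $[B_k^{(a)},L_a]$, and since $B_k^{(1)}=(L_1^k)_{1,\geq0}$ and $B_k^{(2)}=(L_2^k)_{\Delta_2^{*},\geq1}$ are respectively a polynomial in $\La_1$ and a finite polynomial in $\Delta_2^{*}$, both lie in $\E_{(a)}^{0}$ and are fixed by $\pi_a$.

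For the off--diagonal flows the engine is Corollary \ref{corollary pi1pi2}. From Proposition \ref{conclu particals1s2} I read off $(\partial_{t_k^{(1)}}S_2)S_2^{-1}=\big(B_k^{(1)}S_1\Delta_2^{*-1}S_1^{-1}\big)_{1,[0]}\Delta_2^{*}$ and $(\partial_{t_k^{(2)}}S_1)S_1^{-1}=\big(B_k^{(2)}S_2\Delta_1^{-1}S_2^{-1}\Delta_2^{*-1}\big)_{2,[0]}$. Writing $B_k^{(1)}=\sum_l c_l\La_1^{l}$ and using that both the truncation $(\cdot)_{1,[0]}$ and the projection $\pi_2$ are left $\mathcal{B}$--linear, the first factor distributes termwise and coincides with $\sum_l c_l\,\pi_2(\La_1^{l})=\pi_2(B_k^{(1)})$ by the formula for $\pi_2(\La_1^{k})$ in Corollary \ref{corollary pi1pi2}; hence $\partial_{t_k^{(1)}}L_2=[\pi_2(B_k^{(1)}),L_2]$. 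The second case is the same in spirit but must dispose of the constant ``$+1$'' in the formula for $\pi_1(\La_2^{-k})$: writing $B_k^{(2)}=\sum_l f_l\La_2^{-l}$, the unwanted constants sum to $\sum_l f_l=B_k^{(2)}\big|_{\La_2=1}=0$ (as $B_k^{(2)}$ has no $\Delta_2^{*0}$--term), while $\big(S_2\Delta_1^{-1}S_2^{-1}\Delta_2^{*-1}\big)_{2,[0]}=0$ because $S_2\Delta_1^{-1}S_2^{-1}$ carries only nonnegative $\La_2$--powers and $\Delta_2^{*-1}=\sum_{j\geq1}\La_2^{j}$ carries none of power $0$; these two facts together give $(\partial_{t_k^{(2)}}S_1)S_1^{-1}=\pi_1(B_k^{(2)})$ and thus $\partial_{t_k^{(2)}}L_1=[\pi_1(B_k^{(2)}),L_1]$.

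For the intertwining relations I would use the two dressed forms of $H$ in \eqref{H=Lam1=lam2}. Starting from $H=-\La_1\La_2 S_1\Delta_2^{*}S_1^{-1}$ and $L_1=S_1\La_1S_1^{-1}$, commuting $\Delta_2^{*}$ past $\La_1$, replacing $S_1\La_1$ by $L_1S_1$, and using $\La_1\La_2L_1=L_1(\bm{m}+\bm{e})\La_1\La_2$, one gets $HL_1=L_1(\bm{m}+\bm{e})H$ directly. For $L_2$ I would start from $H=-\La_1\Delta_2 S_2\Delta_1^{*}S_2^{-1}$, cancel $S_2^{-1}S_2$ against $L_2=S_2\La_2^{-1}S_2^{-1}$, commute $\Delta_1^{*}$ past $\La_2^{-1}$, substitute $S_2\Delta_1^{*}S_2^{-1}=-\Delta_2^{-1}\La_1^{-1}H$, and simplify $\La_1\Delta_2 L_2\Delta_2^{-1}\La_1^{-1}=\Delta_2 L_2(\bm{m}+\bm{e}_1)\Delta_2^{-1}$, arriving at $HL_2=\Delta_2 L_2(\bm{m}+\bm{e}_1)\Delta_2^{-1}H$; the cosmetic identity $\Delta_2^{*}=-\La_2^{-1}\Delta_2$ then rewrites this as the stated $\Delta_2^{*}L_2(\bm{m}+\bm{e})\Delta_2^{*-1}H$. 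The remaining evolution $\partial_{t_k^{(a)}}H=C_k^{(a)}H-HB_k^{(a)}$ is precisely Corollary \ref{concluH} and only needs to be quoted.

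The main obstacle I anticipate is the off--diagonal flow $\partial_{t_k^{(2)}}L_1$: matching the mixed dressing factor to $\pi_1(B_k^{(2)})$ rests delicately on tracking the $\La_2^{0}$--components together with the constant term of $\pi_1$, where a single stray constant or a misplaced $\bm{m}$--shift would destroy the commutator form. Once the projection formulas of Corollary \ref{corollary pi1pi2} and the two factorizations of $H$ in \eqref{H=Lam1=lam2} are available, the rest is routine conjugation.
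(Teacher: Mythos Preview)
Your proposal is correct and follows essentially the same route as the paper: the paper's proof also derives the diagonal flows directly from Proposition~\ref{conclu particals1s2}, identifies $(\partial_{t_k^{(3-a)}}S_a)S_a^{-1}=\pi_a(B_k^{(3-a)})$ via Corollary~\ref{corollary pi1pi2} for the off--diagonal flows, obtains the $HL_a$ intertwining relations from the two dressed factorizations \eqref{H=Lam1=lam2}, and quotes Corollary~\ref{concluH} for $\partial_{t_k^{(a)}}H$. Your write--up is simply more explicit (in particular your handling of the stray constant ``$+1$'' via $B_k^{(2)}|_{\La_2=1}=0$ and of the $l=0$ term), but the architecture is the same.
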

\begin{proof}
Firstly, $\partial_{t_k^{(a)}}L_a(\bm{m},\Lambda_a)$ an be proved by Proposition \ref{conclu particals1s2}. While by Lemma Proposition \ref{conclu particals1s2} and Corollary \ref{corollary pi1pi2}, we know $\partial_{t_k^{(3-a)}}S_a\cdot S_a^{-1}=\pi_a(B_k^{(3-a)})$, which leads to $\partial_{t_k^{(3-a)}}L_a(\bm{m},\Lambda_a)$. As for the result for $HL_a$, it can be obtained by \eqref{H=Lam1=lam2}. Finally by \eqref{H=Lam1=lam2} and Proposition \ref{conclu particals1s2}, we can get $\partial_{t_k^{(a)}}H(\bm{m})$.
\end{proof}

\noindent\textbf{Example. }
Firstly
$B_1^{(1)}(\bm{m},\La_1)= \Lambda_1+u_0^{(1)}(\bm{m}),$ $B_1^{(2)}(\bm{m},\La_2)=u_{-1}^{(2)}(\bm{m})\La_2^{-1}+u_{-1}^{(2)}(\bm{m}),$
 then we have
 \begin{align*}
 &\pi_1\Big(B_1^{(2)}(\bm{m},\La_2)\Big)=-u_{-1}^{(2)}(\bm{m})\cdot\iota_{\Lambda_1^{-1}}
 \big(\La_1-\rho(\bm{m}-\bm{e}_2)\big)^{-1}\cdot\rho(\bm{m}-\bm{e}_2)\\
  &\pi_2\Big(B_1^{(1)}(\bm{m},\La_1)\Big)=\rho(\bm{m})+u_0^{(1)}(\bm{m})+\De_2^{*-1}\rho(\bm{m}).
\end{align*}
Thus we can get
\begin{align*}
  &\p_{t_1^{(1)}}\rho(\bm{m})=\rho(\bm{m})(u_0^{(1)}(\bm{m}+\bm{e})-u_0^{(1)}(\bm{m})),\\
    &\p_{t_1^{(2)}}\rho(\bm{m})=\rho(\bm{m})( u_{-1}^{(2)}(\bm{m}+\bm{e})-u_{-1}^{(2)}(\bm{m})),\\
  &\partial_{t_{1}^{(1)}}u_0^{(1)}(\bm{m})=u_1^{(1)}(\bm{m}+\bm{e}_1)-u_1^{(1)}(\bm{m}),\\
  &\partial_{t_{1}^{(1)}}u_{-1}^{(2)}(\bm{m})=u_0^{(1)}(\bm{m})u_{-1}^{(2)}(\bm{m})-u_{-1}^{(2)}(\bm{m})u_0^{(1)}(\bm{m}-\bm{e}_2),\\
  &\partial_{t_1^{(2)}}u_0^{(1)}(\bm{m})=u_{-1}^{(2)}(\bm{m})u_1^{(2)}(\bm{m}-\bm{e}_2)-u_1^{(2)}(\bm{m})u_{-1}^{(2)}(\bm{m}+\bm{e}_2)\\
  &\partial_{t_1^{(2)}}u_{-1}^{(2)}(\bm{m})=u_{-1}^{(2)}(\bm{m})u_0^{(2)}(\bm{m}-\bm{e}_2)+(u_{-1}^{(2)}(\bm{m}))^2-u_{-1}^{(2)}(\bm{m})u_{-1}^{(2)}(\bm{m}-\bm{e}_2)-u_{0}^{(2)}(\bm{m})u_{-1}^{(2)}(\bm{m}).
\end{align*}

\section{From the Lax equations To bilinear equations}
In this section, we will start from the Lax equation of the 2dKP hierarchy and derive the corresponding bilinear equation for the 2dKP hierarchy, which contains the following steps:
\begin{align*}
 \text{Lax equation}\Longrightarrow \text{wave operator}\Longrightarrow \text{bilinear equation}\Longrightarrow \text{existence of tau function}.
\end{align*}

Firstly, the Lax equations of the 2dKP hierarchy are given by the following Lax operators
\begin{align*}
L_1(\bm{m},\Lambda_1)=\La_1+\sum_{i=0}^{+\infty}u_i^{(1)}(\bm{m})\Lambda_1^{-i},\quad L_2(\bm{m},\Lambda_2)=u_{-1}^{(2)}(\bm{m})\La_2^{-1}+\sum_{i=0}^{+\infty}u_i^{(2)}(\bm{m})\La_2^i,
\end{align*}
and one special operator $H(\bm{m})=\Lambda_1\Delta_2+\rho(\bm{m})$ satisfying
\begin{align}
  &\partial_{t_k^{(a)}}L_b(\bm{m},\Lambda_b)=[\pi_b(B_k^{(a)}(\bm{m},\Lambda_a)),L_b(\bm{m},\Lambda_b)],\label{particalLb} \\ &H(\bm{m})L_1(\bm{m},\Lambda_1)=L_1(\bm{m}+\bm{e},\Lambda_1)H(\bm{m}),\quad H(\bm{m})L_2(\bm{m},\Lambda_2)=\Delta_2^{*}L_2(\bm{m}+\bm{e})\Delta_2^{*-1}H(\bm{m}), \label{Hl1L2} \\ &\p_{t_k^{(a)}}H(\bm{m})=C_k^{(a)}(\bm{m},\Lambda_a)H(\bm{m})-H(\bm{m})\cdot B_k^{(a)}(\bm{m},\Lambda_a),\ a,b=1,2,\label{particalH}
\end{align}
where the projection $\pi_a$: $\E_{(a)}=\E_{(a)}^0\oplus\E_{(a)}H\to\E_{(a)}^0$ can be computed by Lemma \ref{Lemma:pi1pi2la}, and
\begin{align*}
&B_k^{(1)}(\bm{m},\Lambda_1)=\Big(L_1^k(\bm{m},\Lambda_1)\Big)_{1,\geq 0},\quad B_k^{(2)}(\bm{m},\Lambda_2)=\Big(L_2^k(\bm{m},\Lambda_2)\Big)_{\Delta_2^*,\geq 1},\\
&C_k^{(1)}(\bm{m},\Lambda_1)=B_k^{(1)}(\bm{m}+\bm{e},\Lambda_1),\quad C_k^{(2)}(\bm{m},\Lambda_2)=\Delta_2^{*}\cdot B_2^k(\bm{m}+\bm{e},\Lambda_2)\cdot\Delta_2^{*-1}.
\end{align*}
\begin{proposition}
The system of \eqref{particalLb} \eqref{Hl1L2} and \eqref{particalH} is well defined.
\end{proposition}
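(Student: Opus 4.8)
The plan is to read \emph{well-defined} as the mutual consistency of the overdetermined system \eqref{particalLb}--\eqref{particalH}: every right-hand side must be a legitimate element of the appropriate operator space, each flow must preserve the prescribed shapes of $L_1$, $L_2$ and $H$, and the flows must be compatible with the intertwining relations \eqref{Hl1L2}. I would split the check into the three corresponding tasks. First, since $B_k^{(1)}=(L_1^k)_{1,\geq0}$ is a polynomial in $\Lambda_1$ and $B_k^{(2)}=(L_2^k)_{\Delta_2^*,\geq1}$ is a polynomial in $\Lambda_2^{-1}$, both lie in $\E_{(1)}\cap\E_{(2)}$, so Proposition \ref{prop: sumdecom} makes $\pi_1(B_k^{(a)})$ and $\pi_2(B_k^{(a)})\in\E_{(b)}^0$ meaningful, with $\pi_a(B_k^{(a)})=B_k^{(a)}$ because $B_k^{(a)}\in\E_{(a)}^0$. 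For the diagonal flows $\partial_{t_k^{(a)}}L_a=[B_k^{(a)},L_a]$ I would run the standard order count: writing $[B_k^{(a)},L_a]=-[(L_a^k-B_k^{(a)}),L_a]$ and using $[L_a^k,L_a]=0$ shows the commutator lowers the $\Lambda_a$-degree, hence preserves the form of $L_a$.

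For the off-diagonal flow $\partial_{t_k^{(a)}}L_{3-a}=[\pi_{3-a}(B_k^{(a)}),L_{3-a}]$ I would use the explicit structure of $\pi_{3-a}$ from \eqref{pi1La1}, \eqref{pi2la12} and Lemma \ref{Lemma:pi1pi2la} to bound the order of the commutator and confirm it again has the shape of $L_{3-a}$. The more delicate check is that $\partial_{t_k^{(a)}}H=C_k^{(a)}H-HB_k^{(a)}$ is a pure multiplication operator, as it must be since $H=\Lambda_1\Delta_2+\rho$ only evolves through the scalar $\rho$. Here the shift $C_k^{(1)}(\bm{m})=B_k^{(1)}(\bm{m}+\bm{e})$ (and its $\Delta_2^*$-conjugate for $a=2$) is precisely what forces the top-order terms in $\Lambda_1$ to cancel, while the surviving off-diagonal terms vanish exactly because of the constraint \eqref{Hl1L2}; for instance, at $k=1$ the coefficient of $\Lambda_1$ collapses using $\rho(\bm{m}+\bm{e}_1)-\rho(\bm{m})=u_0^{(1)}(\bm{m}+\bm{e})-u_0^{(1)}(\bm{m}+\bm{e}_1)$, which is itself the $\Lambda_1$-coefficient of $HL_1=L_1(\bm{m}+\bm{e})H$.

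Next I would show the flows respect \eqref{Hl1L2} by differentiating each relation along $t_k^{(a)}$ and substituting \eqref{particalLb}, \eqref{particalH}. For $a=1$ this is immediate: with $\pi_1(B_k^{(1)})=B_k^{(1)}$ and $C_k^{(1)}(\bm{m})=B_k^{(1)}(\bm{m}+\bm{e})$, both $\partial_{t_k^{(1)}}(HL_1)$ and $\partial_{t_k^{(1)}}(L_1(\bm{m}+\bm{e})H)$ reduce, after one use of the constraint, to $C_k^{(1)}HL_1-HL_1B_k^{(1)}$. For the cross flow $a=2$ the device is the decomposition $B_k^{(2)}=\pi_1(B_k^{(2)})+YH$ with $Y\in\E_{(1)}$ from Proposition \ref{prop: sumdecom}: it lets me exchange $B_k^{(2)}$ for $\pi_1(B_k^{(2)})$ at the cost of a right factor $H$, which the intertwining relation $HL_1=L_1(\bm{m}+\bm{e})H$ then absorbs; the relation for $L_2$ is handled identically using $C_k^{(2)}=\Delta_2^*B_k^{(2)}(\bm{m}+\bm{e})\Delta_2^{*-1}$.

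I expect the main obstacle to be exactly these cross-flow manipulations, since $\pi_{3-a}$ is not an ordinary truncation but is defined implicitly by the splitting modulo $H$, so one must simultaneously track the base-point shifts $\bm{m}\mapsto\bm{m}+\bm{e}$, the $\Delta_2^*$-conjugations, and the replacement $B=\pi(B)+(\cdot)H$. The cleanest way to organize everything is to push each computation through the single identity $B_k^{(a)}-\pi_b(B_k^{(a)})\in\E_{(b)}H$ together with the two relations in \eqref{Hl1L2}, so that every consistency check becomes an equality in $\E_{(b)}$ modulo the left ideal generated by $H$. Should \emph{well-defined} be intended to also include the commutativity of the flows, I would close by verifying the associated zero-curvature identities, which follow from the same projection calculus.
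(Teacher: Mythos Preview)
Your plan is essentially the paper's own proof: the same three checks (shape preservation of \eqref{particalLb}, scalar-valuedness of \eqref{particalH}, and invariance of \eqref{Hl1L2} under all flows), each handled via the splitting $B_k^{(a)}-\pi_b(B_k^{(a)})\in\E_{(b)}H$ together with the intertwining relations. The only notable streamlining in the paper is that, instead of checking coefficients of $C_k^{(1)}H-HB_k^{(1)}$ case by case as in your $k=1$ example, it uses $H L_1^k=L_1^k(\bm{m}+\bm{e})H$ to rewrite this expression as $H(L_1^k)_{1,<0}-(L_1^k(\bm{m}+\bm{e}))_{1,<0}H$ and then reads off from a two-sided degree count that it is a pure function; for the cross flow on \eqref{Hl1L2} it packages the obstruction into $D_k=\partial_{t_k^{(2)}}H+H\pi_1(B_k^{(2)})-\pi_1(B_k^{(2)})(\bm{m}+\bm{e})H$ and kills it by $D_k\in\E_{(1)}^0\cap\E_{(1)}H=\{0\}$, which is exactly the ``modulo $H$'' reduction you describe.
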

\begin{proof}
Firstly, let us show both sides of \eqref{particalLb} have the same forms. In fact, notice that $[B_k^{(1)}(\bm{m},\Lambda_1),L_1(\bm{m},\Lambda_1)]=-[(L_1^k(\bm{m},\Lambda_1))_{1,<0},L_1(\bm{m},\Lambda_1)]$
has the highest order $0$ with respect to $\Lambda_1$, which is consistent with the case of $\partial_{t_k^{(1)}}L_1(\bm{m},\Lambda_1)$. Similarly, 
  $[B_k^{(2)}(\bm{m},\Lambda_1),L_2(\bm{m},\Lambda_2)]=-[(L_2^k(\bm{m},\Lambda_2))_{\Delta_2^*,\leq0},L_2(\bm{m},\Lambda_2)]$
has the lowest order $-1$ with respect to $\Lambda_2$. Thus $\partial_{t_k^{(a)}}L_a=[B_k^{(a)},L_a]$ is well defined; that is $\partial_{t_k^{(a)}}L_a$ and $[B_k^{(a)},L_a]$ have the same expansions of $\Lambda_a$. As for $\partial_{t_k^{(a)}}L_b=[\pi_b(B_k^{(a)}),L_b]$, we can find that it is still well defined, from the following facts. When $k>0$, we can find $\pi_1(\Lambda_2^{-k})-1$ has the highest order $-1$ with respect to $\Lambda_1$, while $\pi_2(\Lambda_1^k)$ has the lowest order $0$ with respect to $\Lambda_2$. 

Next let us show the right hand side of \eqref{particalH} is a function, thus 
\eqref{particalH} is well defined. Actually by $H(\bm{m})L_1(\bm{m})=L_1(\bm{m}+\bm{e})H(\bm{m})$, we know $H(\bm{m})L_1^k(\bm{m})=L_1^k(\bm{m}+\bm{e})H(\bm{m})$, which implies 
\begin{align}
  B_k^{(1)}(\bm{m}+\bm{e})H(\bm{m})-H(\bm{m})B_k^{(1)}(\bm{m})=H(\bm{m})(L_1^k(\bm{m}))_{1,<0}-(L_1^k(\bm{m}+\bm{e}))_{1,<0}H(\bm{m})\label{HL1k-L1kH}
\end{align}
If assume \eqref{HL1k-L1kH}$=a(\bm{m},\Lambda_1)\Lambda_1\Lambda_2+b(\bm{m},\Lambda_1)$, then $b(\bm{m},\Lambda_1)$ has the lowest order $0$ and the highest order $0$, while $a(\bm{m},\Lambda_1)$ has the lowest order $0$ and the highest order $-1$. Thus $b(\bm{m},\Lambda_1)$ is a function and $a(\bm{m},\Lambda_1)=0$, which implies that $\partial_{t_k^{(1)}}H(\bm{m})=B_k^{(1)}(\bm{m}+\bm{e})H(\bm{m})-H(\bm{m})B_k^{(1)}(\bm{m})$ is well defined. Similarly, we can prove the case of $\partial_{t_k^{(2)}}H(\bm{m})$.

Finally, let us explain that \eqref{Hl1L2} is consistent with \eqref{particalLb} and \eqref{particalH}. If insert \eqref{particalLb}--\eqref{particalH} into $\partial_{t_k^{(1)}}\Big(H(\bm{m})L_1(\bm{m},\Lambda_1)-L_1(\bm{m}+\bm{e},\Lambda_1)H(\bm{m})\Big)$, it will become zero. For $\partial_{t_k^{(2)}}\Big(H(\bm{m})L_1(\bm{m},\Lambda_1)-L_1(\bm{m}+\bm{e},\Lambda_1)H(\bm{m})\Big)$, let us denote it to be $A_k(\bm{m},\La_1,\La_2)$, then we can get the following relation by \eqref{particalLb} and \eqref{particalH}, 
\begin{align*}
A_k(\bm{m},\La_1,\La_2)=D_k(\bm{m},\La_1,\La_2)L_1(\bm{m})-L_1(\bm{m}+\bm{e})D_k(\bm{m},\La_1,\La_2),
\end{align*}
where $D_k(\bm{m},\La_1,\La_2)=\partial_{t_k^{(2)}}H(\bm{m})+H(\bm{m})\pi_1(B_k^{(2)}(\bm{m}))-\pi_1(B_k^{(2)}(\bm{m}+\bm{e}))H(\bm{m})$. 
Thus if we can show $D_k=0$, then $A_k=0$. Notice that the coefficient of $\La_2$ in $D_k$ is zero, thus $D_k\in\E_{(2)}^0$.
Further by \eqref{particalH}, $$D_k(\bm{m})=C_k^{(2)}(\bm{m})H(\bm{m})-H(\bm{m})\Big(B_k^{(2)}(\bm{m})-\pi_1(B_k^{(2)}(\bm{m}))\Big)-\pi_1(B_k^{(2)}(\bm{m}+\bm{e}))H(\bm{m}).$$ So by $B_k^{(2)}(\bm{m})-\pi_1(B_k^{(2)}(\bm{m}))\in\E_{(1)}H$, we can get $D_k\in\E_{(1)}H$. Thus $D_k=0$, since $\E_{(2)}^0\cap\E_{(1)}H=\{0\}$.
Similarly, we can prove $\partial_{t_k^{(a)}}\Big(H(\bm{m})L_2(\bm{m})-\Delta_2^*L_2(\bm{m}+\bm{e})\Delta_2^{*-1}H(\bm{m})\Big)=0$ after inserting \eqref{particalLb}--\eqref{particalH}.
\end{proof}

\begin{lemma}\label{pi(B),L=pi(B,L)}
For $a,b=1,2$ and $k,l>0$,
\begin{align}\label{eq:pi(B),L=pi(B,L)}
  [\pi_a(B_k^{(b)}),L_a^l]=\pi_a([B_k^{(b)},L_a^l]).
\end{align}
\end{lemma}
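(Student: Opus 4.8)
The plan is to exploit the direct-sum decomposition $\E_{(a)}=\E_{(a)}^0\oplus\E_{(a)}H$ of Proposition \ref{prop: sumdecom}, of which $\pi_a$ is exactly the projection onto the first summand: $\pi_a$ is linear, restricts to the identity on $\E_{(a)}^0$, and annihilates $\E_{(a)}H$. It therefore suffices to show that $\mathrm{ad}_{L_a^l}=[\,\cdot\,,L_a^l]$ preserves each summand separately; once this is known, for $X=B_k^{(b)}$ I split $X=\pi_a(X)+(X-\pi_a(X))$ with $X-\pi_a(X)\in\E_{(a)}H$, apply $\mathrm{ad}_{L_a^l}$, and read off $\pi_a([X,L_a^l])=[\pi_a(X),L_a^l]$ because $\pi_a$ fixes the $\E_{(a)}^0$-part and kills the $\E_{(a)}H$-part. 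First note that $L_a\in\E_{(a)}^0$, since it involves only $\Lambda_a$ with powers bounded on the correct side ($\le 1$ in $\Lambda_1$, $\ge -1$ in $\Lambda_2$); hence $L_a^l\in\E_{(a)}^0\subseteq\E_{(a)}$, so $\mathrm{ad}_{L_a^l}$ indeed maps $\E_{(a)}$ into itself.

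For the first summand, since $\E_{(a)}^0=\mathcal{B}((\Lambda_a^{2a-3}))$ is an associative subalgebra, both $\pi_a(B_k^{(b)})$ and $L_a^l$ lie in $\E_{(a)}^0$, and therefore $[\pi_a(B_k^{(b)}),L_a^l]\in\E_{(a)}^0$; in particular $\pi_a$ fixes it. This is the easy half, and it already covers the diagonal cases $a=b$, where $B_k^{(a)}\in\E_{(a)}^0$ so that $\pi_a(B_k^{(a)})=B_k^{(a)}$.

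The substantive half is to show $\mathrm{ad}_{L_a^l}(\E_{(a)}H)\subseteq\E_{(a)}H$. I write a general element as $PH$ with $P\in\E_{(a)}$ and push $H$ through $L_a^l$ by iterating the intertwining relations \eqref{Hl1L2}. For $a=1$ these give $HL_1^l=L_1^l(\bm{m}+\bm{e})H$, so $[PH,L_1^l]=\big(PL_1^l(\bm{m}+\bm{e})-L_1^lP\big)H\in\E_{(1)}H$. For $a=2$, using $\Delta_2^{*-1}\Delta_2^*=1$ to telescope the intermediate factors, the first-order relation $HL_2=\Delta_2^*L_2(\bm{m}+\bm{e})\Delta_2^{*-1}H$ raises to $HL_2^l=\Delta_2^*L_2^l(\bm{m}+\bm{e})\Delta_2^{*-1}H$, whence $[PH,L_2^l]=\big(P\Delta_2^*L_2^l(\bm{m}+\bm{e})\Delta_2^{*-1}-L_2^lP\big)H\in\E_{(2)}H$; the prefactors lie in $\E_{(a)}$ because $\E_{(a)}^0\subseteq\E_{(a)}$ is closed under the products involved (note $\Delta_2^{*-1}=\sum_{j\ge1}\Lambda_2^j\in\E_{(2)}^0$). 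Applying $\pi_a$ annihilates these terms, and combining with the easy half yields \eqref{eq:pi(B),L=pi(B,L)}.

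I expect this last paragraph to be the main obstacle: the $a=2$ case in particular requires care in raising $HL_2=\Delta_2^*L_2(\bm{m}+\bm{e})\Delta_2^{*-1}H$ to the $l$-th power via the telescoping identity and in checking that the conjugated operator $\Delta_2^*L_2^l(\bm{m}+\bm{e})\Delta_2^{*-1}$ still belongs to $\E_{(2)}$. One must also be attentive to the completion of $\E_{(a)}$ in which $B_k^{(b)}$ and the brackets actually live, so that the decomposition of Proposition \ref{prop: sumdecom} and the projection $\pi_a$ apply to them verbatim.
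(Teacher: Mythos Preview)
Your argument is correct and is essentially the paper's own proof, phrased more conceptually: the paper sets $A_{k,l}^{a,b}=[\pi_a(B_k^{(b)}),L_a^l]-\pi_a([B_k^{(b)},L_a^l])$, observes $A_{k,l}^{a,b}\in\E_{(a)}^0$, and then, writing $\pi_a(B_k^{(b)})=B_k^{(b)}+D_k^{a,b}H$, uses \eqref{Hl1L2} exactly as you do to get $A_{k,l}^{a,b}\in\E_{(a)}H$, hence $A_{k,l}^{a,b}=0$. Your formulation ``$\mathrm{ad}_{L_a^l}$ preserves each summand of $\E_{(a)}^0\oplus\E_{(a)}H$, so commutes with $\pi_a$'' is the same computation, and your worries about the $a=2$ telescoping and about $\Delta_2^*L_2^l(\bm{m}+\bm{e})\Delta_2^{*-1}\in\E_{(2)}$ are harmless since $\Delta_2^*,\Delta_2^{*-1},L_2^l(\bm{m}+\bm{e})\in\E_{(2)}^0$.
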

\begin{proof}
It is obviously that \eqref{eq:pi(B),L=pi(B,L)} is correct when $a=b$. As for $a\neq b$, let us denote $A_{k,l}^{a,b}=[\pi_a(B_k^{(b)}),L_a^l]-\pi_a([B_k^{(b)},L_a^l])$, then $A_{k,l}^{a,b}\in \E_{(a)}^0$. If assume $\pi_a(B_k^{(b)})=B_k^{(b)}+D_k^{a,b}H$ and $\pi_a([B_k^{(b)},L_a^l])=[B_k^{(b)},L_a^l]+E_{k,l}^{a,b}H$, then $$A_{k,l}^{a,b}=D_k^{a,b}HL_a^l+(E_{k,l}^{a,b}-L_a^lD_k^{a,b})H.$$ So by \eqref{Hl1L2}, we can know $A_{k,l}^{a,b}\in\E_{(a)}^0\cap\E_{(a)}H=\{0\}$, which means $A_{k,l}^{a,b}=0$.
\end{proof}
\begin{lemma}\label{bk_1 and bl_2}$B_k^{(a)}$ satisfies the following relations,
$$\partial_{t_k^{(2)}}B_l^{(1)}=\pi_1([B_k^{(2)},B_l^{(1)}])_{1,\geq0},\quad \partial_{t_k^{(1)}}B_l^{(2)}=\pi_2([B_k^{(1)},B_l ^{(2)}])_{\Delta_2^*,\geq1}.$$
\end{lemma}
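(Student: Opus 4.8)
The plan is to prove both identities by the standard Zakharov--Shabat mechanism: differentiate the truncated powers $B_l^{(a)}=(L_a^l)_{\dots}$ along the "cross" times, push the derivative through the truncation, and then reduce everything to an order count governed by the explicit structure of the projections $\pi_a$ from \eqref{pi1La1}--\eqref{pi2la12}. I will carry out the first identity in full; the second is parallel under $1\leftrightarrow2$, with the grading in $\La_1$ replaced by the grading in $\Delta_2^*$ and $(\cdot)_{1,\geq0}$ replaced by $(\cdot)_{\Delta_2^*,\geq1}$.

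First I would differentiate $B_l^{(1)}=(L_1^l)_{1,\geq0}$ along $t_k^{(2)}$. Since the truncation $(\cdot)_{1,\geq0}$ commutes with $\p_{t_k^{(2)}}$, and since the Lax equation \eqref{particalLb} (with $a=2,b=1$) extends from $L_1$ to $L_1^l$ because $\p_{t_k^{(2)}}$ and $[\pi_1(B_k^{(2)}),\,\cdot\,]$ are both derivations, I obtain $\p_{t_k^{(2)}}L_1^l=[\pi_1(B_k^{(2)}),L_1^l]$, hence $\p_{t_k^{(2)}}B_l^{(1)}=\big([\pi_1(B_k^{(2)}),L_1^l]\big)_{1,\geq0}$. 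Lemma \ref{pi(B),L=pi(B,L)} (with $a=1,b=2$) then turns the commutator of a projection into the projection of a commutator, $[\pi_1(B_k^{(2)}),L_1^l]=\pi_1([B_k^{(2)},L_1^l])$. Splitting $L_1^l=B_l^{(1)}+(L_1^l)_{1,<0}$ and using linearity of $\pi_1$, the identity reduces to the single claim that the tail contributes nothing, namely $\big(\pi_1([B_k^{(2)},(L_1^l)_{1,<0}])\big)_{1,\geq0}=0$. This is the crux of the argument. Writing $B_k^{(2)}=\sum_i e_i(\bm{m})\La_2^{-i}$ (only powers $\La_2^{\leq0}$, and no $\La_1$) and $(L_1^l)_{1,<0}=\sum_{j\geq1}f_j(\bm{m})\La_1^{-j}$, the commutator is a sum of monomials $\La_1^{p}\La_2^{q}$ with $p\leq-1$, $q\leq0$; since $\pi_1$ is left $\mathcal{B}[\La_1,\La_1^{-1}]$--linear, $\pi_1(\La_1^{p}\La_2^{q})=\La_1^{p}\,\pi_1(\La_2^{q})$, and by \eqref{pi1La1} each $\pi_1(\La_2^{q})$ with $q\leq0$ has $\La_1$--order $0$ (leading coefficient a function, remaining terms of order $\leq-1$). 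Multiplying by $\La_1^{p}$ with $p\leq-1$ leaves the $\La_1$--order $\leq-1$, so the nonnegative part vanishes, establishing the first relation.

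For the second relation the same four steps apply with $\pi_2$, $L_2$ and the $\Delta_2^*$--truncation, the analogous reduction being $\big(\pi_2([B_k^{(1)},(L_2^l)_{\Delta_2^*,\leq0}])\big)_{\Delta_2^*,\geq1}=0$. Here the main obstacle is that the truncation now lives in the $\Delta_2^*$--grading rather than in $\La_2$, so I would first translate into the $\La_2$--grading: because $(\Delta_2^*)^{-j}=(\Delta_2^{*-1})^{j}$ is a series in $\La_2^{\geq0}$ for $j\geq0$, the operator $(L_2^l)_{\Delta_2^*,\leq0}$ contains only nonnegative powers of $\La_2$, while $B_k^{(1)}$ is a polynomial in $\La_1$ with no $\La_2$. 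Thus the commutator is supported on $\La_2^{\geq0}$, and since $\pi_2$ is left $\La_2$--linear and, by \eqref{pi2la12}, each $\pi_2(\La_1^i)$ has lowest $\La_2$--order $0$, the projection $\pi_2$ of the commutator is again supported on $\La_2^{\geq0}$. Finally I would invoke the dictionary $\La_2^0=(\Delta_2^*)^0$ and $(\Delta_2^*+1)^{-j}=\sum_{s\geq0}\binom{-j}{s}(\Delta_2^*)^{-j-s}$, which shows that every $\La_2^{j}$ with $j\geq1$ is a series in strictly negative powers of $\Delta_2^*$; hence an operator supported on $\La_2^{\geq0}$ has no positive $\Delta_2^*$--component, so $(\cdot)_{\Delta_2^*,\geq1}$ annihilates it. I expect this order/grading estimate to be the only genuine difficulty; the remainder is routine Zakharov--Shabat bookkeeping, and the subtle point is precisely that the two cases use different gradings ($\La_1$ for $a=1$, $\Delta_2^*$ for $a=2$), so the second case requires the explicit $\La_2$--to--$\Delta_2^*$ conversion while the first is a clean order-in-$\La_1$ count.
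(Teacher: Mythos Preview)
Your proposal is correct and follows essentially the same route as the paper: differentiate $B_l^{(a)}$ via the Lax equation \eqref{particalLb}, invoke Lemma~\ref{pi(B),L=pi(B,L)} to swap $\pi_a$ with the commutator, split $L_a^l$ into its truncated and tail parts, and kill the tail by an order count based on the left $\E_{(a)}^0$--linearity of $\pi_a$ together with the explicit formulas \eqref{pi1La1}--\eqref{pi2la12}. The only cosmetic difference is that for the second identity the paper states the vanishing directly as $(\pi_2(\La_1^i\Delta_2^{*-j}))_{\Delta_2^*,\geq1}=0$ for $i,j\geq0$, whereas you first pass to the $\La_2$--grading and then convert back to $\Delta_2^*$; these are the same computation.
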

\begin{proof}
Firstly by \eqref{particalLb} and Lemma \ref{pi(B),L=pi(B,L)}, we know
  $$\partial_{t_k^{(2)}}B_l^{(1)}=[\pi_1(B_k^{(2)}),L_1^{l}]_{1,\geq0}=\pi_1([B_k^{(2)},L_1^{l}])_{1,\geq0}.$$
Next for $i,j\geq 0$, there exists $A_j\in \E_{(1)}$, such that $$\La_1^{-i-1}\La_2^{-j}=\La_1^{-i-1}(\pi_1(\La_2^{-j})+A_jH)=\La_1^{-i-1}\pi_1(\La_2^{-j})+\La_1^{-i-1}A_jH,$$ which means that $\pi_1(\La_1^{-i-1}\La_2^{-j})=\La_1^{-i-1}\pi_1(\La_2^{-j})$. So by \eqref{lemmapi1pi2}, we find the highest order of $\pi_1(\La_1^{-i-1}\La_2^{-j})$ is $-i-1$. Thus $\pi_1([B_k^{(2)},(L_1^{l})_{1,\leq-1}])_{1,\geq0}=0$, and $\partial_{t_k^{(2)}}B_l^{(1)}=\pi_1([B_k^{(2)},B_l^{(1)}])_{1,\geq0}$. 

Similarly by $\partial_{t_k^{(1)}}B_l^{(2)}=\pi_2([B_k^{(1)},L_2^l])_{\Delta_2^*,\geq1}$ and $(\pi_2(\La_1^i\Delta_2^{*-j}))_{\Delta_2^*,\geq1}=0 \ (i,j\geq0)$, we can prove $\partial_{t_k^{(1)}}B_l^{(2)}=\pi_2([B_k^{(1)},B_l ^{(2)}])_{\Delta_2^*,\geq1}$. 
\end{proof}

\begin{proposition}\label{bk_1,bl_2}
$B_k^{(a)}$ satisfies the following relations,
$$\partial_{t_k^{(a)}}B_l^{(b)}-\partial_{t_l^{(b)}}B_k^{(a)}+[B_l^{(b)},B_k^{(a)}]\in\E H,$$
with $\E=\mathcal{B}[\La_1,\La_2,\La_1^{-1},\La_2^{-1}].$
\end{proposition}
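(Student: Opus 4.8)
The quantity to control is $Z_{k,l}^{a,b}:=\partial_{t_k^{(a)}}B_l^{(b)}-\partial_{t_l^{(b)}}B_k^{(a)}+[B_l^{(b)},B_k^{(a)}]$, and the plan is to separate the diagonal cases $a=b$ from the mixed cases $a\neq b$. When $a=b$ one has $\pi_a(B_k^{(a)})=B_k^{(a)}$ (since $B_k^{(1)}\in\E_{(1)}^0$ and $B_k^{(2)}\in\E_{(2)}^0$), so \eqref{particalLb} reduces to the honest Lax equation $\partial_{t_k^{(a)}}L_a=[B_k^{(a)},L_a]$ with $B_k^{(a)}$ the $(\cdot)_{1,\geq0}$ (resp. $(\cdot)_{\Delta_2^*,\geq1}$) truncation of $L_a^k$. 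Because each of these truncations splits $\E_{(a)}$ into a direct sum of two associative subalgebras, the classical Sato/Zakharov--Shabat computation applies verbatim and gives $Z_{k,l}^{a,a}=0\in\E H$. Thus the whole content of the proposition is the mixed case, which I would treat with $a=1,b=2$, the case $a=2,b=1$ following by antisymmetry.

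For $a=1,b=2$ I would first use Lemma \ref{bk_1 and bl_2} to eliminate the time derivatives: writing $W=[B_k^{(1)},B_l^{(2)}]\in\E$, we get $\partial_{t_k^{(1)}}B_l^{(2)}=\pi_2(W)_{\Delta_2^*,\geq1}$ and $\partial_{t_l^{(2)}}B_k^{(1)}=-\pi_1(W)_{1,\geq0}$, so that $Z_{k,l}^{1,2}=\pi_2(W)_{\Delta_2^*,\geq1}+\pi_1(W)_{1,\geq0}-W$, a finite Laurent operator in $\E$. The structural input comes from the compatibility of the $H$--flow: since $\rho$ (and hence $H$) is scalar and the vector fields $\partial_{t_k^{(a)}}$, $\partial_{t_l^{(b)}}$ commute, cross--differentiating \eqref{particalH} and cancelling the mixed terms yields the gauge--covariance identity $H\,Z_{k,l}^{a,b}=\tilde{Z}_{k,l}^{a,b}\,H$, where $\tilde{Z}_{k,l}^{a,b}$ is the same combination with every $B_j^{(c)}$ replaced by $C_j^{(c)}$. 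In particular $H\,Z_{k,l}^{1,2}\in\E H$.

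To upgrade this to $Z_{k,l}^{1,2}\in\E H$ I would combine two ingredients. First, an intersection lemma $\E_{(a)}H\cap\E=\E H$: if $Z=R\,H$ with $R\in\E_{(a)}$ and $Z\in\E$ is finite, then the triangular coefficient recursion used in Proposition \ref{prop: sumdecom} (matching powers of $\Lambda_1,\Lambda_2$ in $R(\Lambda_1\Delta_2+\rho)$, starting from the top $\Lambda_2$--degree and descending) forces $R$ to terminate, i.e. $R\in\E$. Hence it suffices to prove $Z_{k,l}^{1,2}\in\E_{(1)}H$, equivalently $\pi_1(Z_{k,l}^{1,2})=0$. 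Second, I would compute $\pi_1(Z_{k,l}^{1,2})=\pi_1\big(\pi_2(W)_{\Delta_2^*,\geq1}\big)-\pi_1(W)_{1,<0}$ and show the two terms cancel, using the explicit recursion relations \eqref{pilamk+1}--\eqref{pi2la12} (or Lemma \ref{Lemma:pi1pi2la}) to rewrite $\pi_1\circ\pi_2$ on monomials, with $H\,Z=\tilde{Z}\,H$ pinning down the residual $\E_{(1)}^0$--freedom that an order count alone leaves open.

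The hard part is precisely this last vanishing, $\pi_1(Z_{k,l}^{1,2})=0$, in the mixed case. The obstruction is that $\E_{(a)}H$ is only a \emph{left} ideal, so neither the undressed bracket $[B_l^{(2)},B_k^{(1)}]$ nor the $t$--derivative of the projection defect $\pi_a(B)-B\in\E_{(a)}H$ is automatically a left multiple of $H$; one must check by hand that all contributions failing to land in $\E_{(a)}^0$ reassemble, through the intertwining relations \eqref{Hl1L2} and the recursion for $\pi_a$, into $\E H$. I also expect the bare identity $H\,Z=\tilde{Z}\,H$ to be insufficient on its own (for instance $Z=\tilde{Z}=\mathrm{const}$ solves it without lying in $\E H$), so it must be used together with the order and support information supplied by Lemma \ref{bk_1 and bl_2}.
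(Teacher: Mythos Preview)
Your treatment of the diagonal case $a=b$ is the standard Zakharov--Shabat computation and coincides with the paper's. For the mixed case you also make the same first move as the paper: with $W=[B_k^{(1)},B_l^{(2)}]$, Lemma~\ref{bk_1 and bl_2} reduces the claim to
\[
W-\pi_1(W)_{1,\geq0}-\pi_2(W)_{\Delta_2^*,\geq1}\in\E H.
\]
From this point on, however, you take a substantially longer detour than the paper, and you leave the decisive step ($\pi_1(Z_{k,l}^{1,2})=0$) unproved. The gauge identity $HZ=\tilde Z H$ you extract from \eqref{particalH} is correct, but---as you yourself note---it does not by itself force $Z\in\E H$; your proposed use of it to ``pin down the residual $\E_{(1)}^0$--freedom'' is a plan, not an argument, and the intersection lemma $\E_{(1)}H\cap\E=\E H$ adds yet another layer that turns out to be unnecessary.

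The paper's route is direct and avoids all of this machinery. One rewrites $H=\La_1\Delta_2+\rho$ as
\[
\La_1\Delta_2^{*}=\rho(\bm{m}-\bm{e}_2)\Delta_2^{*}+\rho(\bm{m}-\bm{e}_2)-\La_2^{-1}H(\bm{m}),
\]
and then proves by induction on $i,j\geq1$ that every monomial satisfies
\[
\La_1^{i}\Delta_2^{*j}=\pi_1(\La_1^{i}\Delta_2^{*j})_{1,\geq0}+\pi_2(\La_1^{i}\Delta_2^{*j})_{\Delta_2^*,\geq1}+A_{ij}H,\qquad A_{ij}\in\E.
\]
Since $W$ is a finite $\mathcal{B}$--linear combination of such monomials (and $\pi_1$, $\pi_2$, $(\cdot)_{1,\geq0}$, $(\cdot)_{\Delta_2^*,\geq1}$ are all left $\mathcal{B}$--linear), the displayed containment for $W$ follows immediately. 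No cross--differentiation of \eqref{particalH}, no intersection lemma, and no computation of $\pi_1\circ\pi_2$ are needed. In effect the single identity for $\La_1\Delta_2^*$ replaces the entire block of structural input you were assembling; this is the idea your proposal is missing.
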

\begin{proof}
If denote $D_{k,l}^{(a,b)}=\partial_{t_k^{(a)}}B_l^{(b)}-\partial_{t_l^{(b)}}B_k^{(a)}+[B_l^{(b)},B_k^{(a)}]$, then $a=b$, $D_{k,l}^{(a,b)}=0$, which can be proved by direct computation, e.g.
\begin{align*}
      &\partial_{t_k^{(1)}}B_l^{(1)}-\partial_{t_k^{(1)}}B_k^{(1)}+[B_l^{(1)},B_k^{(1)}]\\
      &=[B_k^{(1)},L_1^l]_{1,\geq0}-[B_l^{(1)},L_1^k]_{1,\geq0}+[B_l^{(1)},B_k^{(1)}]\\
      &=[B_k^{(1)},B_l^{(1)}]_{1,\geq0}+[B_k^{(1)},(L_1^l)_{1,<0}]_{1,\geq0}-[B_l^{(1)},L_1^k]_{1,\geq0}+[B_l^{(1)},B_k^{(1)}]\\
      &=[B_k^{(1)},(L_1^l)_{1,<0}]_{1,\geq0}-[B_l^{(1)},L_1^k]_{1,\geq0}\\
      &=[L_1^k-(L_1^k)_{1,<0},(L_1^l)_{1,<0}]_{1,\geq0}+[L_1^k,B_l^{(1)}]_{1,\geq0}\\
      &=[L_1^k,(L_1^l)_{1,<0}+B_l^{(1)}]_{1,\geq0}=[L_1^k,L_1^l]_{1,\geq0}=0.
    \end{align*}

As for $a\neq b$, we only need to show $D_{k,l}^{(1,2)}\in\E H$, since $D_{k,l}^{(a,b)}=-D_{l,k}^{(b,a)}$. In fact by using $\La_1\Delta_2^*=\rho(\bm{m}-\bm{e}_2)\Delta_2^{*}+\rho(\bm{m}-\bm{e}_2)-\Lambda_2^{-1}H(\bm{m})$, we can know by induction on $i,j>0$ that $$\La_1^j\Delta_2^{*j}=\pi_1(\La_1^j\Delta_2^{*j})_{1,\geq0}+\pi_2(\La_1^j\Delta_2^{*j})_{\Delta_2^*,\geq1}+A_{ij}H,$$
where $A_{ij}\in\E$. Thus we have $$[B_k^{(1)},B_l^{(2)}]-\pi_1([B_k^{(1)},B_l^{(2)}])_{1,\geq0}-\pi_2([B_k^{(1)},B_l^{(2)}])_{\Delta_2^*,\geq1}\in\E H.$$ Further by Lemma \ref{bk_1 and bl_2}, we can finally obtain $D_{k,l}^{(1,2)}\in\E H$
\end{proof}
\begin{corollary}\label{abc=12}
For $a,b,c=1,2,$
\begin{align}\label{eqabc=12}
 \partial_{t_k^{(a)}}\pi_b(B_l^{(c)})-\partial_{t_l^{(c)}}\pi_b(B_k^{(a)})+[\pi_b(B_l^{(c)}),\pi_b(B_k^{(a)})]=0.
\end{align}
Thus $[\partial_{t_k^{(a)}},\partial_{t_l^{(c)}}]=0$ on $L_i$ and $H$.
\end{corollary}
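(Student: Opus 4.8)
The plan is to prove the zero--curvature identity \eqref{eqabc=12} first, and then read off the commutativity of the flows on $L_i$ and $H$ as essentially formal consequences. Throughout I write $Z$ for the left--hand side of \eqref{eqabc=12} and abbreviate $D_{k,l}^{(a,c)}=\partial_{t_k^{(a)}}B_l^{(c)}-\partial_{t_l^{(c)}}B_k^{(a)}+[B_l^{(c)},B_k^{(a)}]$, which lies in $\E H$ by Proposition \ref{bk_1,bl_2}; since $\E\subseteq\E_{(b)}$ one has $\E H\subseteq\E_{(b)}H$ for $b=1,2$, a fact I will use repeatedly.

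The first observation is that $Z\in\E_{(b)}^0$. Indeed each $\pi_b(B_k^{(a)})$ lies in $\E_{(b)}^0$ by the very definition of the projection, and $\E_{(b)}^0$ --- being $\mathcal B((\La_1^{-1}))$ for $b=1$ and $\mathcal B((\La_2))$ for $b=2$ --- is an algebra that is stable under the derivations $\partial_{t_k^{(a)}}$ (these only differentiate coefficients) and under commutators; hence $Z\in\E_{(b)}^0$. To finish it therefore suffices to show $Z\in\E_{(b)}H$, because $\E_{(b)}^0\cap\E_{(b)}H=\{0\}$ by the direct sum decomposition of Proposition \ref{prop: sumdecom}.

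To exhibit $Z$ in $\E_{(b)}H$ I would write $\pi_b(B_l^{(c)})=B_l^{(c)}-P_l^{(c)}H$ and $\pi_b(B_k^{(a)})=B_k^{(a)}-P_k^{(a)}H$ with $P_l^{(c)},P_k^{(a)}\in\E_{(b)}$ --- this is precisely the decomposition defining $\pi_b$ --- substitute into $Z$ and expand. The ``leading'' part reproduces $D_{k,l}^{(a,c)}\in\E H$. The genuinely delicate point is the fate of the terms in which $H$ sits in the middle of a product. Differentiating $P_l^{(c)}H$ produces $P_l^{(c)}\partial_{t_k^{(a)}}H=P_l^{(c)}\big(C_k^{(a)}H-HB_k^{(a)}\big)$ by \eqref{particalH}, contributing $+\,P_l^{(c)}HB_k^{(a)}$, while the commutator contains $-[P_l^{(c)}H,B_k^{(a)}]$ which contributes $-\,P_l^{(c)}HB_k^{(a)}$; these cancel, and symmetrically the $P_k^{(a)}HB_l^{(c)}$ terms cancel. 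After this cancellation every surviving term is manifestly of the form $(\,\cdot\,)H$ with prefactor in $\E_{(b)}$ (here one uses that $\E_{(b)}$ is an algebra), so $Z-D_{k,l}^{(a,c)}\in\E_{(b)}H$ and hence $Z\in\E_{(b)}H$. Combined with $Z\in\E_{(b)}^0$ this gives $Z=0$, which is \eqref{eqabc=12}. I expect this cancellation --- which hinges on the precise shape $\partial_{t_k^{(a)}}H=C_k^{(a)}H-HB_k^{(a)}$ of \eqref{particalH} --- to be the main obstacle, together with the bookkeeping confirming that all prefactors really land in $\E_{(b)}$.

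Finally I would deduce the stated commutativity. On $L_b$ it is immediate: differentiating $\partial_{t_k^{(a)}}L_b=[\pi_b(B_k^{(a)}),L_b]$ (from \eqref{particalLb}) once more, subtracting the two orders and applying the Jacobi identity gives $[\partial_{t_k^{(a)}},\partial_{t_l^{(c)}}]L_b=[Z,L_b]=0$. For $H$ I would run the parallel computation from \eqref{particalH}: Leibniz yields $[\partial_{t_k^{(a)}},\partial_{t_l^{(c)}}]H=(\mathrm{ZC}_C)\,H+H\,D_{k,l}^{(a,c)}$, where $\mathrm{ZC}_C$ is the zero--curvature combination of the $C_k^{(a)}$'s and the mixed $C\!H\!B$--terms again cancel in pairs. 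Since $C_k^{(a)}$ is the shift (and, for $a=2$, the $\Delta_2^*$--conjugate) of $B_k^{(a)}(\bm m+\bm e)$, the quantity $\mathrm{ZC}_C$ is itself in $\E H$ by Proposition \ref{bk_1,bl_2}; when the two flow indices agree both $\mathrm{ZC}_C$ and $D_{k,l}^{(a,a)}$ vanish identically and the result is instantaneous, while in the mixed case one uses that the left--hand side equals $[\partial_{t_k^{(a)}},\partial_{t_l^{(c)}}]\rho$ --- a scalar function, because only $\rho$ in $H=\La_1\Delta_2+\rho$ carries $t$--dependence --- so that, comparing against an $\E H$--valued right--hand side through the decomposition of Proposition \ref{prop: sumdecom}, the scalar is forced to be zero. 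I regard this last reduction for $H$ as the step most in need of care.
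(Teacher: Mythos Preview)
Your argument for \eqref{eqabc=12} is correct and is essentially the paper's own proof: both observe $Z\in\E_{(b)}^0$, expand through $\pi_b(B)=B+(\text{element of }\E_{(b)})H$, use \eqref{particalH} to kill the ``$H$--in--the--middle'' terms, invoke Proposition \ref{bk_1,bl_2} for $D_{k,l}^{(a,c)}\in\E H$, and conclude via the direct sum of Proposition \ref{prop: sumdecom}. Your commutativity argument on $L_b$ is the standard Jacobi reduction and is fine; the paper itself only states this step without detail.

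There is one genuine wrinkle in your treatment of $[\partial_{t_k^{(a)}},\partial_{t_l^{(c)}}]H$ in the mixed case. You claim that $\mathrm{ZC}_C\in\E H$ ``by Proposition \ref{bk_1,bl_2}'' because the $C_k^{(a)}$ are shifts/conjugates of the $B_k^{(a)}$. But $C_k^{(1)}(\bm m)=\La_1\La_2\,B_k^{(1)}(\bm m)\,(\La_1\La_2)^{-1}$ while $C_l^{(2)}(\bm m)=\Delta_2^*\La_1\La_2\,B_l^{(2)}(\bm m)\,(\La_1\La_2)^{-1}\Delta_2^{*-1}$ are conjugates by \emph{different} operators, and $\Delta_2^*$ does not commute with the $\bm m$--dependent coefficients of $B_k^{(1)}$; so $\mathrm{ZC}_C$ is not a single conjugate of $D_{k,l}^{(1,2)}$, and the appeal to Proposition \ref{bk_1,bl_2} does not directly yield $\mathrm{ZC}_C\in\E H(\bm m)$ (at best it would give a multiple of $H(\bm m+\bm e)$, which is not the same operator). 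Fortunately you do not need this claim. Since each $C_k^{(a)}$ is a Laurent polynomial in $\La_1,\La_2$ (in particular $C_l^{(2)}=\Delta_2^*B_l^{(2)}(\bm m+\bm e)\Delta_2^{*-1}\in\mathcal B[\La_2^{-1}]$ because $B_l^{(2)}$ has only $\Delta_2^{*j}$ with $j\ge 1$), one has $\mathrm{ZC}_C\in\E$, hence $(\mathrm{ZC}_C)H\in\E H$; and $H\,D_{k,l}^{(a,c)}=H Q H\in\E H$ with $Q\in\E$. Thus the whole right--hand side lies in $\E H\subseteq\E_{(b)}H$, the left--hand side is the scalar $[\partial_{t_k^{(a)}},\partial_{t_l^{(c)}}]\rho\in\E_{(b)}^0$, and Proposition \ref{prop: sumdecom} forces it to vanish. (Incidentally, the sign in your displayed identity should read $(\mathrm{ZC}_C)H-H\,D_{k,l}^{(a,c)}$.) With this correction your proof is complete and matches the paper's approach.
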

\begin{proof}
If denote the left hand side of \eqref{eqabc=12} to be $A_{k,l}^{(a,b,c)}$, then $A_{k,l}^{(a,b,c)}\in\E_{(b)}^0$. On the other hand, if assume $\pi_b(B_l^{(c)})=B_l^{(c)}+C_l^{(b,c)}H$, with $C_l^{(b,c)}\in\E_{(b)}$, then 
\begin{align*}
  A_{k,l}^{(a,b,c)}=&\partial_{t_k^{(a)}}B_l^{(c)}-\partial_{t_l^{(c)}}B_k^{(a)}+[B_l^{(c)},B_k^{(a)}]\\
  &+(\partial_{t_k^{(a)}}C_l^{(b,c)}-\partial_{t_l^{(c)}}C_k^{(b,c)}+B_l^{(c)}C_k^{(b,a)}-B_k^{(a)}C_l^{(b,c)}+C_l^{(b,c)}HC_k^{(b,a)}-C_k^{(b,a)}HC_l^{(b,c)})H\\
  &+C_l^{(b,c)}(\partial_{t_k^{(a)}}H+HB_k^{(a)})-C_k^{(b,a)}(\partial_{t_l^{(c)}}H+HB_l^{(c)}).
\end{align*}
Further by \eqref{particalH} and Proposition \ref{bk_1,bl_2}, we can know $A_{k,l}^{(a,b,c)}\in\E_bH$. So finally $A_{k,l}^{(a,b,c)}\in\E_bH\bigcap\E_{(b)}^0=\{0\}$, which means $A_{k,l}^{(a,b,c)}=0$.

As for $[\partial_{t_k^{(a)}},\partial_{t_l^{(c)}}]=0$, it can be proved directly by \eqref{particalLb} \eqref{particalH} and \eqref{eqabc=12}.
\end{proof}

\begin{proposition}\label{proexsit of wave operates}
Given 2--Toda Lax operators $L_1(\Lambda_1)=\La_1+\sum_{i=0}^{+\infty}u_i^{(1)}\Lambda_1^{-i}$, $L_2(\Lambda_2)=u_{-1}^{(2)}\La_2^{-1}+\sum_{i=0}^{+\infty}u_i^{(2)}\La_2^i$ and the operator $H=\La_1\Delta_2+\rho$ satisfying \eqref{particalLb}--\eqref{particalH}, there exist wave operators $S_1=1+\sum_{k=1}^{+\infty}a_k^{(1)}\Lambda_1^{-k}$ and $S_2=a_0^{(2)}+\sum_{k=1}^{+\infty}a_k^{(2)}\Lambda_2^{-k} \ (a_0^{(2)}\neq0)$ such that
\begin{align*}
&L_1=S_1\Lambda_1S_1^{-1}, \quad L_2=S_2\Lambda_2^{-1}S_2^{-1},\quad H=-\Lambda_1\Lambda_2S_1\Delta_2^{*}S_1^{-1}=-\Lambda_1\Delta_2S_2\Delta_1^{*}S_2^{-1},\\
&\partial_{t_k^{(1)}}S_1=B_k^{(1)}S_1-S_1\Lambda_1^k,\quad \partial_{t_k^{(1)}}S_2
=\left(B_k^{(1)}S_1\Delta_2^{*-1}S_1^{-1}\right)_{1,[0]}\Delta_2^*S_2,\\
 &\partial_{t_k^{(2)}}S_1
=\left(B_k^{(2)}S_2\Delta_1^{-1}
S_2^{-1}\Delta_2^{*-1}\right)_{2,[0]}S_1,\quad \partial_{t_k^{(2)}}S_2=B_k^{(2)}S_2-S_2\Lambda_2^{-k}.
\end{align*}
\end{proposition}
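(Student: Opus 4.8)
The plan is to build $S_1$ and $S_2$ in two stages: first produce admissible dressing operators at a fixed value of the times by a purely algebraic dressing of $L_1$ and $L_2$, and then propagate them in all the variables $t_k^{(a)}$ by the flow equations themselves, using the compatibility already secured in Proposition \ref{bk_1,bl_2} and Corollary \ref{abc=12} to guarantee a single consistent solution. First I would dress $L_1$: writing $L_1 S_1 = S_1\Lambda_1$ and comparing coefficients of $\Lambda_1^{-j}$ gives, for each $k$, a first order difference equation in $m_1$ for $a_k^{(1)}$ whose right hand side involves only the $u_i^{(1)}$ and the previously determined $a_1^{(1)},\dots,a_{k-1}^{(1)}$; this recursion is triangular and solvable, so $S_1=1+\sum_{k\ge1}a_k^{(1)}\Lambda_1^{-k}$ exists and is unique up to right multiplication by a gauge operator $C_1$ with $[C_1,\Lambda_1]=0$. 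The same computation applied to $L_2 S_2 = S_2\Lambda_2^{-1}$ produces $S_2 = a_0^{(2)}+\sum_{k\ge1}a_k^{(2)}\Lambda_2^{-k}$ with $a_0^{(2)}\ne 0$, unique up to a gauge $C_2$ with $[C_2,\Lambda_2]=0$, and by construction $L_1=S_1\Lambda_1 S_1^{-1}$, $L_2 = S_2\Lambda_2^{-1}S_2^{-1}$.

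For the same component flows I would argue in the standard Sato--Wilson way. Since $\pi_a(B_k^{(a)})=B_k^{(a)}$, equation \eqref{particalLb} reads $\partial_{t_k^{(a)}}L_a=[B_k^{(a)},L_a]$. Differentiating $L_a=S_a\Lambda_a^{\epsilon}S_a^{-1}$ (with $\epsilon=+1$ for $a=1$ and $\epsilon=-1$ for $a=2$) shows that $\partial_{t_k^{(a)}}S_a\cdot S_a^{-1}-B_k^{(a)}+L_a^k$ commutes with $L_a$ and has strictly negative order; the kernel of $\mathrm{ad}_{L_a}$ in negative orders is spanned by the $L_a^{-j}$ with $m_a$-independent coefficients, i.e. it is precisely the residual gauge freedom $C_a$. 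I would therefore impose $\partial_{t_k^{(a)}}S_a=B_k^{(a)}S_a-S_a\Lambda_a^{\epsilon k}$ as the definition of the $t_k^{(a)}$-evolution of the gauge, then check in one line that this evolution preserves $L_a=S_a\Lambda_a^{\epsilon}S_a^{-1}$ and that the several flows are mutually compatible thanks to Corollary \ref{abc=12}; this fixes $S_1,S_2$ for all times from the base point data.

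The genuinely two component content, and the step I expect to be the main obstacle, is tying the two dressings together through $H$ and thereby producing the cross component flows. Here I would exploit \eqref{Hl1L2}: from $HL_1=L_1(\bm{m}+\bm{e})H$ one checks that $S_1(\bm{m}+\bm{e})^{-1}HS_1$ commutes with $\Lambda_1$, and combining this with the explicit shape $H=\Lambda_1\Delta_2+\rho$ shows that the leftover gauge $C_1$ can be chosen so that $H=-\Lambda_1\Lambda_2 S_1\Delta_2^{*}S_1^{-1}$; the analogous use of $HL_2=\Delta_2^{*}L_2(\bm{m}+\bm{e})\Delta_2^{*-1}H$ chooses $C_2$ so that $H=-\Lambda_1\Delta_2 S_2\Delta_1^{*}S_2^{-1}$. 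The delicate point is that these two gauge fixings must hold simultaneously for one pair $(S_1,S_2)$, so that the single operator $H$ admits both representations at once, recovering \eqref{H=Lam1=lam2} read backwards; I would establish this by matching leading terms, using that $\rho$, $\Delta_1^{*}$ and $\Delta_2^{*-1}$ are uniquely pinned down by the top coefficients of $S_1$ and $S_2$, so that the two conjugated forms of $H$ are forced to coincide.

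Once both $H$-relations hold, the cross flows $\partial_{t_k^{(1)}}S_2$ and $\partial_{t_k^{(2)}}S_1$ are obtained by differentiating these relations together with \eqref{particalH} and then rewriting the outcome with Corollary \ref{corollary pi1pi2} to recognise the projections $\pi_a$; the consistency of the resulting cross flows with $\partial_{t_k^{(a)}}L_b=[\pi_b(B_k^{(a)}),L_b]$ in \eqref{particalLb} then follows from Lemma \ref{pi(B),L=pi(B,L)} and the compatibility already in hand. I expect the hardest part to be exactly this simultaneous gauge alignment of $S_1$ and $S_2$ through $H$, which has no counterpart in the one component dressing theory and is where the operator $H$ does the essential bridging work between the two components.
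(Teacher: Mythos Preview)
Your overall route differs from the paper's. Instead of dressing $L_1,L_2$ first and then adjusting gauges to fit $H$, the paper writes down one overdetermined linear system for $(S_1,S_2)$: the four time evolutions $\partial_{t_k^{(1)}}S_1=B_k^{(1)}S_1-S_1\Lambda_1^k$, $\partial_{t_k^{(2)}}S_2=B_k^{(2)}S_2-S_2\Lambda_2^{-k}$, $\partial_{t_k^{(a)}}S_{3-a}=\pi_{3-a}(B_k^{(a)})S_{3-a}$, \emph{together with} the discrete shift relations $\Lambda_2(S_1)=(\Lambda_2-\Lambda_1^{-1}H)S_1$ and $\Lambda_1(S_2)=(\Lambda_1+\Lambda_2^{-1}\Delta_2^{*-1}H)S_2$, which are exactly the $H$-identities rewritten. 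Compatibility of the whole package is verified from \eqref{particalH} and Corollary~\ref{abc=12}, so arbitrary algebraic dressings $\overline{S}_a$ at $t=0$ propagate to a unique $(\widehat S_1,\widehat S_2)$. The dressing property $L_a=\widehat S_a\Lambda_a^{\pm1}\widehat S_a^{-1}$ is then \emph{proved} a posteriori: $\widehat W_a:=L_a\widehat S_a-\widehat S_a\Lambda_a^{\pm1}$ is shown, using \eqref{Hl1L2}, to satisfy the same compatible linear system with zero initial data, hence $\widehat W_a=0$. No gauge alignment ever enters.

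In your scheme the step you single out is indeed incomplete, but not quite for the reason you emphasise. The gauges $C_1$ and $C_2$ act on $S_1$ and $S_2$ separately, and both $H$-representations you seek are representations of the \emph{given} operator $H$; once $C_1$ is chosen so that $S_1(\bm m+\bm e)^{-1}HS_1=\Lambda_1\Delta_2$ and $C_2$ so that the analogous relation holds for $S_2$, both expressions equal $H$ and hence each other, so there is no simultaneity obstruction. The genuine gap lies elsewhere. Making $S_1(\bm m+\bm e)^{-1}HS_1=\Lambda_1\Delta_2$ via $C_1$ is not a matter of ``matching leading terms'' but a first-order recursion in $m_2$ determining $C_1(\bm m+\bm e_2)$ from $C_1(\bm m)$; you then have to show that this $m_2$-dependence of $S_1$ is compatible with the $t$-flows you have already imposed, and that the cross flow $\partial_{t_k^{(2)}}S_1$ you only extract \emph{afterwards} commutes with everything else. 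These are precisely the checks $\partial_{t_k^{(a)}}(\Lambda_b(S_c))=\Lambda_b(\partial_{t_k^{(a)}}S_c)$ and $[\partial_{t_k^{(a)}},\partial_{t_l^{(b)}}]S_c=0$ that the paper isolates and verifies directly via \eqref{particalH} and Corollary~\ref{abc=12}; in your staged construction they are needed at the same point but are neither stated nor established.
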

\begin{proof}
Firstly it is obviously that there exist $\overline{S}_1=1+\sum_{i=1}^{+\infty}b_i^{(1)}\La_1^{-i}$ and $\overline{S}_2=b_0^{(2)}+\sum_{i=1}^{+\infty}b_i^{(2)}\La_2^i \ (b_0^{(2)}\neq0) $ such that $L_1=\overline{S}_1\Lambda_1\overline{S}_1^{-1}$ and  $L_2=\overline{S}_2\Lambda_2^{-1}\overline{S}_2^{-1}$.

Next consider the following system of $S_1$ and $S_2$ 
\begin{equation}\label{particalS1S2L12}
\begin{cases}
\partial_{t_k^{(1)}}S_1=B_k^{(1)}S_1-S_1\Lambda_1^k,\quad \partial_{t_k^{(1)}}S_2=\pi_2(B_k^{(1)})S_2,\\
\partial_{t_k^{(2)}}S_1=\pi_1(B_k^{(2)})S_1,\quad \partial_{t_k^{(2)}}S_2=B_k^{(2)}S_2-S_2\Lambda_2^{-k},\\
\La_2(S_1)=(\La_2-\La_1^{-1}H)S_1,\quad \La_1(S_2)=(\La_1+\La_2^{-1}\Delta_2^{*-1}H)S_2,\\
S_1|_{t=0}=\widetilde{S}_1|_{t=0}, \quad S_2|_{t=0}=\widetilde{S}_2|_{t=0},
\end{cases}
\end{equation}
where $S_a$ has the form of \eqref{SaandwideSa}, $B_k^{(1)}=(L_1^k)_{1,\geq 0},$ $B_k^{(2)}=(L_2^k)_{\Delta_2^*,\geq 1}$ and $\pi_a$ is the projection $\E_{(a)}=\E_{(a)}^0\oplus\E_{(a)}H\to\E_{(a)}^0$.
By \eqref{particalH} and \eqref{eqabc=12}, we can find $[\partial_{t_k^{(a)}},\partial_{t_l^{(b)}}]S_c=0$ and $\partial_{t_k^{(a)}}(\La_b(S_c))=\La_b(\partial_{t_k^{(a)}}(S_c))$. Thus, we can know the system \eqref{particalS1S2L12} has a unique solution $\widehat{S}_1=1+\sum_{i=1}^{+\infty}\widehat{b}_i^{(1)}\La_1^{-i}$ and $\widehat{S}_2=\widehat{b}_0^{(2)}+\sum_{i=1}^{+\infty}\widehat{b}_i^{(2)}\La_2^i \ (\widehat{b}_0^{(2)}\neq0)$. 

If denote $\widehat{W}_1=L_1\widehat{S}_1-\widehat{S}_1\La_1$ and $\widehat{W}_2=L_2\widehat{S}_2-\widehat{S}_2\La_2^{-1}$, then we find $\widehat{W}_a\ (a=1,2)$ satisfies 
\begin{align}\label{particalW=BW-WLa}
  \partial_{t_k^{(a)}}\widehat{W}_a=B_k^{(a)}\widehat{W}_a-\widehat{W}_a\Lambda_a^{(3-2a)k},\quad \partial_{t_k^{(a)}}\widehat{W}_{3-a}=\pi_{3-a}(B_k^{(a)})\widehat{W}_{3-a},\quad a=1,2.
\end{align}
By using \eqref{Hl1L2}, we can know
\begin{align*}
  &\La_2(L_1)=(\La_2-\La_1^{-1}H)L_1\cdot\iota_{\La_1^{-1}}(\La_2-\La_1^{-1}H)^{-1},\\
  &\La_1(L_2)=(\La_1+\La_2^{-1}\Delta_2^{*-1}H)L_2\cdot\iota_{\La_2}(\La_1+\La_2^{-1}\Delta_2^{*-1}H)^{-1}.
\end{align*}
So based upon these relations, we can show that
\begin{align}\label{La1WLa2W}
  \La_2(\widehat{W}_1)=(\La_2-\La_1^{-1}H)\widehat{W}_1,\quad \La_1(\widehat{W}_2)=(\La_1+\La_2^{-1}\Delta_2^{*-1}H)\widehat{W}_2.
\end{align}
Therefore by similar reason as \eqref{particalS1S2L12}, there exists one unique solution for the following system of $W_1=\sum_{i=1}^{+\infty}w_i^{(1)}\La_1^{-i}$ and $W_2=\sum_{i=0}^{+\infty}w_i^{(2)}\La_2^i$ ($b_0^{(2)}$ can be zero)
\begin{equation}\label{particalV1V2La12}
\begin{cases}
 \partial_{t_k^{(a)}}W_a=B_k^{(a)}W_a-W_a\Lambda_a^{(3-2a)k},\quad \partial_{t_k^{(a)}}W_{3-a}=\pi_{3-a}(B_k^{(a)})W_{3-a},\quad a=1,2,\\
 \La_2(W_1)=(\La_2-\La_1^{-1}H)W_1,\quad \La_1(W_2)=(\La_1+\La_2^{-1}\Delta_2^{*-1}H)W_2,\\
  W_1|_{t=0}=W_2|_{t=0}=0.
\end{cases}
\end{equation}
Obviously by \eqref{particalW=BW-WLa} \eqref{La1WLa2W} and  $\widehat{W}_a|_{t=0}=L_a|_{t=0}\overline{S}_a|_{t=0}-
\overline{S}_a|_{t=0}\La_a^{3-2a}=0$ ($a=1,2$), $\widehat{W}$ is the solution, while $W=0$ is also another solution. By uniqueness of the solution for \eqref{particalV1V2La12}, we find $\widehat{W}=0$, which means that $\widehat{S}_1$ and $\widehat{S}_2$ are the required wave operators in Proposition \ref{proexsit of wave operates}.
\end{proof}
\begin{remark}
$S_1$ and $S_2$ can be up to the multiplications of $C_1=\sum_{j=0}^{\infty}c_1^{(j)}\Lambda_1^{-j}$ and $C_2=\sum_{j=0}^{\infty}c_2^{(j)}\Lambda_2^{j}$ on the right sides respectively. Here $c_i^{(j)}$ is some constant without depending on $\bm{m}$ and $t$.
\end{remark}

\begin{corollary}\label{coro exist oftaufunction}
Given the wave operators $S_1$ and $S_2$ in Proposition \ref{proexsit of wave operates}, define the wave functions $\Psi_a$ and the adjoint wave functions $\widetilde{\Psi}_a$ in the way below,
\begin{align*}
&\Psi_1(\bm{m},t,z)=e^{\xi(t^{(1)},z)}S_1(\bm{m},t,\La_1)(z^{m_1}),\\
&\Psi_2(\bm{m},t,z)=e^{\xi(t^{(2)},z^{-1})}S_2(\bm{m},t,\La_2)(z^{m_2}),\\
&\widetilde{\Psi}_1(\bm{m},t,z)=ze^{-\xi(t^{(1)},z)}(S_1^{-1}(\bm{m}-\bm{e}_1,t,\La_1))^*(z^{-m_{1}}),\\
&\widetilde{\Psi}_2(\bm{m},t,z)=e^{-\xi(t^{(2)},z^{-1})}\Delta_2^{-1}(S_2^{-1}(\bm{m}-\bm{e}_1,t,\La_2))^*(z^{-m_{2}}).
\end{align*}
then $\Psi_a$ and $\widetilde{\Psi}_a$ satisfy the following relations,
\begin{align*}
&\partial_{t_k^{(1)}}\Psi_a(\bm{m})=B_k^{(1)}(\bm{m},\Lambda_1)\Big(\Psi_a(\bm{m})\Big),
\quad \partial_{t_k^{(2)}}\Psi_a(\bm{m})=B_k^{(2)}(\bm{m},\Lambda_2)\Big(\Psi_a(\bm{m})\Big),\\
&\partial_{t_k^{(1)}}\widetilde{\Psi}_a(\bm{m})=-B_k^{*(1)}(\bm{m}-\bm{e}_1,\Lambda_1)\left(\widetilde{\Psi}_a(\bm{m})\right),\quad \partial_{t_k^{(2)}}\widetilde{\Psi}_a(\bm{m})=-\Delta_2^{-1}B_k^{*(2)}(\bm{m}-\bm{e}_1,\Lambda_2)\Delta_2\left(\widetilde{\Psi}_a(\bm{m})\right),\\
&H(\bm{m})(\Psi_a(\bm{m})) = 0,\quad \widetilde{H}(\bm{m})(\widetilde{\Psi_a}(\bm{m}+\bm{e}_1)) = 0,
\end{align*}
where $\widetilde{H}(\bm{m})=H^*(\bm{m}-\bm{e})$.
\end{corollary}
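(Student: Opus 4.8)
The plan is to verify each displayed relation by direct computation, paralleling the proof of Corollary \ref{partial psi}. The inputs are all supplied by Proposition \ref{proexsit of wave operates}: the factorizations $L_a=S_a\Lambda_a^{(3-2a)}S_a^{-1}$ and $H=-\Lambda_1\Lambda_2 S_1\Delta_2^{*}S_1^{-1}=-\Lambda_1\Delta_2 S_2\Delta_1^{*}S_2^{-1}$, the four evolution equations for $S_1,S_2$, and (through Corollary \ref{corollary pi1pi2} and Proposition \ref{conclulamba1s1lamba2s2}) the $\Lambda$-action relations $\Lambda_{3-a}^{(-1)^{a}k}(S_a)=\pi_a(\Lambda_{3-a}^{(-1)^{a}k})S_a$. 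I use repeatedly that $e^{\pm\xi}$, $z$ and its powers are independent of $\bm m$ and hence commute with every $\Lambda_a$-operator, that $\Lambda_a^{\pm k}(z^{\pm m_a})=z^{\mp k}z^{\pm m_a}$, and that $\partial_{t_k^{(a)}}e^{\pm\xi}=\pm z^{(-1)^{a-1}k}e^{\pm\xi}$.

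For the same-index flows I differentiate the defining expression and insert $\partial_{t_k^{(a)}}S_a=B_k^{(a)}S_a-S_a\Lambda_a^{(3-2a)k}$; the term produced by $\partial_{t_k^{(a)}}e^{\pm\xi}$ cancels the contribution of $-S_a\Lambda_a^{(3-2a)k}$ once the latter is moved through $z^{\pm m_a}$, leaving $\partial_{t_k^{(a)}}\Psi_a=B_k^{(a)}(\Psi_a)$. The adjoint same-index flows follow the same bookkeeping from $\partial_{t_k^{(a)}}(S_a^{-1})^{*}=-(B_k^{(a)})^{*}(S_a^{-1})^{*}+(S_a^{-1})^{*}\Lambda_a^{-(3-2a)k}$, the built-in $\Delta_2^{-1}$ in the definition of $\widetilde\Psi_2$ accounting for the $\Delta_2^{-1}(\cdot)\Delta_2$ conjugation there. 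For the annihilations, $H(\Psi_1)=e^{\xi}(HS_1)(z^{m_1})=-e^{\xi}\Lambda_1\Lambda_2 S_1\Delta_2^{*}(z^{m_1})=0$ since $\Delta_2^{*}(z^{m_1})=0$, and likewise $H(\Psi_2)=0$ via $\Delta_1^{*}(z^{m_2})=0$. Taking the adjoint of the $H$-factorization and using the telescoping identity $S_1^{*}(\bm m-\bm e)\Lambda_1^{-1}\Lambda_2^{-1}(S_1^{*}(\bm m))^{-1}=\Lambda_1^{-1}\Lambda_2^{-1}$ reduces $\widetilde H(\bm m)(\widetilde\Psi_1(\bm m+\bm e_1))$ to a multiple of $\Delta_2\Lambda_1^{-1}\Lambda_2^{-1}(z^{-m_1-1})=\Delta_2(z^{-m_1})=0$; the computation for $\widetilde\Psi_2$ is identical, the $\Delta_2^{-1}$ in its definition combining with $\Delta_2^{*}\Lambda_1^{-1}\Delta_2^{-1}=-\Lambda_1^{-1}\Lambda_2^{-1}$ to leave $\Delta_1(z^{-m_2})=0$.

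For the cross-index flow of $\Psi_a$ the key is the operator identity $\partial_{t_k^{(3-a)}}S_a=\pi_a(B_k^{(3-a)})S_a=B_k^{(3-a)}(S_a)$: the first equality is the cross-evolution in Proposition \ref{proexsit of wave operates}, and the second follows from $\Lambda_{3-a}^{(-1)^{a}k}(S_a)=\pi_a(\Lambda_{3-a}^{(-1)^{a}k})S_a$ together with the left-$\mathcal B$-linearity of $\pi_a$. Since $B_k^{(3-a)}$ is a $\Lambda_{3-a}$-operator and $z^{m_a}$ is independent of $m_{3-a}$, this at once gives $\partial_{t_k^{(3-a)}}\Psi_a=B_k^{(3-a)}(\Psi_a)$.

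The main obstacle is the cross-index flow of the adjoint wave functions, $\partial_{t_k^{(2)}}\widetilde\Psi_1$ and $\partial_{t_k^{(1)}}\widetilde\Psi_2$. Differentiating $\widetilde\Psi_a$ and inserting $\partial_{t_k^{(3-a)}}(S_a^{-1})^{*}=-(\pi_a(B_k^{(3-a)}))^{*}(S_a^{-1})^{*}$ produces the operator $-(\pi_a(B_k^{(3-a)})(\bm m-\bm e_1))^{*}$ acting on $\widetilde\Psi_a$; this is a $\Lambda_a$-operator, whereas the statement demands the $\Lambda_{3-a}$-operator $-B_k^{*(3-a)}(\bm m-\bm e_1)$ (conjugated by $\Delta_2^{\pm1}$ in the case $3-a=2$). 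The two operators are genuinely different and can coincide only when applied to the adjoint wave function. I would bridge them by writing $B_k^{(3-a)}=\pi_a(B_k^{(3-a)})+G_kH$ with $G_kH\in\E_{(a)}H$, transporting the correction $G_kH$ through the formal adjoint, and annihilating it via $\widetilde H(\bm m)(\widetilde\Psi_a(\bm m+\bm e_1))=0$; equivalently, one derives the adjoint $\Lambda$-action relations — the analogs of \eqref{lam1s1star}--\eqref{lam2s2star} for $(S_a^{-1}(\bm m-\bm e_1))^{*}$ — from the $\Lambda$-action relations for $S_a$ by taking inverses and formal adjoints, and reads off the desired form. Correctly tracking the $\E_{(a)}H$ corrections together with the $\bm e_1$- and $\bm e$-shifts through the adjoint is the only delicate point; everything else is routine.
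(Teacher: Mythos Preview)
Your proposal is correct and follows essentially the same route as the paper, whose proof is only two sentences: the same-index flows and the annihilations come directly from Proposition \ref{proexsit of wave operates}, while the cross-index flows are obtained from $\partial_{t_k^{(a)}}S_{3-a}=\pi_{3-a}(B_k^{(a)})S_{3-a}$ together with $H(\Psi_a)=\widetilde H(\widetilde\Psi_a)=0$, exactly the mechanism you describe for passing from $\pi_a(B_k^{(3-a)})$ to $B_k^{(3-a)}$ on the (adjoint) wave functions. One small remark: in the Lax-to-bilinear direction the $\Lambda$-action relations you invoke via Proposition \ref{conclulamba1s1lamba2s2} are not yet available from the bilinear equation; they should instead be read off from the $H$-factorization in Proposition \ref{proexsit of wave operates} (equivalently from the system \eqref{particalS1S2L12}), after which Corollary \ref{corollary pi1pi2} applies verbatim.
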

\begin{proof}
Notice that $\partial_{t_k^{(a)}}\Psi_a$, $\partial_{t_k^{(a)}}\widetilde{\Psi}_a$, $H(\Psi_a)$ and $\widetilde{H}(\widetilde{\Psi}_a)$ can be computed directly by Proposition \ref{proexsit of wave operates}. As for $\partial_{t_k^{(a)}}\Psi_{3-a}$, $\partial_{t_k^{(a)}}\widetilde{\Psi}_{3-a}$, they can be computed by $\partial_{t_k^{(a)}}S_{3-a}=\pi_{3-a}(B_k^{(a)})S_{3-a}$ and $H(\Psi_a)=\widetilde{H}(\widetilde{\Psi}_a)=0$.
\end{proof}

\begin{proposition}\label{pro dressing to bilinear}
The wave functions $\Psi_a$ and the adjoint wave functions $\widetilde{\Psi}_a$ defined in Corollary \ref{coro exist oftaufunction} satisfy the following bilinear equation
\begin{align}\label{dressing to bilinear}
\oint_{C_R}\frac{dz}{2\pi iz}\Psi_{1}(\bm{m},t,z)\widetilde{\Psi}_{1}(\bm{m}',t',z)=\oint_{C_r}\frac{dz}{2\pi iz}\Psi_{2}(\bm{m},t,z)\widetilde{\Psi}_{2}(\bm{m}',t',z), \quad m_1-m_2\geq m_1'-m_2'.
\end{align}
\begin{proof}
Firstly by Corollary \ref{coro exist oftaufunction} and the Taylor expansions at $t'=t$, we only need to show 
\begin{align*}
\oint_{C_R}\frac{dz}{2\pi iz}\Psi_{1}(\bm{m},t,z)\widetilde{\Psi}_{1}(m'_1-k,m'_2+l,t,z)=\oint_{C_r}\frac{dz}{2\pi iz}\Psi_{2}(\bm{m},t,z)\widetilde{\Psi}_{2}(m'_1-k,m'_2+l,t,z),
\end{align*}
for any $m_1-m_2\geq m'_1-m'_2$, $k,l\geq0$, which is further equivalent to
\begin{align}\label{bilinear eq lose kl}
\oint_{C_R}\frac{dz}{2\pi iz}\Psi_{1}(\bm{m},t,z)\widetilde{\Psi}_{1}(\bm{m}',t,z)=\oint_{C_r}\frac{dz}{2\pi iz}\Psi_{2}(\bm{m},t,z)\widetilde{\Psi}_{2}(\bm{m}',t,z), \quad m_1-m_2\geq m'_1-m'_2.
\end{align}
By Lemma \ref{ABlemma}, \eqref{bilinear eq lose kl} is equivalent to
\begin{align*}
&\sum_{j\geq 1}\left(S_1(\bm{m},\Lambda_1)\Lambda_2^{j}S_1^{-1}(\bm{m},\Lambda_1)\Lambda_1\right)_{1,\leq j}=\sum_{j\leq 1}\left(S_2(\bm{m},\Lambda_2)\Lambda_1^{j}\Lambda_1^{-1}S_2^{-1}(\bm{m},\Lambda_2)(\Delta_2^{*})^{-1}\Lambda_1\right)_{2,\geq j},\\
\Leftrightarrow&\sum_{j\geq1}S_1(\bm{m},\Lambda_1)\La_2^jS_1^{-1}(\bm{m},\Lambda_1)=\sum_{j\leq1}S_2(\bm{m},\Lambda_2)\La_1^j\La_1^{-1}S_2^{-1}(\bm{m},\Lambda_2)\Delta_2^{*-1},\\
\Leftrightarrow&S_1(\bm{m},\Lambda_1)\Delta_2^{*-1}S_1^{-1}(\bm{m},\Lambda_1)=S_2(\bm{m},\Lambda_2)\Delta_1^{-1}\La_1S_2^{-1}(\bm{m},\Lambda_2)\Delta_2^{*-1},\\
\Leftrightarrow&S_1(\bm{m},\Lambda_1)\Delta_2^{*}S_1^{-1}(\bm{m},\Lambda_1)=-\Delta_2^{*}S_2(\bm{m},\Lambda_2)\Delta_1^{*}S_2^{-1}(\bm{m},\Lambda_2).
\end{align*} 
Notice that the last equation holds by $H=-\Lambda_1\Lambda_2S_1\Delta_2^{*}S_1^{-1}=-\Lambda_1\Delta_2S_2\Delta_1^{*}S_2^{-1}.$
\end{proof}
\end{proposition}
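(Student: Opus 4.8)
The plan is to prove \eqref{dressing to bilinear} in two stages: first reduce the identity with independent times $t,t'$ to an equal-time statement at $t'=t$ that carries only discrete shifts of $\bm{m}'$, and then convert that equal-time identity, through the residue-pairing Lemma \ref{ABlemma}, into a single operator identity which turns out to be exactly the double expression for $H$ supplied by Proposition \ref{proexsit of wave operates}.

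For the first stage I would use the evolution equations gathered in Corollary \ref{coro exist oftaufunction}, which propagate both $\Psi_b$ and $\widetilde{\Psi}_b$ along the $t_k^{(a)}$-flows. Taylor expanding the second-slot dependence about $t'=t$ and reading off the leading action of each second-slot flow on the corresponding adjoint wave function — $\partial_{t_k^{(1)'}}$ contributes a factor $-z^{k}$ through $e^{-\xi(t^{(1)'},z)}$, and $\partial_{t_l^{(2)'}}$ contributes $-z^{-l}$ through $e^{-\xi(t^{(2)'},z^{-1})}$ — one sees that these monomials convert the weights $z^{-m_1'},z^{-m_2'}$ into $z^{-(m_1'-k)},z^{-(m_2'+l)}$, so the Taylor coefficients are precisely the equal-time pairings at the shifted arguments $(m_1'-k,m_2'+l)$, $k,l\geq 0$; the full flow equations of Corollary \ref{coro exist oftaufunction} promote this leading-order bookkeeping to an exact statement. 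Since such a shift replaces $m_1'-m_2'$ by $m_1'-m_2'-k-l$ and hence preserves the admissibility constraint $m_1-m_2\geq m_1'-m_2'$, it suffices to establish \eqref{bilinear eq lose kl} at equal times for every admissible $\bm{m}'$.

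For the second stage, set $t'=t$ so the factors $e^{\pm\xi}$ cancel, write $\bm{m}'=\bm{m}+\bm{j}$, and apply Lemma \ref{ABlemma} to recast each $\oint\frac{dz}{2\pi i z}$ as ${\rm Res}_z\, z^{-1}$ of a product of wave operators evaluated on powers of $z$. Comparing the coefficients of $\Lambda_1$ and $\Lambda_2$, exactly as in the proof of Proposition \ref{conclu S12}, turns \eqref{bilinear eq lose kl} into $\sum_{j\geq 1}S_1\Lambda_2^{j}S_1^{-1}=\sum_{j\leq 1}S_2\Lambda_1^{j-1}S_2^{-1}\Delta_2^{*-1}$, once a common right factor $\Lambda_1$ is cancelled and the projections $(\cdot)_{1,\leq j}$, $(\cdot)_{2,\geq j}$ recombine under the sums. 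Pulling $S_1,S_2$ through the sums and using $\sum_{j\geq1}\Lambda_2^{j}=\Delta_2^{*-1}$ and $\sum_{j\leq1}\Lambda_1^{j-1}=\Delta_1^{-1}\Lambda_1$ yields $S_1\Delta_2^{*-1}S_1^{-1}=S_2\Delta_1^{-1}\Lambda_1 S_2^{-1}\Delta_2^{*-1}$. Inverting both sides, with $(\Delta_1^{-1}\Lambda_1)^{-1}=\Lambda_1^{-1}\Delta_1=-\Delta_1^{*}$, gives $S_1\Delta_2^{*}S_1^{-1}=-\Delta_2^{*}S_2\Delta_1^{*}S_2^{-1}$, and a direct check using $\Delta_2^{*}=-\Lambda_2^{-1}\Delta_2$ shows this is equivalent to the pair of formulas $H=-\Lambda_1\Lambda_2 S_1\Delta_2^{*}S_1^{-1}=-\Lambda_1\Delta_2 S_2\Delta_1^{*}S_2^{-1}$ guaranteed by Proposition \ref{proexsit of wave operates}. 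This closes the chain and proves the proposition.

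The genuinely delicate point is the first stage. One must justify that Taylor expanding the second-slot times and re-indexing the coefficients as shifted equal-time pairings is legitimate on the nested contours $C_R$ and $C_r$, with no stray boundary contributions, and that the cross-flows (for instance the action of $\partial_{t_k^{(2)'}}$ on $\widetilde\Psi_1$, which enters only through the $\bm{m}'$-dependence of $S_1$ rather than an explicit exponential) respect the same index-shift bookkeeping; this is where the full force of Corollary \ref{coro exist oftaufunction} is needed rather than the leading-order heuristic. Once the reduction to equal times is secured, the remaining work is routine operator algebra, the single substantive ingredient being the double expression for $H$ from Proposition \ref{proexsit of wave operates}, which is what ultimately forces the two residue integrals to coincide.
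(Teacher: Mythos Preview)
Your proposal is correct and follows essentially the same route as the paper's proof: first reduce to equal times $t'=t$ via the adjoint evolution equations of Corollary~\ref{coro exist oftaufunction} (yielding the equal-time identity at all admissible shifted $\bm{m}'$), then use Lemma~\ref{ABlemma} to rewrite the residue pairings as an operator identity, which after summing the geometric series and inverting becomes $S_1\Delta_2^{*}S_1^{-1}=-\Delta_2^{*}S_2\Delta_1^{*}S_2^{-1}$, exactly the double expression for $H$ from Proposition~\ref{proexsit of wave operates}. Your remark that the cross-flow terms (e.g.\ $\partial_{t_k^{(2)'}}$ on $\widetilde\Psi_1$) require the full operator form of Corollary~\ref{coro exist oftaufunction}, not just the exponential prefactors, is the right caveat and matches the paper's implicit reliance on that corollary.
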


Now we will prove the existence of the tau function from the bilinear equation in Proposition \ref{pro dressing to bilinear} for the 2dKP hierarchy, which is given in the proposition below.
\begin{proposition}\label{the existence of tau function}
  Given the 2dKP wave function $\Psi_a$ and the 2dKP adjoint wave function $\widetilde{\Psi}_a$ satisfying the 2dKP bilinear equation \eqref{dressing to bilinear}, there exist one tau function $\tau_m(t)$ such that
  \begin{equation}\label{eq the existence of tau function}
  \begin{aligned}
&\Psi_{1}(\bm{m},t,z)=z^{m_1}e^{\xi(t^{(1)},z)}\frac{\tau_{\bm{m}}(t-[z^{-1}]_1)}{\tau_{\bm{m}}(t)},\\
&\widetilde{\Psi}_{1}(\bm{m},t,z)=z^{-m_1+1}e^{-\xi(t^{(1)},z)}\frac{\tau_{\bm{m}}(t+[z^{-1}]_1)}{\tau_{\bm{m}}(t)},\\
&\Psi_{2}(\bm{m},t,z)=z^{m_2}e^{\xi(t^{(2)},z^{-1})}\frac{\tau_{\bm{m}+\bm{e}}(t-[z]_2)}{\tau_{\bm{m}}(t)},\\
&\widetilde{\Psi}_{2}(\bm{m},t,z)=z^{-m_2+1}e^{-\xi(t^{(2)},z^{-1})}\frac{\tau_{\bm{m}-\bm{e}}(t+[z]_2)}{\tau_{\bm{m}}(t)}.
\end{aligned}
\end{equation}
\end{proposition}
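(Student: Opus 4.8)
The plan is to recover $\tau_{\bm{m}}(t)$ by prescribing all of its continuous $t$-derivatives and discrete $\bm{m}$-differences from the coefficients of the wave functions, and then to prove that this overdetermined prescription is consistent, hence integrable to a single function. First I would strip the exponential and monomial factors, writing $\Psi_1=z^{m_1}e^{\xi(t^{(1)},z)}w_1$ with $w_1=1+\sum_{k\geq1}a_k^{(1)}z^{-k}$, $\Psi_2=z^{m_2}e^{\xi(t^{(2)},z^{-1})}w_2$ with $w_2=a_0^{(2)}+\sum_{k\geq1}a_k^{(2)}z^{k}$, and similarly for the adjoint pair $\widetilde{\Psi}_a$. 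In this normalization the four identities of \eqref{eq the existence of tau function} become the generating relations $w_1=\tau_{\bm{m}}(t-[z^{-1}]_1)/\tau_{\bm{m}}(t)$ and $w_2=\tau_{\bm{m}+\bm{e}}(t-[z]_2)/\tau_{\bm{m}}(t)$ together with their two adjoint counterparts, so it suffices to produce one function $\tau_{\bm{m}}(t)$ realizing all of them.

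Since $[z^{-1}]_1$ shifts only the variables $t^{(1)}$ and $[z]_2$ only $t^{(2)}$, expanding $\log w_1$ in powers of $z^{-1}$ fixes, recursively and order by order, every $\partial_{t_j^{(1)}}\log\tau_{\bm{m}}$, while expanding $\log w_2$ in powers of $z$ fixes every $\partial_{t_j^{(2)}}\log\tau_{\bm{m}+\bm{e}}$ and, from its $z^0$-term, the elementary difference $\log(\tau_{\bm{m}+\bm{e}}/\tau_{\bm{m}})=\log a_0^{(2)}$, in agreement with the value $a_0^{(2)}=\tau_{\bm{m}+\bm{e}}/\tau_{\bm{m}}$ recorded after \eqref{S_{1}e}. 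A useful observation is that every formula in \eqref{eq the existence of tau function} involves $\tau$ only at $\bm{m}$ and $\bm{m}\pm\bm{e}$, i.e. along a single line $\bm{m}_0+\mathbb{Z}\bm{e}$; I would therefore first construct $\tau$ on each such line from the continuous flows and the $\bm{e}$-shift (which determines it up to one $t$-independent constant per line), and then tie the different lines together---equivalently, fix the remaining lattice direction---using the bilinear equation with shifted $\bm{m},\bm{m}'$ and the relations $H(\bm{m})(\Psi_a(\bm{m}))=0$ of Corollary \ref{coro exist oftaufunction}. The residual multiplicative ambiguity is exactly the freedom in $S_a$ noted after Proposition \ref{proexsit of wave operates}.

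The crux of the argument is to verify that this prescription is consistent, i.e. that all mixed second derivatives and difference-derivative commutators agree, so that a genuine $\tau_{\bm{m}}(t)$ exists. The purely continuous compatibilities $\partial_{t_j^{(a)}}\partial_{t_k^{(b)}}\log\tau=\partial_{t_k^{(b)}}\partial_{t_j^{(a)}}\log\tau$ are handed to us by the commutativity of the flows in Corollary \ref{abc=12}. The genuinely new conditions---between the two continuous families, between the continuous flows and the $\bm{e}$-shift, and between adjacent lines---I would extract from the bilinear equation \eqref{dressing to bilinear} by differentiating in $t,t'$, shifting $\bm{m},\bm{m}'$, and reading off residues through Lemma \ref{ABlemma}. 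This produces precisely the Hirota-type bilinear relations among the $a_k^{(a)}$ that express closedness of the full continuous-plus-discrete one-form. I expect this to be the main obstacle: it is where the two components are coupled through $H$ and the shift structure, and the residue bookkeeping must be arranged so that closedness becomes manifest. The analogous single-line computation serves as a template, and the novelty is confined to the cross terms.

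With $\tau_{\bm{m}}(t)$ in hand, the relations for $\Psi_1$ and $\Psi_2$ hold by construction, and only the two adjoint identities remain. These require no new function: setting $\bm{m}'=\bm{m}$ and $t'=t$ in \eqref{dressing to bilinear} (the admissible boundary case $m_1-m_2= m_1'-m_2'$) yields a residue pairing between $\Psi_a$ and $\widetilde{\Psi}_a$ that forces $\widetilde{w}_1=\tau_{\bm{m}}(t+[z^{-1}]_1)/\tau_{\bm{m}}(t)$ and $\widetilde{w}_2=\tau_{\bm{m}-\bm{e}}(t+[z]_2)/\tau_{\bm{m}}(t)$ with the same $\tau$. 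The remaining steps are routine expansions, so the weight of the proof rests on the integrability step above.
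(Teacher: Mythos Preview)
Your overall strategy---prescribe the derivatives and $\bm{e}$-differences of $\log\tau_{\bm{m}}$ from the coefficients of $\log w_a$ and then verify that this overdetermined system is integrable---matches the paper's. But the specific tools you invoke are not the ones that actually close the argument, and the paper's route is different in a way that matters.

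The paper never appeals to Corollary~\ref{abc=12} or to Lemma~\ref{ABlemma} here. Instead, it substitutes Miwa shifts directly into the bilinear equation~\eqref{dressing to bilinear}: taking $t'=t-[\lambda_1^{-1}]_1-[\lambda_2^{-1}]_1$, $t'=t-[\lambda_1^{-1}]_1-[\lambda_2]_2$, $t'=t-[\lambda_1]_2-[\lambda_2]_2$ (with $\bm{m}'=\bm{m}$, $\bm{m}+\bm{e}$, $\bm{m}+2\bm{e}$) produces three Fay-type symmetry identities in $\lambda_1,\lambda_2$ among the normalized $\psi_a$, $\widetilde{\psi}_a$. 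Taking logarithms and comparing coefficients of $\lambda_1^{-j}\lambda_2^{-l}$, $\lambda_1^{-j}\lambda_2^{l}$, $\lambda_1^{j}\lambda_2^{l}$ yields exactly the five compatibility conditions among the Schur-polynomial operators $p_j(-\widetilde{\partial}_{t^{(a)}})$ and the difference $\Delta_{12}=\Lambda_1\Lambda_2-1$ that make the system for $\log\tau$ solvable. Your proposal to ``differentiate in $t,t'$ and read residues through Lemma~\ref{ABlemma}'' would reproduce operator identities of the type in Propositions~\ref{conclu S12}--\ref{conclulamba1s1lamba2s2}, not these Fay symmetries; and Corollary~\ref{abc=12} only gives $[\partial_{t_k^{(a)}},\partial_{t_l^{(c)}}]=0$ on $L_i,H$, which is not the same as the needed relations among the prescribed values of $p_j(-\widetilde{\partial}_t)\log\tau$.

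Your treatment of the adjoint pair also differs from the paper and, as stated, does not suffice. Setting $\bm{m}'=\bm{m}$, $t'=t$ yields a single scalar residue identity, far too little to pin down $\widetilde{w}_a$. The paper instead specializes $t'=t-[\lambda^{-1}]_1$ (resp.\ $t'=t-[\lambda]_2$, with the appropriate $\bm{m}'$) to obtain $\psi_1(\bm{m},t,\lambda)\,\widetilde{\psi}_1(\bm{m},t-[\lambda^{-1}]_1,\lambda)=1$ and $\psi_2(\bm{m},t,\lambda)\,\widetilde{\psi}_2(\bm{m}+\bm{e},t-[\lambda]_2,\lambda)=1$; these invert $\widetilde{\psi}_a$ in terms of $\psi_a$ and hence deliver the two adjoint formulas from the two $\Psi_a$ formulas automatically. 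Finally, your step of ``tying the different lines $\bm{m}_0+\mathbb{Z}\bm{e}$ together'' is not needed for this proposition: every ratio in~\eqref{eq the existence of tau function} lives on a single line, so an independent choice of $\tau$ on each line already suffices, and the paper does not address cross-line normalization here.
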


To prove this proposition, we need to do the following preparations. Firstly, let us denote $\psi_a(\bm{m},t,z)=z^{-m_a}e^{-\xi(t^{(a)},z^{3-2a})}\Psi_a(\bm{m},t,z)$ and $\widetilde{\psi}_a(\bm{m},t,z)=z^{m_a-1}e^{\xi(t^{(a)},z^{3-2a})}\widetilde{\Psi}_a(\bm{m},t,z) \ (a=1,2)$, then by Corollary \ref{coro exist oftaufunction} $\psi_a(\bm{m},t,z)$ and $\widetilde{\psi}_a(\bm{m},t,z)$ have the following expansions of $z$,
\begin{align*}
&\psi_1=1+\sum_{k=1}^{+\infty}a_k^{(1)}z^{-k}, \quad \psi_2=a_0^{(2)}+\sum_{k=1}^{+\infty}a_k^{(2)}z^k,\\
&\widetilde{\psi}_1=1+\sum_{k=1}^{+\infty}\widetilde{a}_k^{(1)}z^{-k}, \quad \widetilde{\psi}_2=\widetilde{a}_0^{(2)}+\sum_{k=1}^{+\infty}\widetilde{a}_k^{(2)}z^k.
\end{align*}
\begin{lemma}\label{le psi1psi2wide}
$\psi_a$ and $\widetilde{\psi}_a$ satisfy the following relations,
\begin{align}
&\psi_1(\bm{m},t,\lambda_1)\widetilde{\psi}_1(\bm{m},t-[\lambda_1^{-1}]_1-[\lambda_2^{-1}]_1,\lambda_1)=\psi_1(\bm{m},t,\lambda_2)\widetilde{\psi}_1(\bm{m},t-[\lambda_1^{-1}]_1-[\lambda_2^{-1}]_1,\lambda_2), \label{psi1,psi1}\\
&\psi_1(\bm{m},t,\lambda_1)\widetilde{\psi}_1(\bm{m}+\bm{e},t-[\lambda_1^{-1}]_1-[\lambda_2]_2,\lambda_1)=\psi_2(\bm{m},t,\lambda_2)\widetilde{\psi}_2(\bm{m}+\bm{e},t-[\lambda_1^{-1}]_1-[\lambda_2]_2,\lambda_2), \label{psi1,psi2}\\
&\psi_2(\bm{m},t,\lambda_1)\widetilde{\psi}_2(\bm{m}+2\bm{e},t-[\lambda_1]_2-[\lambda_2]_2,\lambda_1)=\psi_2(\bm{m},t,\lambda_2)\widetilde{\psi}_2(\bm{m}+2\bm{e},t-[\lambda_1]_2-[\lambda_2]_2,\lambda_2). \label{psi2,psi2}
\end{align}
\end{lemma}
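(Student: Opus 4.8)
The plan is to derive \eqref{psi1,psi1}--\eqref{psi2,psi2} from the single bilinear identity \eqref{dressing to bilinear} by specializing the arguments $(\bm{m}',t')$ so that the exponential prefactors degenerate into Cauchy kernels in $z$, and then reading off the claimed equalities as residue computations. First I would recast \eqref{dressing to bilinear} in the normalized variables: inserting $\Psi_a=z^{m_a}e^{\xi(t^{(a)},z^{3-2a})}\psi_a$ and $\widetilde{\Psi}_a=z^{-m_a+1}e^{-\xi(t^{(a)},z^{3-2a})}\widetilde{\psi}_a$ and cancelling the common factor $z^{-1}$, the identity becomes
\begin{align*}
&{\rm Res}_z\Big[z^{m_1-m_1'}e^{\xi(t^{(1)}-t'^{(1)},z)}\psi_1(\bm{m},t,z)\widetilde{\psi}_1(\bm{m}',t',z)\Big]\\
&\qquad={\rm Res}_z\Big[z^{m_2-m_2'}e^{\xi(t^{(2)}-t'^{(2)},z^{-1})}\psi_2(\bm{m},t,z)\widetilde{\psi}_2(\bm{m}',t',z)\Big],
\end{align*}
valid for $m_1-m_2\ge m_1'-m_2'$, where the left exponential is expanded in nonnegative powers of $z$, the right one in nonnegative powers of $z^{-1}$, and ${\rm Res}_z$ extracts the coefficient of $z^{-1}$.

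Next, for each target identity I would choose the shift that turns the exponentials into kernels via $e^{\xi([\lambda^{-1}],z)}=(1-z/\lambda)^{-1}$ and $e^{\xi([\lambda],z^{-1})}=(1-\lambda/z)^{-1}$: take $(\bm{m}',t')=(\bm{m},\,t-[\lambda_1^{-1}]_1-[\lambda_2^{-1}]_1)$ for \eqref{psi1,psi1}, $(\bm{m}+\bm{e},\,t-[\lambda_1^{-1}]_1-[\lambda_2]_2)$ for \eqref{psi1,psi2}, and $(\bm{m}+2\bm{e},\,t-[\lambda_1]_2-[\lambda_2]_2)$ for \eqref{psi2,psi2}. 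In all three cases $m_1-m_2=m_1'-m_2'$, so admissibility holds with equality. For \eqref{psi1,psi2} neither side degenerates: each reduces to a single-pole evaluation and equating the two directly yields the identity. For \eqref{psi1,psi1} and \eqref{psi2,psi2}, by contrast, one side carries only nonnegative (resp. only $\le-2$) powers of $z$ and hence contributes $0$ to ${\rm Res}_z$, while the surviving side retains the two-pole kernel $\frac{1}{(1-z/\lambda_1)(1-z/\lambda_2)}$ (resp. its $z^{-1}$-mirror).

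The computational heart is the elementary rule that, for $f(z)=\sum_{k\ge0}f_kz^{-k}$ with $(1-z/\lambda)^{-1}$ expanded in nonnegative powers of $z$,
\begin{align*}
{\rm Res}_z\Big[\tfrac{z^{-1}f(z)}{1-z/\lambda}\Big]=f(\lambda),\qquad
{\rm Res}_z\Big[\tfrac{f(z)}{1-z/\lambda}\Big]=\lambda\big(f(\lambda)-f_0\big),
\end{align*}
together with the mirror statements for the small-circle side. Applying these to the two-pole kernel and using $\sum_{p+q=n}\lambda_1^{-p}\lambda_2^{-q}=\tfrac{\lambda_1^{-n-1}-\lambda_2^{-n-1}}{\lambda_1^{-1}-\lambda_2^{-1}}$, the surviving residue collapses, after clearing the common factor $(\lambda_1^{-1}-\lambda_2^{-1})^{-1}$ (resp. $(\lambda_1-\lambda_2)^{-1}$), to the antisymmetric difference $\psi_a(\bm{m},t,\lambda_1)\widetilde{\psi}_a(\cdot,\lambda_1)-\psi_a(\bm{m},t,\lambda_2)\widetilde{\psi}_a(\cdot,\lambda_2)$; equating with the vanishing side and using $\lambda_1\ne\lambda_2$ gives the lemma.

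I expect the only delicate point to be the bookkeeping of expansion conventions: the same factor $(1-z/\lambda)^{-1}$ must be expanded in nonnegative powers of $z$ on the large-circle side and in nonnegative powers of $z^{-1}$ on the small-circle side, and it is precisely this choice---forced by where $e^{\xi}$ is holomorphic---that makes ${\rm Res}_z$ evaluate the wave functions at $z=\lambda$ rather than return $0$. Correctly tracking which of the two sides vanishes by power counting in each of the three substitutions is the remaining care required; everything else is routine geometric-series algebra.
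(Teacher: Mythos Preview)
Your proposal is correct and follows exactly the same strategy as the paper: rewrite the bilinear identity in terms of the normalized functions $\psi_a,\widetilde\psi_a$ and then specialize $(\bm{m}',t')$ to the three choices $(\bm{m},\,t-[\lambda_1^{-1}]_1-[\lambda_2^{-1}]_1)$, $(\bm{m}+\bm{e},\,t-[\lambda_1^{-1}]_1-[\lambda_2]_2)$, $(\bm{m}+2\bm{e},\,t-[\lambda_1]_2-[\lambda_2]_2)$. The paper simply lists these substitutions and leaves the residue computation to the reader, whereas you spell out the mechanism (the Cauchy-kernel degeneration of $e^{\xi}$, the power-counting that kills one side in the first and third cases, and the partial-fraction/geometric-series evaluation of the surviving side); your added detail is accurate and the common factors $(\lambda_1^{-1}-\lambda_2^{-1})^{-1}$ and $(\lambda_1-\lambda_2)^{-1}$ you identify are exactly what drops out.
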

\begin{proof}
Firstly in terms of $\psi_a$ and $\widetilde{\psi}_a$, the 2dKP bilinear equation will become
\begin{align}\label{CRCrpsi1psi2}
\oint_{C_R}\frac{dz}{2\pi i}z^{m_1-m'_1}\psi_{1}(\bm{m},t,z)\widetilde{\psi}_{1}(\bm{m}',t',z)e^{\xi(t^{(1)}-t^{(1)'},z)}=\oint_{C_r}\frac{dz}{2\pi i}z^{m_2-m'_2}\psi_{2}(\bm{m},t,z)\widetilde{\psi}_{2}(\bm{m}',t',z)e^{\xi(t^{(2)}-t^{(2)'},z)},  
\end{align}
where $m_1-m_2\geq m'_1-m'_2.$ Then \eqref{psi1,psi1}--\eqref{psi2,psi2} can be obtained by setting
\begin{itemize}
  \item $\bm{m}'=\bm{m}, \quad t'=t-[\lambda_1^{-1}]_1-[\lambda_2^{-1}]_1$,
  \item $\bm{m}'=\bm{m}+\bm{e}, \quad t'=t-[\lambda_1^{-1}]_1-[\lambda_2]_2$,
  \item $\bm{m}'=\bm{m}+2\bm{e}, \quad t'=t-[\lambda_1]_2-[\lambda_2]_2$.
\end{itemize}
\end{proof}
\begin{lemma}\label{psiandwidepsi=1}
In the 2dKP bilinear equation \eqref{CRCrpsi1psi2}, $\psi_a$ and $\widetilde{\psi}_a$ are related by
\begin{align}
  &\psi_1(\bm{m},t,\lambda)\widetilde{\psi}_1(\bm{m},t-[\lambda^{-1}]_1,\lambda)=1, \label{psi1widepsi1}\\
  &\psi_2(\bm{m},t,\lambda)\widetilde{\psi}_2(\bm{m}+\bm{e},t-[\lambda]_2,\lambda)=1. \label{psi2widepsi2}
\end{align}
\begin{proof}
If set $\lambda_1=\lambda$, $\lambda_2=\infty$, then \eqref{psi1,psi1} becomes \eqref{psi1widepsi1}, while \eqref{psi2widepsi2} can be obtained by setting $\lambda_1=\infty$, $\lambda_2=\lambda$ in \eqref{psi2,psi2}.
\end{proof}
\end{lemma}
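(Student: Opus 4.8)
The plan is to derive both relations directly from the three bilinear identities collected in Lemma \ref{le psi1psi2wide}, by sending one of the two free spectral parameters to a point at which the associated wave/adjoint-wave factors normalize to $1$ and the corresponding shift vector vanishes. The essential input is the expansion data recorded just before Lemma \ref{le psi1psi2wide}: both $\psi_1$ and $\widetilde{\psi}_1$ have the form $1+O(z^{-1})$, so each tends to $1$ as $z\to\infty$, while $[\lambda^{-1}]_1=([\lambda^{-1}],0)\to0$ coefficientwise as $\lambda\to\infty$.

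First I would prove \eqref{psi1widepsi1}. In the self-relation \eqref{psi1,psi1} I set $\lambda_1=\lambda$ and let $\lambda_2\to\infty$. On the right-hand side, $\psi_1(\bm{m},t,\lambda_2)\to1$ and $\widetilde{\psi}_1(\bm{m},t-[\lambda^{-1}]_1-[\lambda_2^{-1}]_1,\lambda_2)\to1$ by the normalizations above, while $[\lambda_2^{-1}]_1\to0$ removes the $\lambda_2$-dependence of the time argument everywhere. Hence the right-hand side tends to $1$, and the left-hand side tends to $\psi_1(\bm{m},t,\lambda)\widetilde{\psi}_1(\bm{m},t-[\lambda^{-1}]_1,\lambda)$, which is exactly \eqref{psi1widepsi1}.

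Next I would prove \eqref{psi2widepsi2}. The identity to use is the cross-relation \eqref{psi1,psi2}, \emph{not} the $\psi_2$ self-relation \eqref{psi2,psi2}: only in \eqref{psi1,psi2} does one side consist of factors that normalize to $1$ at infinity, whereas $\psi_2$ and $\widetilde{\psi}_2$ tend to the constants $a_0^{(2)},\widetilde{a}_0^{(2)}$ (not $1$) at their distinguished point and diverge at $\infty$, so \eqref{psi2,psi2} cannot be specialized to $1$ without already invoking a tau function. Setting $\lambda_2=\lambda$ and letting $\lambda_1\to\infty$ in \eqref{psi1,psi2}, the left-hand side $\psi_1(\bm{m},t,\lambda_1)\widetilde{\psi}_1(\bm{m}+\bm{e},t-[\lambda_1^{-1}]_1-[\lambda]_2,\lambda_1)$ tends to $1\cdot1=1$ (again using $\psi_1,\widetilde{\psi}_1=1+O(z^{-1})$ and $[\lambda_1^{-1}]_1\to0$), while the right-hand side tends to $\psi_2(\bm{m},t,\lambda)\widetilde{\psi}_2(\bm{m}+\bm{e},t-[\lambda]_2,\lambda)$, yielding \eqref{psi2widepsi2}.

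The step requiring the most care is the justification of these $\lambda\to\infty$ limits: one must check that each factor being evaluated is a genuine power series in the correct variable, so that evaluation at $z=\infty$ is legitimate (here the series in $z^{-1}$ for $\psi_1$ and $\widetilde{\psi}_1$), and that the shift $[\lambda^{-1}]_1$ entering the time argument vanishes coefficientwise in the limit, so that the surviving factors are indeed evaluated at the claimed arguments. Once these two points are granted, both relations follow by direct substitution; the only genuinely structural observation is the selection of the cross-relation \eqref{psi1,psi2} for the $\psi_2$ identity, forced by the asymmetry in the normalizations of $\psi_1$ versus $\psi_2$.
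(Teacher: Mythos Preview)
Your proof is correct and coincides with the paper's: both set $\lambda_2\to\infty$ in \eqref{psi1,psi1} to obtain \eqref{psi1widepsi1}, and the paper's prescription ``$\lambda_1=\infty$, $\lambda_2=\lambda$'' for \eqref{psi2widepsi2} is exactly your specialization of the cross-relation \eqref{psi1,psi2}. The paper's citation of \eqref{psi2,psi2} there is evidently a misprint---as your remark about the $\psi_2,\widetilde{\psi}_2$ normalizations anticipates, sending $\lambda_1\to\infty$ only makes sense in \eqref{psi1,psi2}.
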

By Lemma \ref{psiandwidepsi=1}, we can rewrite the relations in Lemma \ref{le psi1psi2wide} by expressing $\widetilde{\psi}_a$ in terms of $\psi_a$, which are given in the lemma below.
\begin{lemma}
$\psi_a$ satisfies 
\begin{align}
  &\psi_1(\bm{m},t,\lambda_1)\psi_1(\bm{m},t-[\lambda_1^{-1}]_1,\lambda_2)=\psi_2(\bm{m},t,\lambda_2)\psi_2(\bm{m},t-[\lambda_2^{-1}]_1,\lambda_1),\label{psi1psi1=psi2psi2}\\
  &\psi_1(\bm{m},t,\lambda_1)\psi_2(\bm{m},t-[\lambda_1^{-1}]_1,\lambda_2)=\psi_2(\bm{m},t,\lambda_2)\psi_1(\bm{m}+\bm{e},t-[\lambda_2]_2,\lambda_1),\label{psi1psi2=psi2psi1}\\
  &\psi_2(\bm{m},t,\lambda_1)\psi_2(\bm{m}+\bm{e},t-[\lambda_1]_2,\lambda_2)=\psi_1(\bm{m},t,\lambda_2)\psi_1(\bm{m}+\bm{e},t-[\lambda_2]_2,\lambda_1)\label{psi1psi1=psi2psi2}.
\end{align}
\end{lemma}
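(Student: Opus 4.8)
The plan is to eliminate the adjoint wave functions $\widetilde{\psi}_a$ from the three relations \eqref{psi1,psi1}, \eqref{psi1,psi2} and \eqref{psi2,psi2} of Lemma~\ref{le psi1psi2wide}, replacing each occurrence of $\widetilde{\psi}_a$ by a reciprocal of $\psi_a$ through the normalization identities of Lemma~\ref{psiandwidepsi=1}. Concretely, I would first recast \eqref{psi1widepsi1} and \eqref{psi2widepsi2} as pointwise inversion formulas valid for an arbitrary time argument $s$: setting $t=s+[\lambda^{-1}]_1$ in \eqref{psi1widepsi1} gives
\begin{align*}
\widetilde{\psi}_1(\bm{m},s,\lambda)=\frac{1}{\psi_1(\bm{m},s+[\lambda^{-1}]_1,\lambda)},
\end{align*}
while setting $t=s+[\lambda]_2$ in \eqref{psi2widepsi2} (and translating the lattice point by $\bm{e}$) gives
\begin{align*}
\widetilde{\psi}_2(\bm{m}+\bm{e},s,\lambda)=\frac{1}{\psi_2(\bm{m},s+[\lambda]_2,\lambda)},\qquad
\widetilde{\psi}_2(\bm{m}+2\bm{e},s,\lambda)=\frac{1}{\psi_2(\bm{m}+\bm{e},s+[\lambda]_2,\lambda)}.
\end{align*}

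Next I would substitute these formulas into the three relations of Lemma~\ref{le psi1psi2wide}, one at a time, taking $s$ to be the composite-shifted time appearing inside each $\widetilde{\psi}_a$. The key observation is that the shift produced by the inversion formula telescopes against one of the two shifts in that composite argument; for example, in \eqref{psi1,psi1} the factor $\widetilde{\psi}_1(\bm{m},t-[\lambda_1^{-1}]_1-[\lambda_2^{-1}]_1,\lambda_1)$ becomes $1/\psi_1(\bm{m},t-[\lambda_2^{-1}]_1,\lambda_1)$ after the $+[\lambda_1^{-1}]_1$ cancellation, and the companion $\lambda_2$--factor collapses to an evaluation at $t-[\lambda_1^{-1}]_1$. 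Each relation thereby becomes an equality of two ratios of $\psi_a$'s, and clearing the reciprocals by cross-multiplication produces the displayed identities; the $\bm{e}$--translate of the $\psi_2$ inversion is exactly what converts $\widetilde{\psi}_2(\bm{m}+\bm{e},\cdot)$ and $\widetilde{\psi}_2(\bm{m}+2\bm{e},\cdot)$ into $\psi_2(\bm{m},\cdot)$ and $\psi_2(\bm{m}+\bm{e},\cdot)$ with the lattice shifts demanded on the right-hand sides.

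I do not anticipate a genuine analytic obstacle, since once the inversion formulas are in hand the derivation is purely formal. The only point that needs care is the bookkeeping: one must keep the two shift symbols $[\lambda^{-1}]_1$ and $[\lambda]_2$ rigorously separate, since the first acts on the times $t^{(1)}$ and the second on $t^{(2)}$, and one must track the accompanying lattice shifts by $\bm{e}$ so that each inversion formula is applied at the very lattice point and time at which the matching $\psi_a$ is evaluated. Provided these substitutions are executed consistently, the three equations of the lemma follow term by term from the three relations of Lemma~\ref{le psi1psi2wide}.
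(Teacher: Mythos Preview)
Your proposal is correct and is exactly the approach taken in the paper: the authors simply state that by Lemma~\ref{psiandwidepsi=1} one rewrites the three relations of Lemma~\ref{le psi1psi2wide} by expressing each $\widetilde{\psi}_a$ as a reciprocal of $\psi_a$, which after the telescoping of shifts and cross-multiplication yields the displayed identities. Your write-up in fact supplies more detail than the paper's one-line justification.
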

Further if set $\lambda_1=\lambda$, $\lambda_2=0$ in \eqref{psi1psi2=psi2psi1} and \eqref{psi1psi1=psi2psi2}, we can obtain the lemma below.
\begin{lemma}\label{exiloga=de12log}
If denote $\Delta_{12}=\Lambda_1\Lambda_2-1$, then 
\begin{align}
  &(e^{-\xi(\widetilde{\partial}_{t^{(1)}},\lambda^{-1})}-1)\log a_0^{(2)}(\bm{m},t)=\Delta_{12}\log \psi_1(\bm{m},t,\lambda),\label{exiloga=Delog}\\
  &(e^{-\xi(\widetilde{\partial}_{t^{(2)}},\lambda)}-1)\log a_0^{(2)}(\bm{m},t)=\Delta_{12}\log\left(\frac{\psi_2(\bm{m}-\bm{e},t,\lambda)}{a_0^{(2)}(\bm{m}-\bm{e},t)}\right).\label{exiloga=Deloga} 
\end{align}
\end{lemma}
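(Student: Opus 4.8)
The plan is to derive both identities by specializing the two functional relations of the preceding lemma, namely \eqref{psi1psi2=psi2psi1} and \eqref{psi1psi1=psi2psi2}, at $\lambda_1=\lambda$ and $\lambda_2=0$, and then passing to logarithms. The one structural fact I will lean on throughout is the Miwa-shift dictionary: for any function $f$ one has $f(\bm{m},t-[\lambda^{-1}]_1)=e^{-\xi(\widetilde{\partial}_{t^{(1)}},\lambda^{-1})}f(\bm{m},t)$ and $f(\bm{m},t-[\lambda]_2)=e^{-\xi(\widetilde{\partial}_{t^{(2)}},\lambda)}f(\bm{m},t)$, so that a Miwa shift of the time is exactly the exponential operator on the left of \eqref{exiloga=Delog}--\eqref{exiloga=Deloga}. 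Combined with the $z$-expansions recorded before Lemma \ref{le psi1psi2wide}, which give $\psi_2(\bm{m},t,0)=a_0^{(2)}(\bm{m},t)$ and $[\lambda_2]_2\big|_{\lambda_2=0}=0$, the limit $\lambda_2\to0$ collapses every $\psi_2(\cdot,\cdot,\lambda_2)$ factor to $a_0^{(2)}$ while freezing the $\lambda_1=\lambda$ dependence.

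For \eqref{exiloga=Delog} I first put $\lambda_1=\lambda$, $\lambda_2=0$ in \eqref{psi1psi2=psi2psi1}. The factor $\psi_2(\bm{m},t-[\lambda^{-1}]_1,\lambda_2)$ becomes $a_0^{(2)}(\bm{m},t-[\lambda^{-1}]_1)$, the factor $\psi_2(\bm{m},t,\lambda_2)$ becomes $a_0^{(2)}(\bm{m},t)$, and $t-[\lambda_2]_2$ reduces to $t$, so the relation reads
\[
\psi_1(\bm{m},t,\lambda)\,a_0^{(2)}(\bm{m},t-[\lambda^{-1}]_1)=a_0^{(2)}(\bm{m},t)\,\psi_1(\bm{m}+\bm{e},t,\lambda).
\]
Taking logarithms, moving the two $a_0^{(2)}$ terms to one side and the two $\psi_1$ terms to the other, and recognising that the shift $\bm{m}\mapsto\bm{m}+\bm{e}$ is the action of $\Lambda_1\Lambda_2$ while $a_0^{(2)}(\bm{m},t-[\lambda^{-1}]_1)=e^{-\xi(\widetilde{\partial}_{t^{(1)}},\lambda^{-1})}a_0^{(2)}(\bm{m},t)$, the left side becomes $(e^{-\xi(\widetilde{\partial}_{t^{(1)}},\lambda^{-1})}-1)\log a_0^{(2)}(\bm{m},t)$ and the right side $(\Lambda_1\Lambda_2-1)\log\psi_1(\bm{m},t,\lambda)=\Delta_{12}\log\psi_1(\bm{m},t,\lambda)$, which is exactly \eqref{exiloga=Delog}.

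For \eqref{exiloga=Deloga} I run the identical specialization on the third relation \eqref{psi1psi1=psi2psi2}; here $\lambda_1=\lambda$ keeps the $\psi_2(\cdot,\cdot,\lambda)$ factors and their $[\lambda]_2$ shift, while $\lambda_2=0$ collapses the remaining $\psi_2$ factors to $a_0^{(2)}$ and kills $[\lambda_2]_2$. After the harmless relabeling $\bm{m}\mapsto\bm{m}-\bm{e}$ this yields
\[
a_0^{(2)}(\bm{m},t-[\lambda]_2)\,\psi_2(\bm{m}-\bm{e},t,\lambda)=a_0^{(2)}(\bm{m}-\bm{e},t)\,\psi_2(\bm{m},t,\lambda),
\]
and taking logarithms, then grouping into the ratios $\psi_2/a_0^{(2)}$ at $\bm{m}$ and $\bm{m}-\bm{e}$, reproduces \eqref{exiloga=Deloga} with $\Delta_{12}$ and $e^{-\xi(\widetilde{\partial}_{t^{(2)}},\lambda)}$. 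The only genuinely delicate point — and the place I would be most careful — is the legitimacy of the $\lambda_2\to0$ specialization: because $\psi_1$ is a series in $\lambda^{-1}$ it diverges at the origin, so in each relation the slot carrying $\lambda_2$ must be of $\psi_2$-type (a power series in $\lambda_2$, finite at $0$) and $\lambda_2$ may enter the Miwa shifts only through $[\lambda_2]_2$, which vanishes there. This requirement is precisely what singles out \eqref{psi1psi2=psi2psi1} and \eqref{psi1psi1=psi2psi2} as the relations to specialize and fixes the assignment $\lambda_1=\lambda$, $\lambda_2=0$.
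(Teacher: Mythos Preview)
Your argument is correct and is exactly the route the paper takes: it also obtains the lemma by setting $\lambda_1=\lambda$, $\lambda_2=0$ in \eqref{psi1psi2=psi2psi1} and in the third relation \eqref{psi1psi1=psi2psi2}. Your expanded computations (including the relabeling $\bm{m}\mapsto\bm{m}-\bm{e}$ and the regrouping into $\psi_2/a_0^{(2)}$) fill in precisely the details the paper leaves implicit, and your remark that the $\lambda_2$-slot must carry $\psi_2$-type factors is the right consistency check; note that the right-hand side of the third relation as printed in the paper should read $\psi_2$ rather than $\psi_1$, which is the version you are (correctly) using.
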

\begin{lemma}
$\psi_a$ also satisfies
\begin{align} 
  &\psi_1(\bm{m},t,\la_1)\psi_2(\bm{m}-\bm{e},t-[\la_1^{-1}]_1,\la_2)a_0^{(2)}(\bm{m}-\bm{e},t)\nonumber\\
  =&\psi_2(\bm{m}-\bm{e},t,\la_2)\psi_1(\bm{m},t-[\la_2]_2,\la_1)a_0^{(2)}(\bm{m}-\bm{e},t-[\la_1^{-1}]_1),\label{psi1psi2a0=psi2psi1}\\
  &\psi_2(\bm{m},t,\la_1)\psi_2(\bm{m},t-[\la_1]_2,\la_2)a_0^{(2)}(\bm{m},t-[\la_2]_2)\nonumber\\
  =&\psi_2(\bm{m},t,\la_2)\psi_2(\bm{m},t-[\la_1]_2,\la_1)a_0^{(2)}(\bm{m},t-[\la_1]_2).\label{psi2psi2a0=psi2psi2}
\end{align}
\end{lemma}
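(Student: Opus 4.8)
The plan is to derive both identities purely algebraically from the $\psi$--only relations already in hand --- namely \eqref{psi1psi2=psi2psi1} and relation \eqref{psi2,psi2} of Lemma~\ref{le psi1psi2wide} --- combined with the logarithmic shift identities of Lemma~\ref{exiloga=de12log}, rather than returning to the contour form \eqref{CRCrpsi1psi2}. The decisive observation is that Lemma~\ref{exiloga=de12log} converts a single lattice step of a wave function into a ratio of values of $a_0^{(2)}$: exponentiating \eqref{exiloga=Delog} and then sending $\bm m\to\bm m-\bm e$ gives
\[
\frac{\psi_1(\bm m,t,\lambda)}{\psi_1(\bm m-\bm e,t,\lambda)}=\frac{a_0^{(2)}(\bm m-\bm e,t-[\lambda^{-1}]_1)}{a_0^{(2)}(\bm m-\bm e,t)},
\]
and likewise \eqref{exiloga=Deloga} yields $\psi_2(\bm m+\bm e,t,\lambda)/\psi_2(\bm m,t,\lambda)=a_0^{(2)}(\bm m+\bm e,t-[\lambda]_2)/a_0^{(2)}(\bm m,t)$. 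These are exactly the devices needed to trade a shifted wave function for the $a_0^{(2)}$ factors appearing in \eqref{psi1psi2a0=psi2psi1} and \eqref{psi2psi2a0=psi2psi2}.

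To prove \eqref{psi1psi2a0=psi2psi1} I would first replace $\bm m$ by $\bm m-\bm e$ in \eqref{psi1psi2=psi2psi1}, producing a relation whose left factor is $\psi_1(\bm m-\bm e,t,\lambda_1)$ and whose right factor is $\psi_1(\bm m,t-[\lambda_2]_2,\lambda_1)$. Multiplying both sides by $a_0^{(2)}(\bm m-\bm e,t)\cdot\psi_1(\bm m,t,\lambda_1)/\psi_1(\bm m-\bm e,t,\lambda_1)$ promotes the left $\psi_1$ to $\psi_1(\bm m,t,\lambda_1)$, and on the right the displayed identity above rewrites the inserted combination $a_0^{(2)}(\bm m-\bm e,t)\cdot\psi_1(\bm m,t,\lambda_1)/\psi_1(\bm m-\bm e,t,\lambda_1)$ as $a_0^{(2)}(\bm m-\bm e,t-[\lambda_1^{-1}]_1)$. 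This is precisely \eqref{psi1psi2a0=psi2psi1}.

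To prove \eqref{psi2psi2a0=psi2psi2} I would start from \eqref{psi2,psi2} and eliminate the adjoint functions using \eqref{psi2widepsi2}, which expresses $\widetilde{\psi}_2$ evaluated at $\bm m+2\bm e$ as the reciprocal of $\psi_2$ evaluated at $\bm m+\bm e$; this turns \eqref{psi2,psi2} into the purely $\psi_2$ relation $\psi_2(\bm m,t,\lambda_1)\psi_2(\bm m+\bm e,t-[\lambda_1]_2,\lambda_2)=\psi_2(\bm m,t,\lambda_2)\psi_2(\bm m+\bm e,t-[\lambda_2]_2,\lambda_1)$. Applying the second shift identity above to each $\psi_2(\bm m+\bm e,\cdot)$ factor lowers it to $\psi_2(\bm m,\cdot)$ at the cost of a numerator $a_0^{(2)}(\bm m+\bm e,t-[\lambda_1]_2-[\lambda_2]_2)$ and a denominator $a_0^{(2)}(\bm m,\cdot)$. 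Because the two time shifts $[\lambda_1]_2$ and $[\lambda_2]_2$ commute, the $a_0^{(2)}(\bm m+\bm e,\cdot)$ factors produced on the two sides are literally equal and cancel; cross--multiplying the surviving $a_0^{(2)}(\bm m,\cdot)$ denominators then gives \eqref{psi2psi2a0=psi2psi2}.

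The two algebraic manipulations themselves are one-liners; the part demanding care is the simultaneous bookkeeping of the two distinct shift types --- the $t^{(1)}$--shifts $[\lambda^{-1}]_1$ and the $t^{(2)}$--shifts $[\lambda]_2$ --- together with the lattice shifts $\bm m\mapsto\bm m\pm\bm e$. The crux is to verify that the spurious $a_0^{(2)}(\bm m+\bm e,\cdot)$ factors generated in the second case carry \emph{identical} time arguments on both sides, so that they cancel cleanly; keeping straight which of $\lambda_1,\lambda_2$ sits inside each time argument is where an error would most easily arise, and it is worth cross--checking the final relations for consistency under $\lambda_1\leftrightarrow\lambda_2$.
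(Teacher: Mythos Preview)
Your proposal is correct and follows essentially the same route as the paper's own proof. For \eqref{psi1psi2a0=psi2psi1} the paper also reduces the identity to \eqref{psi1psi2=psi2psi1} via the exponentiated form of \eqref{exiloga=Delog}, namely $a_0^{(2)}(\bm m-\bm e,t-[\la_1^{-1}]_1)=\dfrac{\psi_1(\bm m,t,\la_1)}{\psi_1(\bm m-\bm e,t,\la_1)}\,a_0^{(2)}(\bm m-\bm e,t)$; for \eqref{psi2psi2a0=psi2psi2} the paper likewise invokes the exponentiated form \eqref{a02=psi2a02} of \eqref{exiloga=Deloga} to show the target equation is equivalent to the $\psi_2$--only relation $\psi_2(\bm m,t,\la_1)\psi_2(\bm m+\bm e,t-[\la_1]_2,\la_2)=\psi_2(\bm m,t,\la_2)\psi_2(\bm m+\bm e,t-[\la_2]_2,\la_1)$, which you rederive from \eqref{psi2,psi2} and \eqref{psi2widepsi2} while the paper cites it directly from the preceding lemma. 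The only cosmetic difference is direction: the paper argues by substituting the $a_0^{(2)}$ identities into the target and recognizing the known relation, whereas you start from the known relation and build up to the target; the underlying algebra is identical.
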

\begin{proof}
\eqref{psi1psi2a0=psi2psi1} is equivalent to \eqref{psi1psi2=psi2psi1} after inserting $$a_0^{(2)}(\bm{m}-\bm{e},t-[\la_1^{-1}]_1)=\frac{\psi_1(\bm{m},t,\la_1)}{\psi_1(\bm{m}-\bm{e},t,\la_1)}a_0^{(2)}(\bm{m}-\bm{e},t)$$ from \eqref{exiloga=Delog}.
As for \eqref{psi2psi2a0=psi2psi2}, it can be derived from \eqref{psi1psi1=psi2psi2} and \eqref{exiloga=Deloga}. In fact by \eqref{exiloga=Deloga}, we know
\begin{align}\label{a02=psi2a02}
a_0^{(2)}(\bm{m},t-[\la]_2)=\frac{\psi_2(\bm{m},t,\la)}{\psi_2(\bm{m}-\bm{e},t,\la)}a_0^{(2)}(\bm{m}-\bm{e},t).
\end{align}
If substitute \eqref{a02=psi2a02} into \eqref{psi2psi2a0=psi2psi2}, we can find \eqref{psi2psi2a0=psi2psi2} is just \eqref{psi1psi1=psi2psi2}.
\end{proof}
After the preparations above, now let us see \textbf{the proof of Proposition \ref{the existence of tau function}}.

Firstly by Lemma \ref{psiandwidepsi=1}, we can find that the results of $\psi_a$ imply the cases of $\widetilde{\psi}_a$. Thus here we can only discuss the case of $\psi_a$. To do this, let us introduce $b_j^{(1)}$ and $b_l^{(2)} \ (j\geq1, l\geq0)$ in the way below
\begin{align*}
  &\log\psi_1(\bm{m},t,\la)=\sum_{j=1}^{+\infty}b_j^{(1)}(\bm{m},t)\la^{-j},\\
  &\log\psi_2(\bm{m},t,\la)=\sum_{l=0}^{+\infty}b_l^{(2)}(\bm{m},t)\la^{l},
\end{align*}
where we notice that $b_0^{(2)}(\bm{m})=\log a_0^{(2)}$. Then the relations \eqref{eq the existence of tau function} between $\psi_a$ and $\tau_m$ can be written in the forms below
\begin{equation}\label{pjbjde12}
\begin{cases}
 p_j(-\widetilde{\partial}_{t^{(1)}})\log\tau_m=b_j^{(1)}(\bm{m}),\\
 p_j(-\widetilde{\partial}_{t^{(2)}})\log\tau_m=b_j^{(2)}(\bm{m}-\bm{e}),\\
 \Delta_{12}\log\tau_m=b_0^{(2)}(\bm{m}), \quad j\geq1, 
\end{cases}
\end{equation}
where $p_j(x)$ is determined by $\exp(\xi(x,\la))=\sum_{j\geq0}^{+\infty}p_j(x)\la^j$, with $x=(x_1,x_2,\cdots)$. 

Notice that if the following relations hold
\begin{align}
  &p_j(-\widetilde{\partial}_{t^{(1)}})b_l^{(1)}(\bm{m})=p_l(-\widetilde{\partial}_{t^{(1)}})b_j^{(1)}(\bm{m}),\label{pjbj=plbj}\\
  &p_j(-\widetilde{\partial}_{t^{(1)}})b_l^{(2)}(\bm{m}-\bm{e})=p_l(-\widetilde{\partial}_{t^{(2)}})b_j^{(1)}(\bm{m}),\label{pjbj=plbj2}\\
  &p_j(-\widetilde{\partial}_{t^{(2)}})b_l^{(2)}(\bm{m}-\bm{e})=p_l(-\widetilde{\partial}_{t^{(2)}})b_j^{(2)}(\bm{m}-\bm{e}),\label{pjbj=plbj3}\\
  &p_j(-\widetilde{\partial}_{t^{(1)}})b_0^{(2)}(\bm{m})=\Delta_{12}b_j^{(1)}(\bm{m}),\label{pjb0=debj}\\
  &p_j(-\widetilde{\partial}_{t^{(2)}})b_0^{(2)}(\bm{m})=\Delta_{12}b_j^{(2)}(\bm{m}-\bm{e}), \quad j,l\geq1, \label{pjb0=debj2}
\end{align}
then \eqref{pjbjde12} has the solution $\log\tau_m$. In fact \eqref{pjbj=plbj} can be derived by comparing the coefficient of $\la_1^{-j}\la_2^{-l}$ in \eqref{psi1psi1=psi2psi2}. \eqref{pjbj=plbj2} and \eqref{pjbj=plbj3} can be obtained by \eqref{psi1psi2a0=psi2psi1} and \eqref{psi2psi2a0=psi2psi2}. Finally \eqref{pjb0=debj} and \eqref{pjb0=debj2} are just equivalent to the relations in Lemma \ref{exiloga=de12log}.

\section{reduction of the 2dKP hierarchy}
In this section, we will construct the reduction of the 2dKP hierarchy corresponding to the loop algebra $\widehat{sl}_{M+N}=sl_{M+N}[\la,\la^{-1}]\oplus\mathbb{C}c \ (M,N\geq1)$.

Firstly recall $sl_{M+N}[\la,\la^{-1}]$ is the set of the traceless $(M+N)\times(M+N)$ matrices with the matrix entries being the Laurent polynomials of $\la_i$ and the corresponding Lie bracket \cite{Kac2023,Kac1990} is given by $$[A\la^m,B\la^n]=(AB-BA)\la^{m+n}+m\delta_{m+n,0}c,\quad A,B\in sl_{M+N}.$$ If denote $e_{ij}=(\delta_{ia}\delta_{jb})_{1\leq a,b\leq M+N}$, then we can define the imbedding \cite{Kac2023,Kac1990} 
\begin{align*}
 i:\widehat{sl}_{M+N}&\rightarrow a_\infty=\overline{gl}_\infty\oplus\mathbb{C}c,\\
e_{ij}\la_n&\mapsto i(e_{ij}\la_n)=\sum_{l\in \mathbb{Z}}E_{i+(M+N)(l-n),j+(M+N)l},
\end{align*}
where $E_{ij}=(\delta_{ip}\delta_{jq}), \ p,q\in\mathbb{Z}$ and
$\overline{gl}_\infty=\left\{\sum_{k,l\in\mathbb{Z}}a_{ij}E_{kl}\ \biggr|\ a_{ij}=0,\ |k-l|\gg0\right\}.$
The image of $i$ is given by
\begin{align*}
a_{\infty}^{(M+N)}=\left\{A\in \overline{gl}_\infty\ \biggr|\ A_{p+(M+N)r,q+(M+N)r}=A_{pq},\ \sum_{i=1}^{M+N}A_{i-(M+N)k,i}=0,\ \forall k\in \mathbb{Z}\right\}.
\end{align*}
 Recall that the fermionic representation of $\mathcal{F}$ is given by \cite{Jimbo1983,Kac2023,Kac1990}
\begin{align*}
  \pi: &a_\infty\rightarrow End\mathcal{F}\\
       &E_{ij}\mapsto:\psi_i\psi_j:, \quad c\mapsto1.
\end{align*}
Thus $\rho=\pi\circ i$ can give the fermionic representation of $sl_{M+N}[\la,\la^{-1}]$. 

In terms of charged free fermions, the elements of $sl_{M+N}[\la,\la^{-1}]$ can be realized in the way below.
For any $a\in sl_n(\mathbb{C}[\la,\la^{-1}])$, there exists a unique $\rho(a)\in End\mathcal{F}$ having the following form,
$$\rho(a)=\sum_{i,j\in\mathbb{Z}} A_{ij}:\psi^+_{-i+1/2}\psi^-_{j-1/2}:, \quad A_{ij}=A_{i+(M+N)l,j+(M+N)l}.$$ 
So if introduce $$S_{M+N}=\sum\psi_j^+\otimes\psi_{j+M+N}^-,$$
then for $a\in sl_{M+N}[\la,\la^{-1}],$ $$[S_{M+N},1\otimes \rho(a)+\rho(a)\otimes1]=0.$$ Therefore if denote $\tau_l=\exp(a)|l\rangle$, where \begin{align*}
|l\rangle=\begin{cases}
\psi^{-}_{\frac{1}{2}+l}\psi^{-}_{\frac{3}{2}+l}\cdots\psi^{-}_{-\frac{1}{2}}|0\rangle,&\quad l<0;\\
|0\rangle,&\quad l=0;\\
\psi^{+}_{\frac{1}{2}-l}\psi^{+}_{\frac{3}{2}-l}\cdots\psi^{+}_{-\frac{1}{2}}|0\rangle,&\quad l>0,
\end{cases}
\end{align*} 
then we have 
\begin{align}\label{SMNtaul}
  S_{M+N}(\tau_l\otimes\tau_{l'})=0, \quad l\geq l',
\end{align}
where we have used $S_{M+N}(|l\rangle\otimes|l'\rangle)=0$. \eqref{SMNtaul} is just the fermionic version of the reduction of the dKP hierarchy corresponding to $\widehat{sl}_{M+N}$. 
The complete description of the $\widehat{sl}_{M+N}$--reduction of the dKP hierarchy contains \eqref{SMNtaul} and $S(\tau_l\otimes\tau_{l'})=0 \ (l\geq l').$ 

If we introduce $\psi_j^{\pm(a)}$ in the way below \cite{Kac2023,Kac2003}
\begin{align*}
&\psi_{Mi+p+1/2}^{+(1)}=\psi^+_{(M+N)i+p+1/2}, \ \quad \psi_{Mi-p-1/2}^{-(1)}=\psi^-_{(M+N)i-p-1/2}, \quad 0\leq p\leq M-1,\\
&\psi_{Ni+q+1/2}^{+(2)}=\psi^+_{(M+N)i+M+q+1/2},\quad \psi_{Ni-q-1/2}^{-(2)}=\psi^-_{(M+N)i-M-q-1/2}, \quad 0\leq q\leq N-1,
\end{align*}
and $S_M^{(a)}=\sum_{j\in\mathbb{Z}+1/2}\psi_j^{+(a)}\otimes\psi_{-j+M}^{-(a)}$, $S_N^{(a)}=\sum_{j\in\mathbb{Z}+1/2}\psi_j^{+(a)}\otimes\psi_{-j+N}^{-(a)}$, we will have $$S_{M+N}=S_M^{(1)}+S_N^{(2)}.$$ Thus \eqref{SMNtaul} will become $(S_M^{(1)}+S_N^{(2)})(\tau_l\otimes\tau_{l'})=0.$ If introduce $\sigma(\tau_l)=\sum_{m\in\mathbb{Z}}(-1)^{\frac{m(m-1)}{2}}
Q_1^{m}Q_2^{l-m}\tau_{m,m-l}(t^{(1)},t^{(2)}),$ then by two--component boson-fermion correspondence, we will have 
\begin{align*}
&\oint_{C_R}\frac{dz}{2\pi i}z^{M+m_1-m_1'}e^{\xi(t^{(1)}-t^{(1)'},z)}\tau_{m_1,m_2}(t-[z^{-1}]_1)\tau_{m_1',m_2'}(t+[z^{-1}]_1)\nonumber\\
=&\oint_{C_r}\frac{dz}{2\pi i}z^{-N+m_2-m_2'}e^{\xi(t^{(2)}-t^{(2)'},z^{-1})}\tau_{m_1+1,m_2+1}(t-[z]_2)\tau_{m_1'-1,m_2'-1}(t+[z]_2),\quad m_1-m_2\geq m_1'-m_2'.
\end{align*}
Now we have obtain the bilinear equations of the $\widehat{sl}_{M+N}$--reduction for the 2dKP hierarchy. In terms of wave function \eqref{psi_1}, we can write \eqref{SMNtaul} into 
\begin{align}\label{ZMZNoint_{C_R}}
\oint_{C_R}\frac{dz}{2\pi iz}z^M\Psi_{1}(\bm{m},t,z)\widetilde{\Psi}_{1}(\bm{m}',t',z)=\oint_{C_r}\frac{dz}{2\pi iz}z^{-N}\Psi_{2}(\bm{m},t,z)\widetilde{\Psi}_{2}(\bm{m}',t',z), \quad m_1-m_2\geq m_1'-m_2'.
\end{align}
\begin{proposition} 
For the 2dKP wave functions $\Psi_a$ satisfying \eqref{ZMZNoint_{C_R}}, if denote $\mathcal{L}=B_M^{(1)}+B_N^{(2)}$, then  
\begin{align*}
  \mathcal{L}(\Psi_1)=z^M\Psi_1, \quad \mathcal{L}(\Psi_2)=z^{-N}\Psi_2,
\end{align*}
where $B_M^{(1)}=(L_1^M)_{1,\geq0},$ $B_N^{(2)}=(L_2^N)_{\Delta_2^*,\geq1}$.
\end{proposition}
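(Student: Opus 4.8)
The plan is to recognize the operator $\mathcal{L}=B_M^{(1)}+B_N^{(2)}$ as the generator of a distinguished combination of flows, and then to play the reduced bilinear equation \eqref{ZMZNoint_{C_R}} against the unreduced one \eqref{dressing to bilinear}. The starting observation is that $B_M^{(1)}=(L_1^M)_{1,\geq0}$ and $B_N^{(2)}=(L_2^N)_{\Delta_2^*,\geq1}$ are exactly the flow operators $B_k^{(a)}$ at $(a,k)=(1,M)$ and $(2,N)$. Hence, by Corollary \ref{coro exist oftaufunction}, $B_M^{(1)}(\Psi_a)=\partial_{t_M^{(1)}}\Psi_a$ and $B_N^{(2)}(\Psi_a)=\partial_{t_N^{(2)}}\Psi_a$ for $a=1,2$, so that $\mathcal{L}(\Psi_a)=(\partial_{t_M^{(1)}}+\partial_{t_N^{(2)}})\Psi_a$. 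This is the key identity that turns the algebraic claim into a statement about the combined flow $\partial_{t_M^{(1)}}+\partial_{t_N^{(2)}}$.

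First I would set $\Phi_a:=\mathcal{L}(\Psi_a)-\lambda_a\Psi_a$ with $\lambda_1=z^M$, $\lambda_2=z^{-N}$, and record their analytic form. Writing $\Psi_1=z^{m_1}e^{\xi(t^{(1)},z)}\psi_1$ and $\Psi_2=z^{m_2}e^{\xi(t^{(2)},z^{-1})}\psi_2$ with $\psi_1=1+\sum_{k\geq1}a_k^{(1)}z^{-k}$ and $\psi_2=a_0^{(2)}+\sum_{k\geq1}a_k^{(2)}z^{k}$, the factors $z^M$ and $z^{-N}$ are precisely what $\partial_{t_M^{(1)}}$ and $\partial_{t_N^{(2)}}$ pull down from the exponentials $e^{\xi(t^{(1)},z)}$ and $e^{\xi(t^{(2)},z^{-1})}$; the remaining terms act only on $\psi_a$. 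Thus $\Phi_1=z^{m_1}e^{\xi(t^{(1)},z)}(\partial_{t_M^{(1)}}+\partial_{t_N^{(2)}})\psi_1$ carries only strictly negative powers of $z$, while $\Phi_2=z^{m_2}e^{\xi(t^{(2)},z^{-1})}(\partial_{t_M^{(1)}}+\partial_{t_N^{(2)}})\psi_2$ is $z^{m_2}e^{\xi(t^{(2)},z^{-1})}$ times a non-negative power series in $z$. These opposite analytic regimes are what will eventually separate the two corrections.

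Next I would produce a homogeneous bilinear identity for the pair $(\Phi_1,\Phi_2)$. Applying $\partial_{t_M^{(1)}}+\partial_{t_N^{(2)}}$ in the unprimed variables to \eqref{dressing to bilinear} (the adjoint factors $\widetilde{\Psi}_a(\bm{m}',t')$ do not depend on the unprimed $t$, so the derivative passes under the integral) gives $\oint_{C_R}\frac{dz}{2\pi iz}\mathcal{L}(\Psi_1)\widetilde{\Psi}_1(\bm{m}')=\oint_{C_r}\frac{dz}{2\pi iz}\mathcal{L}(\Psi_2)\widetilde{\Psi}_2(\bm{m}')$. Subtracting the reduced bilinear equation \eqref{ZMZNoint_{C_R}}, namely $\oint_{C_R}\frac{dz}{2\pi iz}z^M\Psi_1\widetilde{\Psi}_1=\oint_{C_r}\frac{dz}{2\pi iz}z^{-N}\Psi_2\widetilde{\Psi}_2$, then yields
\begin{align*}
\oint_{C_R}\frac{dz}{2\pi iz}\Phi_1(\bm{m},t,z)\widetilde{\Psi}_1(\bm{m}',t',z)=\oint_{C_r}\frac{dz}{2\pi iz}\Phi_2(\bm{m},t,z)\widetilde{\Psi}_2(\bm{m}',t',z),\quad m_1-m_2\geq m_1'-m_2'.
\end{align*}

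Finally, the heart of the argument is to deduce $\Phi_1=\Phi_2=0$ from this homogeneous identity by extracting enough independent residue equations. Put $t'=t$ and test first against $\bm{m}'=\bm{m}+l\bm{e}_2$ ($l\geq0$, admissible): since $\Phi_1$ contains only negative powers of $z$ and this choice leaves $m_1$ unchanged, the left residue vanishes identically, while the right residue is $\sum_{i+p=l}g_i h_p$, where the $g_i$ are the Taylor coefficients of $\Phi_2$ and $h_p$ those of $\widetilde{\Psi}_2(\bm{m}+l\bm{e}_2)$ with invertible leading term $h_0=\widetilde{a}_0^{(2)}(\bm{m}+l\bm{e}_2)=\tau_{\bm{m}+l\bm{e}_2-\bm{e}}/\tau_{\bm{m}+l\bm{e}_2}\neq0$; the resulting triangular system forces $g_i=0$ for all $i$, so $\Phi_2=0$. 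Testing then against $\bm{m}'=\bm{m}-k\bm{e}_1$ ($k\geq0$) makes the right side vanish and leaves on the left $\sum_{i+p=k}f_i\widetilde{a}_p$ with $\widetilde{a}_0=1$, whose triangular structure forces the coefficients $f_i$ of $\Phi_1$ to vanish, so $\Phi_1=0$. Hence $\mathcal{L}(\Psi_1)=z^M\Psi_1$ and $\mathcal{L}(\Psi_2)=z^{-N}\Psi_2$. I expect this last vanishing step to be the main obstacle: the reduction to the combined flow and the differentiation of the bilinear identity are essentially formal, whereas here one must convert the single scalar identity into two decoupled triangular systems and exploit that $\Phi_1$ (negative powers) and $\Phi_2$ (non-negative powers) live in complementary expansion regimes.
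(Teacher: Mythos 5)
Your proposal is correct in substance but follows a genuinely different route from the paper. The paper stays entirely on the operator level: it feeds the reduced bilinear equation \eqref{ZMZNoint_{C_R}} into Lemma \ref{ABlemma} to get the operator identity \eqref{S1LaMS2La2N}, reads off the coefficients of $\La_2^0$ and of $\La_1$ to obtain $(L_1^M)_{1,<0}=\pi_1(B_N^{(2)})$ and $(L_2^N)_{\Delta_2^*,\leq 0}=\pi_2(B_M^{(1)})$ (via Corollary \ref{corollary pi1pi2}), and then uses $H(\Psi_a)=0$ to trade $B_N^{(2)}$ for $\pi_1(B_N^{(2)})$, so that $\mathcal{L}(\Psi_1)=L_1^M(\Psi_1)=z^M\Psi_1$, and similarly for $\Psi_2$. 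You instead identify $\mathcal{L}(\Psi_a)$ with $(\partial_{t_M^{(1)}}+\partial_{t_N^{(2)}})\Psi_a$, differentiate the unreduced bilinear equation, subtract the reduced one, and kill the difference $\Phi_a$ by residue extraction; in effect you prove $D_{M,N}(S_a)=0$ first and deduce the eigenvalue equations from it, inverting the logical order of the paper, where $D_{M,N}(S_a)=0$ is the subsequent corollary. Your route needs both the reduced and the unreduced bilinear identities, while the paper's produces the explicit relations $(L_1^M)_{1,<0}=\pi_1(B_N^{(2)})$, $(L_2^N)_{\Delta_2^*,\leq0}=\pi_2(B_M^{(1)})$ that are reused later (e.g.\ for $\pi_1(\mathcal{L})=L_1^M$, $\pi_2(\mathcal{L})=L_2^N$).

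One step of yours is misstated, though it is repairable with material already on your page. For $\bm{m}'=\bm{m}+l\bm{e}_2$ the left residue does \emph{not} vanish identically: since $\widetilde{\Psi}_1(\bm{m}+l\bm{e}_2)$ still carries $z^{-m_1+1}$, the left side equals the coefficient of $z^{-1}$ in $\bigl(\sum_{k\ge1}f_kz^{-k}\bigr)\bigl(1+O(z^{-1})\bigr)$, i.e.\ $f_1(\bm{m},t)$, independently of $l$. So your first triangular system reads $f_1=\sum g_ih_p$ and does not by itself force $g_i=0$. The fix is to run your two families in the opposite order: test first against $\bm{m}'=\bm{m}-k\bm{e}_1$ ($k\ge0$), where the right residue genuinely vanishes (it is the residue at $z=0$ of a Taylor series in $z$) and the left side gives the triangular system $\sum_{i+p=k+1}f_i\widetilde{a}_p^{(1)}=0$ with $\widetilde{a}_0^{(1)}=1$, forcing $\Phi_1=0$ and in particular $f_1=0$; then your first family becomes homogeneous and yields $\Phi_2=0$ exactly as you describe. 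With this reordering the argument is complete.
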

\begin{proof}
Firstly by Lemma \ref{ABlemma} and Corollary \ref{coro exist oftaufunction}, we have from \eqref{SMNtaul} that
\begin{align}\label{S1LaMS2La2N}
&\sum_{j_2\in\mathbb{Z}}\left(S_1(\bm{m},\Lambda_1)\Lambda_1^{M}{S}_1^{-1}(\bm{m}+j_2\bm{e}_2,\Lambda_1)\Lambda_1\right)_{1,\leq j_2}\Lambda_2^{j_2}\nonumber\\
=&\sum_{j_1\in\mathbb{Z}}\left(S_2(\bm{m},\Lambda_2)\La_2^{-N}{S}_2^{-1}(\bm{m}+(j_1-1)\bm{e}_1,\Lambda_2)\Delta_2^{*-1}\right)_{2,\geq j_1}\Lambda_1^{j_1}.
\end{align}

If consider the coefficients of $\La_2^0$ of both sides of \eqref{S1LaMS2La2N}, then 
\begin{align*}
  (L_1^M)_{1,<0}&=\Big(S_2(\bm{m},\La_2)\La_2^{-N}\Delta_1^{-1}S_2^{-1}(\bm{m},\La_2)\Delta_2^{*-1}\Big)_{2,[0]}\\
  &=\Big(B_N^{(2)}(\bm{m},\La_2)S_2(\bm{m},\La_2)\Delta_1^{-1}S_2^{-1}(\bm{m},\La_2)\Delta_2^{*-1}\Big)_{2,[0]}.
\end{align*}
Recall from Corollary \ref{corollary pi1pi2}, we know $\pi_1(\La_2^{-k})=(\La_2^{-k}S_2\Delta_1^{-1}S_2^{-1}\Delta_2^{*-1})_{2,[0]}+1$ and $B_N^{(2)}(\bm{m},\La_2)=(L_2^N)_{2,<0}-(L_2^N)_{2,<0}(1),$ thus $(L_1^M)_{1,<0}=\pi_1(B_N^{(2)})$. So we can know
\begin{align*}
\mathcal{L}(\Psi_1)&=(B_M^{(1)}+B_N^{(2)})(\Psi_1)\nonumber\\
  &=(B_M^{(1)}+\pi_1(B_N^{(2)}))(\Psi_1)=L_1^M(\Psi_1)=z^M\Psi_1,
\end{align*}
where we have used $H(\Psi_1)=0$.

Similarly, if we consider the coefficient of $\La_1$ of \eqref{S1LaMS2La2N}, we can know
\begin{align*}
\Big(S_2(\bm{m},\La_2)\La_2^{-N}S_2^{-1}(\bm{m},\La_2)\Delta_2^{*-1}\Big)_{2,\geq1}\La_1&=\sum_{j_2\geq1}\Big(S_1(\bm{m},\La_1)\La_1^MS_1^{-1}(\bm{m}+j_2e_2,\La_1)\La_1\Big)_{1,[1]}\La_2^{j_2}\\
  &=\Big(S_1(\bm{m},\La_1)\La_1^M\Delta_2^{*-1}S_1^{-1}(\bm{m},\La_1)\Big)_{1,[0]}\La_1.
\end{align*}
Notice that $(B_N^{(2)}\Delta_2^*)_{2,\geq1}=0$ and $(L_2^N)_{\Delta_2^*,\leq0}=(L_2^N)_{2,\geq0}+(L_2^N)_{2,<0}(1),$
thus $(L_2^N)_{\Delta_2^*,\leq0}=\pi_2(B_M^{(1)})$ and
\begin{align*}
\mathcal{L}(\Psi_2)&=(B_M^{(1)}+B_N^{(2)})(\Psi_2)\\
&=(\pi_2(B_M^{(1)})+B_N^{(2)})(\Psi_2)=L_2^N(\Psi_2)=z^{-N}\Psi_2.
\end{align*}
\end{proof}
Recall $\Psi_a(\bm{m},t,z)=e^{\xi(t^{(a)},z^{3-2a})}S_a(\bm{m},t,\La_a)(z^{ma}),$ so if denote $$D_{M,N}=\partial_{t_M^{(1)}}+\partial_{t_N^{(2)}},$$ then we have the corollary below.
\begin{corollary}
For the $\widehat{sl}_{M+N}$--reduction of the 2dKP hierarchy,
$$D_{M,N}(S_a)=D_{M,N}(L_a)=D_{M,N}(H)=0.$$
\end{corollary}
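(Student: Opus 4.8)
The plan is to reduce everything to the two eigenvalue relations $\mathcal{L}(\Psi_1)=z^M\Psi_1$ and $\mathcal{L}(\Psi_2)=z^{-N}\Psi_2$ from the preceding proposition, where $\mathcal{L}=B_M^{(1)}+B_N^{(2)}$. First I would combine the two wave-function flows from Corollary \ref{coro exist oftaufunction}, namely $\partial_{t_M^{(1)}}\Psi_a=B_M^{(1)}(\Psi_a)$ and $\partial_{t_N^{(2)}}\Psi_a=B_N^{(2)}(\Psi_a)$, to obtain $D_{M,N}\Psi_a=(B_M^{(1)}+B_N^{(2)})(\Psi_a)=\mathcal{L}(\Psi_a)$, which by the preceding proposition equals $z^M\Psi_1$ when $a=1$ and $z^{-N}\Psi_2$ when $a=2$.

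Next I would compute $D_{M,N}\Psi_a$ a second way, directly from $\Psi_a=e^{\xi(t^{(a)},z^{3-2a})}S_a(\bm{m},t,\La_a)(z^{m_a})$, separating the explicit derivative of the exponential from the implicit derivative acting on $S_a$. The crucial point is that $\partial_{t_M^{(1)}}$ differentiates the exponential only in $\Psi_1$, producing the factor $z^M$, and leaves the exponential in $\Psi_2$ untouched (it depends on $t^{(2)}$ alone), while $\partial_{t_N^{(2)}}$ differentiates the exponential only in $\Psi_2$, producing $z^{-N}$, and leaves the exponential in $\Psi_1$ untouched. This yields
\[
D_{M,N}\Psi_1=z^M\Psi_1+e^{\xi(t^{(1)},z)}(D_{M,N}S_1)(z^{m_1}),\qquad D_{M,N}\Psi_2=z^{-N}\Psi_2+e^{\xi(t^{(2)},z^{-1})}(D_{M,N}S_2)(z^{m_2}).
\]
Comparing with the eigenvalue formulas above forces $(D_{M,N}S_a)(z^{m_a})=0$ for all $z$. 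Since $D_{M,N}S_1=\sum_{k\geq1}(D_{M,N}a_k^{(1)})\La_1^{-k}$ applied to $z^{m_1}$ gives $\sum_{k\geq1}(D_{M,N}a_k^{(1)})z^{m_1-k}$, vanishing for all $z$ forces every coefficient to vanish, so $D_{M,N}S_1=0$; the identical argument gives $D_{M,N}S_2=0$.

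Finally, from the dressing relations $L_1=S_1\La_1S_1^{-1}$, $L_2=S_2\La_2^{-1}S_2^{-1}$ and $H=-\La_1\La_2S_1\Delta_2^{*}S_1^{-1}$ in Proposition \ref{proexsit of wave operates}, together with $D_{M,N}(S_a^{-1})=-S_a^{-1}(D_{M,N}S_a)S_a^{-1}=0$ and the fact that $\La_1$, $\La_2$, $\Delta_2^{*}$ carry no $t$-dependence, I would conclude $D_{M,N}L_a=0$ and $D_{M,N}H=0$ by the Leibniz rule.

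The main obstacle is the careful bookkeeping in the second step: one must track precisely which of the two flows differentiates which exponential factor, and verify that the eigenvalue contribution from the explicit exponential derivative matches exactly the value produced by $\mathcal{L}$ through the preceding proposition, so that the residual term carrying $D_{M,N}S_a$ is isolated and forced to vanish. Once $D_{M,N}S_a=0$ is established, the remaining deductions for $L_a$ and $H$ are routine, since all shift operators appearing in the dressing relations are constant in $t$.
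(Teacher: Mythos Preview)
Your proposal is correct and follows essentially the same route as the paper: combine the wave-function flows of Corollary~\ref{coro exist oftaufunction} with the eigenvalue relations $\mathcal{L}(\Psi_a)=z^{(3-2a)M_a}\Psi_a$ to obtain $D_{M,N}\Psi_a=z^M\Psi_1$ (resp.\ $z^{-N}\Psi_2$), then differentiate $\Psi_a=e^{\xi(t^{(a)},z^{3-2a})}S_a(z^{m_a})$ directly to isolate $(D_{M,N}S_a)(z^{m_a})=0$, and finally transport the vanishing to $L_a$ and $H$ via the dressing relations. The paper records the same computation more tersely, writing $(D_{M,N}(S_1)+z^MS_1)(z^{m_1})=z^MS_1(z^{m_1})$ in one line; your explicit separation of the exponential derivative is merely an unpacking of that identity.
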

\begin{proof}
By $\mathcal{L}(\Psi_1)=z^M\Psi_1$ and Corollary \ref{coro exist oftaufunction}, we know $D_{M,N}(\Psi_1)=z^M\Psi_1$, which means $$(D_{M,N}(S_1)+z^MS_1)(z^{m_1})=z^MS_1(z^{m_1}),$$ that is to say $D_{M,N}(S_1)=0$. Similarly, by $\mathcal{L}(\Psi_2)=z^{N}\Psi_a$, we can know $D_{M,N}(S_2)=0$. So further by $L_1=S_1\La_1S_1^{-1}$, $L_2=S_2\La_2^{-1}S_2^{-1}$ and $H=-\Lambda_1\Lambda_2S_1\Delta_2^{*}S_1^{-1}=-\Lambda_1\Delta_2S_2\Delta_1^{*}S_2^{-1},$ we can get 
\begin{align*}
 D_{M,N}(L_a)=D_{M,N}(H)=0.
\end{align*}

\end{proof}
\begin{corollary}
For the $\widehat{sl}_{M+N}$--reduction of the 2dKP hierarchy, $\mathcal{L}=B_M^{(1)}+B_N^{(1)}$ satisfies 
\begin{align}
  &\partial_{t_k^{(a)}}\mathcal{L}-[B_{a,k},\mathcal{L}]\in\E H,\label{particalL=[B,L]}\\
  &H\mathcal{L}=\widehat{\mathcal{L}}H,\label{HL=WIDELH}
\end{align}
where $\widehat{\mathcal{L}}=C_M^{(1)}+C_N^{(2)}=B_M^{(2)}(\bm{m}+\bm{e})+\Delta_2^*B_N^{(2)}(\bm{m}+\bm{e})\Delta_2^{*-1}.$
\end{corollary}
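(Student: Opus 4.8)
The plan is to obtain both identities by assembling the structural results already in hand, together with the reduction facts $D_{M,N}(L_a)=D_{M,N}(H)=0$ from the preceding corollary, where $D_{M,N}=\partial_{t_M^{(1)}}+\partial_{t_N^{(2)}}$. Throughout I write $B_k^{(a)}$ for the operators denoted $B_{a,k}$ in the statement, and I take $\mathcal{L}=B_M^{(1)}+B_N^{(2)}$ (the superscript on the second summand in the statement should read $(2)$, consistent with $\widehat{\mathcal{L}}=C_M^{(1)}+C_N^{(2)}$). The whole argument is thus a matter of routing the earlier propositions through the operator $D_{M,N}$ and tracking the $\E H$ remainders.

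For the zero--curvature relation \eqref{particalL=[B,L]}, I would apply Proposition \ref{bk_1,bl_2} twice against the fixed index $(k,a)$, once with $(l,b)=(M,1)$ and once with $(l,b)=(N,2)$, obtaining
\[
\partial_{t_k^{(a)}}B_M^{(1)}-\partial_{t_M^{(1)}}B_k^{(a)}+[B_M^{(1)},B_k^{(a)}]\in\E H,\quad
\partial_{t_k^{(a)}}B_N^{(2)}-\partial_{t_N^{(2)}}B_k^{(a)}+[B_N^{(2)},B_k^{(a)}]\in\E H.
\]
Adding these and collecting the two time derivatives of $B_k^{(a)}$ into $D_{M,N}B_k^{(a)}$ gives
\[
\partial_{t_k^{(a)}}\mathcal{L}-D_{M,N}B_k^{(a)}-[B_k^{(a)},\mathcal{L}]\in\E H .
\]
The middle term vanishes: since $B_k^{(1)}=(L_1^k)_{1,\geq0}$ and $B_k^{(2)}=(L_2^k)_{\Delta_2^*,\geq1}$ are truncations of $L_a^k$ by projections that only discard $\La_a$--modes and thus commute with the coefficient--acting flow $D_{M,N}$, the identity $D_{M,N}(L_a)=0$ forces $D_{M,N}(L_a^k)=0$ and hence $D_{M,N}B_k^{(a)}=0$. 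This yields \eqref{particalL=[B,L]} directly, with signs matching because $[\mathcal{L},B_k^{(a)}]=-[B_k^{(a)},\mathcal{L}]$.

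For the intertwining relation \eqref{HL=WIDELH}, I would rewrite the $H$--evolution \eqref{particalH} as $HB_k^{(a)}=C_k^{(a)}H-\partial_{t_k^{(a)}}H$, apply it with $(k,a)=(M,1)$ and $(k,a)=(N,2)$, and sum:
\[
H\mathcal{L}=HB_M^{(1)}+HB_N^{(2)}=(C_M^{(1)}+C_N^{(2)})H-D_{M,N}H=\widehat{\mathcal{L}}H ,
\]
where the last step uses $D_{M,N}H=0$ and the definition $\widehat{\mathcal{L}}=C_M^{(1)}+C_N^{(2)}$. Here the reduction is exactly what makes the ``anomalous'' term $D_{M,N}H$ drop, turning the two separate $H$--conjugation laws into a single clean intertwining.

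I do not expect a genuine obstacle: the computation is bookkeeping built on Proposition \ref{bk_1,bl_2}, equation \eqref{particalH}, and the preceding corollary. The only points needing care are the commutation of $D_{M,N}$ with the truncations $(\cdot)_{1,\geq0}$ and $(\cdot)_{\Delta_2^*,\geq1}$ (immediate, since $D_{M,N}$ acts on coefficients only) and the correct accounting of signs and of the $\E H$ remainders when the two instances of Proposition \ref{bk_1,bl_2} are combined. If anything is delicate it is logistical rather than conceptual: one must confirm that the reduction hypotheses $\mathcal{L}(\Psi_1)=z^M\Psi_1$ and $\mathcal{L}(\Psi_2)=z^{-N}\Psi_2$ are precisely what underlie $D_{M,N}(L_a)=D_{M,N}(H)=0$, so that the preceding corollary is legitimately available here.
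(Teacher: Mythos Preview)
Your proposal is correct and follows essentially the same approach as the paper: the paper's proof invokes Proposition \ref{bk_1,bl_2} together with $D_{M,N}(B_k^{(a)})=0$ (deduced from $D_{M,N}(L_a)=0$) for \eqref{particalL=[B,L]}, and rewrites $D_{M,N}(H)=0$ via \eqref{particalH} as $\widehat{\mathcal{L}}H-H\mathcal{L}=0$ for \eqref{HL=WIDELH}, which is precisely what you do (only spelled out in more detail). Your observation that the second summand in $\mathcal{L}$ should carry superscript $(2)$ is also correct.
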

\begin{proof}
\eqref{particalL=[B,L]} can be proved by Proposition \ref{bk_1,bl_2} and $D_{M,N}(B_k^{(a)})=0$ derived by $D_{M,N}(L_a)=0$,
while \eqref{HL=WIDELH} can be obtained by $D_{M,N}(H)=\widehat{\mathcal{L}}H-H\mathcal{L}=0$.
\end{proof}
By the fact that if $A(\Psi_a)=0$ for $A\in\mathcal{B}(\La_a^{2a-3})$ with $a=1 \ \text{or}\ 2$, then $A=0$, we have the following corollary.
\begin{corollary}
For the $\widehat{sl}_{M+N}$--reduction of the 2dKP hierarchy,
$$\pi_1(\mathcal{L})=L_1^{M}, \quad\pi_2(\mathcal{L})=L_2^N.$$
\end{corollary}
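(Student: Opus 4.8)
The plan is to exploit the faithfulness of the action on the wave functions stated just above the corollary: an operator $A\in\E_{(a)}^0$ with $A(\Psi_a)=0$ must vanish. Since $\pi_1(\mathcal{L})$ and $L_1^M$ both lie in $\E_{(1)}^{0}=\mathcal{B}((\La_1^{-1}))$, while $\pi_2(\mathcal{L})$ and $L_2^N$ both lie in $\E_{(2)}^{0}=\mathcal{B}((\La_2))$, it suffices to check that $\pi_1(\mathcal{L})-L_1^M$ annihilates $\Psi_1$ and that $\pi_2(\mathcal{L})-L_2^N$ annihilates $\Psi_2$. (Here $\mathcal{L}=B_M^{(1)}+B_N^{(2)}$ lies in both $\E_{(1)}$ and $\E_{(2)}$, since $B_M^{(1)}$ is a polynomial in $\La_1$ and $B_N^{(2)}$ a polynomial in $\La_2^{-1}$, so both projections are defined.)

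First I would treat $\pi_1(\mathcal{L})=L_1^M$. By the definition of the projection there is some $R\in\E_{(1)}$ with $\mathcal{L}=\pi_1(\mathcal{L})+RH$. Applying both sides to $\Psi_1$ and using $H(\Psi_1)=0$ from Corollary \ref{coro exist oftaufunction}, I obtain $\pi_1(\mathcal{L})(\Psi_1)=\mathcal{L}(\Psi_1)$. The preceding proposition gives $\mathcal{L}(\Psi_1)=z^M\Psi_1$, while the dressing relation $L_1=S_1\La_1S_1^{-1}$ together with $\Psi_1=e^{\xi(t^{(1)},z)}S_1(z^{m_1})$ and $\La_1(z^{m_1})=z\,z^{m_1}$ yields $L_1^M(\Psi_1)=z^M\Psi_1$. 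Hence $(\pi_1(\mathcal{L})-L_1^M)(\Psi_1)=0$, and since the difference lies in $\E_{(1)}^{0}$, the faithfulness fact forces $\pi_1(\mathcal{L})=L_1^M$.

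The identity $\pi_2(\mathcal{L})=L_2^N$ follows in exactly the same way with the roles of the two difference operators exchanged: writing $\mathcal{L}=\pi_2(\mathcal{L})+R'H$ and using $H(\Psi_2)=0$ gives $\pi_2(\mathcal{L})(\Psi_2)=\mathcal{L}(\Psi_2)=z^{-N}\Psi_2$, while $L_2=S_2\La_2^{-1}S_2^{-1}$ gives $L_2^N(\Psi_2)=z^{-N}\Psi_2$; thus $\pi_2(\mathcal{L})-L_2^N\in\E_{(2)}^{0}$ annihilates $\Psi_2$ and is therefore zero.

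The only point requiring a little care — and where I expect the main friction — is the verification that the two differences genuinely sit inside the spaces $\E_{(a)}^{0}$ in which faithfulness is available: $\pi_a(\mathcal{L})\in\E_{(a)}^{0}$ holds by definition of $\pi_a$, but one must also note $L_1^M\in\mathcal{B}((\La_1^{-1}))$ (its highest order in $\La_1$ is $M$) and $L_2^N\in\mathcal{B}((\La_2))$ (its lowest order in $\La_2$ is $-N$), so that the subtraction stays within the relevant space. With this observation the argument is immediate. I note in passing that the same conclusion can be reached purely algebraically: the computation in the preceding proposition already shows $\pi_1(B_N^{(2)})=(L_1^M)_{1,<0}$ and $\pi_2(B_M^{(1)})=(L_2^N)_{\Delta_2^*,\leq0}$, whence $\pi_1(\mathcal{L})=B_M^{(1)}+\pi_1(B_N^{(2)})=L_1^M$ and, symmetrically, $\pi_2(\mathcal{L})=\pi_2(B_M^{(1)})+B_N^{(2)}=L_2^N$.
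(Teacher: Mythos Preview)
Your proposal is correct and matches the paper's approach exactly: the paper states the corollary immediately after the faithfulness fact ``if $A(\Psi_a)=0$ for $A\in\mathcal{B}((\La_a^{2a-3}))$ then $A=0$'' and offers no further argument, so what you have written is precisely the intended proof spelled out in detail. Your closing remark that the identities also follow directly from $\pi_1(B_N^{(2)})=(L_1^M)_{1,<0}$ and $\pi_2(B_M^{(1)})=(L_2^N)_{\Delta_2^*,\leq 0}$, established inside the proof of the preceding proposition, is a nice observation that gives a second, purely algebraic route to the same conclusion.
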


Therefore if assume that $\mathcal{L}$ has the following form,
$$\mathcal{L}=\La_1^M+\sum_{k=0}^{M-1}u_k\La_1^k+\sum_{l=1}^{N}v_l(\La_2^{-l}-1),$$
then 
\begin{align*}
\pi_1(\mathcal{L})=&\La_1^M+\sum_{k=0}^{M-1}u_k\La_1^k+\sum_{l=1}^{N}v_l\left(\prod_{j=1}^l\Big(1-\iota_{\Lambda_1^{-1}}
\big(\La_1-\rho(\bm{m}-j\bm{e}_2)\big)^{-1}\cdot\rho(\bm{m}-j\bm{e}_2)\Big)-1\right),\\
\pi_2(\mathcal{L})=&(-1)^M\prod_{j=1}^M\Big(\iota_{\Lambda_2}(\La_2-1)^{-1}\cdot\rho(\bm{m}+(j-1)\bm{e}_1)\Big)\\
&+\sum_{k=0}^{M-1}u_k(-1)^k\prod_{j=1}^k\Big(\iota_{\Lambda_2}(\La_2-1)^{-1}\cdot\rho(\bm{m}+(j-1)\bm{e}_1)\Big)+\sum_{l=1}^{N}v_l(\La_2^{-l}-1).
\end{align*}
Notice that there are $M+N+1$ independent unknown functions in the $\widehat{sl}_{M+N}$--reduction of the 2dKP hierarchy, including $u_k(0\leq k\leq M-1)$, $v_l(1\leq l\leq N)$ and $\rho$, where the corresponding evolution equations are given by
\begin{align*}
  &H_{t_k^{(a)}}=C_k^{(a)}H-HB_k^{(a)},\\ &\partial_{t_k^{(a)}}\mathcal{L}=[\pi_1(B_k^{(a)}),\pi_1(\mathcal{L})]_{1,\geq0}+[\pi_2(B_k^{(a)}),\pi_2(\mathcal{L})]_{\Delta_2^*,\geq1},\\ \text{or} \quad&\partial_{t_k^{(a)}}\pi_b(\mathcal{L})=[\pi_b(B_k^{(a)}),\pi_b(\mathcal{L})],\quad a,b=1,2.
\end{align*}
\noindent\textbf{Example. }
One can find that when $M=N=1$,
\begin{align*}
&\pi_1(\mathcal{L}(\bm{m}))=\La_1+u_0(\bm{m})-v_1(\bm{m})\La_1\cdot\big(\La_1-\rho(\bm{m}-\bm{e})\big)^{-1},\\
&\pi_2(\mathcal{L}(\bm{m}))=v_1(\bm{m})\Delta_2^*+u_0(\bm{m})-\rho(\bm{m})-\Delta_2^{*-1}\rho(\bm{m}).
\end{align*}

 Let us give a summary of the $\widehat{sl}_{M+N}$--reduction of the 2dKP hierarchy. Given $H=\La_1\Delta_2+\rho$ and $$\mathcal{L}=\La_1^M+\sum_{k=0}^{M-1}u_k\La_1^k+\sum_{l=1}^{N}v_l(\La_2^{-l}-1),$$ 
 if define 
 \begin{align*}
 &B_k^{(1)}(\bm{m})=\Big(\pi_1(\mathcal{L}(\bm{m}))^{\frac{k}{M}}\Big)_{1,\geq 0}, B_k^{(2)}(\bm{m})=\Big(\pi_2(\mathcal{L}(\bm{m}))^{\frac{k}{N}}\Big)_{\Delta_2^*,\geq1},\\
 &C_k^{(1)}(\bm{m})=B_k^{(1)}(\bm{m}+\bm{e}),\quad C_k^{(2)}(\bm{m})=\Delta_2^*B_k^{(2)}(\bm{m}+\bm{e})\Delta_2^{*-1},
 \end{align*}
then
\begin{align*}
  &H\mathcal{L}=(C_M^{(1)}+C_N^{(2)})H,\quad \partial_{t_k^{(a)}}H=C_k^{(a)}H-HB_k^{(a)},\\
  &\partial_{t_k^{(a)}}\mathcal{L}=[\pi_1(B_k^{(a)}),\pi_1(\mathcal{L})]_{1,\geq0}+[\pi_2(B_k^{(a)}),\pi_2(\mathcal{L})]_{\Delta_2^*,\geq1}.
\end{align*}

\section{Conclusions and Discussions}
In this paper, we firstly investigate several equivalent formulation of the 2dKP hierarchy including
\begin{itemize}
\item The bilinear equation in terms of tau function:
\begin{align*}
&\oint_{C_R}\frac{dz}{2\pi i}z^{m_1-m_1'}e^{\xi(t^{(1)}-t^{(1)'},z)}\tau_{m_1,m_2}(t-[z^{-1}]_1)\tau_{m_1',m_2'}(t+[z^{-1}]_1)\nonumber\\
=&\oint_{C_r}\frac{dz}{2\pi i}z^{m_2-m_2'}e^{\xi(t^{(2)}-t^{(2)'},z^{-1})}\tau_{m_1+1,m_2+1}(t-[z]_2)\tau_{m_1'-1,m_2'-1}(t+[z]_2),\quad m_1-m_2\geq m_1'-m_2'.
\end{align*}
\item The bilinear equation by wave functions and adjoint wave functions:
\begin{align*}
\oint_{C_R}\frac{dz}{2\pi iz}\Psi_{1}(\bm{m},t,z)\widetilde{\Psi}_{1}(\bm{m}',t',z)=\oint_{C_r}\frac{dz}{2\pi iz}\Psi_{2}(\bm{m},t,z)\widetilde{\Psi}_{2}(\bm{m}',t',z), \quad m_1-m_2\geq m_1'-m_2'.
\end{align*}
\item Wave operators $S_a$ and $\widetilde{S}_a$:
\begin{align*}
&S_1(\bm{m},t,\Lambda_1)=1+\sum_{k=1}^{+\infty}a_k^{(1)}\Lambda_{1}^{-k},\quad\quad\quad\ \widetilde{S}_1(\bm{m},t,\Lambda_1)=1+\sum_{k=1}^{+\infty}\widetilde{a}_k^{(1)}\Lambda_{1}^k,\\ &S_2(\bm{m},t,\Lambda_2)=a_0^{(2)}+\sum_{k=1}^{+\infty}a_k^{(2)}\Lambda_{2}^k,\quad\quad \widetilde{S}_2(\bm{m},t,\Lambda_2)=\widetilde{a}_0^{(2)}+\sum_{k=1}^{+\infty}\widetilde{a}_k^{(2)}\Lambda_{2}^{-k}.
\end{align*}
\item Lax equation:
\begin{align*}
  &\partial_{t_k^{(a)}}L_b(\bm{m},\Lambda_b)=[\pi_b(B_k^{(a)}(\bm{m},\Lambda_a)),L_b(\bm{m},\Lambda_b)],\\  &H(\bm{m})L_1(\bm{m},\Lambda_1)=L_1(\bm{m}+\bm{e},\Lambda_1)H(\bm{m}),\quad H(\bm{m})L_2(\bm{m},\Lambda_2)=\Delta_2^{*}L_2(\bm{m}+\bm{e})\Delta_2^{*-1}H(\bm{m}),\\ &\p_{t_k^{(a)}}H(\bm{m})=C_k^{(a)}(\bm{m},\Lambda_a)H(\bm{m})-H(\bm{m})\cdot B_k^{(a)}(\bm{m},\Lambda_a),\ a,b=1,2.
\end{align*} 
\end{itemize}
Then the $\widehat{sl}_{M+N}$--reduction of the 2dKP hierarchy are also discussed, which can be described by the operator $H=\La_1\Delta_2+\rho$ and the Lax operator $\mathcal{L}=\La_1^M+\sum_{k=0}^{M-1}u_k\La_1^k+\sum_{l=1}^{N}v_l(\La_2^{-l}-1)$ satisfying $\mathcal{L}(\Psi_1)=z^{M}\Psi_1$ and $\mathcal{L}(\Psi_2)=z^{-N}\Psi_2$.

For the 2dKP hierarchy, the first Lax operator $L_1=\La_1+\sum_{i=0}^{+\infty}u_i^{(1)}\Lambda_1^{-i}$ satisfying $\partial_{t_k^{(1)}}L_1=[(L_1^k)_{1,\geq0},L_1]$, which is just the one component discrete KP hierarchy \cite{Adler1999-1,Dickey1999LMP}, while the second Lax operator $L_2=u_{-1}^{(2)}\La_2^{-1}+\sum_{i=0}^{+\infty}u_i^{(2)}\La_2^i$ satisfying $\partial_{t_k^{(2)}}L_2=[(L_2^k)_{\Delta_2^*,\geq1},L_2]$, which can be viewed as the modified discrete KP hierarchy \cite{Chengjipeng2019}. Thus the 2dKP hierarchy contains the information of the discrete KP hierarchy and the modified discrete KP hierarchy.
 
The results here provide one typical example of the derivation of the Lax equation from the bilinear equation, and may be helpful in understanding the Shiota construction \cite{Cui2025,Shiota1989} of the Lax formulations of the integrable hierarchies. Also, we believe that there should be some applications for the 2dKP hierarchy in soliton theory and integrable system, especially in the KP reduction method \cite{Ohta2011,Yang2022}.\\

\noindent{\bf Acknowledgements}:

This work is supported by National Natural Science Foundation of China (Grant Nos. 12171472 and 12261072).\\

\noindent{\bf Conflict of Interest}:

 The authors have no conflicts to disclose.\\

\noindent{\bf Data availability}:

Data sharing is not applicable to this article as no new data were created or analyzed in this study.

\end{document}